\newcommand{\T}{\rule{0pt}{3.6ex}}
\newcommand{\B}{\rule[-2.2ex]{0pt}{0pt}}
\newcommand{\M}{\rule{0pt}{3.2ex}}
\theoremstyle{plain}
\newtheorem{theorem}{Theorem}[section]
\newtheorem{proposition}{Proposition}[section]
\theoremstyle{definition}
\newtheorem{definition}[theorem]{Definition}
\newtheorem{condition}[theorem]{Condition}
\theoremstyle{remark}
\newtheorem{example}[theorem]{Example}
\newtheorem{lemma}[theorem]{Lemma}
\newlist{ccindep}{enumerate}{1}
\setlist[ccindep]{label=\textbf{CI\arabic*}:, ref=\textbf{CI\arabic*},
leftmargin=*}
\newlist{qcindep}{enumerate}{1}
\setlist[qcindep]{label={\normalfont\textbf{QCI\arabic*:}},
  ref=\textbf{QCI\arabic*}, leftmargin=*}
\newcommand{\0}{Debbie}
\newcommand{\1}{Wanda}
\newcommand{\2}{Theo}
\newcommand{\Hilb}[1][]{\ensuremath{\mathcal{H}_{#1}}}
\newcommand{\Ket}[1]{\ensuremath{\left \vert #1 \right \rangle}}
\newcommand{\Bra}[1]{\ensuremath{\left \langle #1 \right \vert}}
\newcommand{\Tr}[2][]{\ensuremath{\text{Tr}_{#1} \left ( #2 \right )
  }}
\newcommand{\Lin}[1][]{\ensuremath{\mathfrak{L} \left ( \Hilb[#1] \right )}}
\newcommand{\Sprod}{\ensuremath{\star}}
\begin{document}


\title{A Bayesian approach to compatibility, improvement, and pooling
  of quantum states}
\date{October 3, 2011}
\author{M. S. Leifer}
\affiliation{Department of Physics and Astronomy, University College
  London, Gower Street, London WC1E 6BT, United Kingdom}
\email{matt@mattleifer.info}
\homepage{http://mattleifer.info}
\author{Robert W. Spekkens}
\affiliation{Perimeter Institute for
  Theoretical Physics, 31 Caroline St. N., Waterloo, Ontario, Canada,
  N2L 2Y5}
\email{rspekkens@perimeterinstitute.ca}
\homepage{http://www.rwspekkens.com}

\begin{abstract}
  In approaches to quantum theory in which the quantum state is
  regarded as a representation of knowledge, information, or belief,
  two agents can assign different states to the same quantum system.
  This raises two questions: when are such state assignments
  compatible? and how should the state assignments of different agents
  be reconciled?  In this paper, we address these questions from the
  perspective of the recently developed conditional states formalism
  for quantum theory \cite{Leifer2011a}.  Specifically, we derive a
  compatibility criterion proposed by Brun, Finkelstein and Mermin
  from the requirement that, upon acquiring data, agents should update
  their states using a quantum generalization of Bayesian
  conditioning.  We provide two alternative arguments for this
  criterion, based on the objective and subjective Bayesian
  interpretations of probability theory.  We then apply the same
  methodology to the problem of quantum state improvement, i.e.\ how
  to update your state when you learn someone else's state assignment,
  and to quantum state pooling, i.e.\ how to combine the state
  assignments of several agents into a single assignment that
  accurately represents the views of the group.  In particular, we
  derive a pooling rule previously proposed by Spekkens and Wiseman
  under much weaker assumptions than those made in the original
  derivation.  All of our results apply to a much broader class of
  experimental scenarios than have been considered previously in this
  context.
\end{abstract}

\pacs{03.65.Aa, 03.65.Ca, 03.65.Ta}
\keywords{quantum conditional
  probability, quantum measurement, quantum Bayesianism, quantum state
  compatibility, quantum state improvement, quantum state pooling}

\maketitle
\tableofcontents

\section{Introduction}

\label{Intro}

In Bayesian probability theory, probabilities represent an agent's
information, knowledge or beliefs; and hence it is possible for two
agents to assign different probability distributions to one and the
same quantity.  Recently, due in part to the emergence of quantum
information theory, there has been a resurgence of interest in
approaches to quantum theory that view the quantum state in a similar
way \cite{Caves2002a, Fuchs2002, Pitowsky2003, Fuchs2003, Fuchs2003a,
  Appleby2005, Caves2006, Pitowsky2006, Spekkens2007, Bub2007,
  Fuchs2010, Fuchs2010a}, and in such approaches it is possible for
two agents\footnote{Traditionalist quantum physicists may prefer
  ``observers'' to ``agents''.  In our view, the term ``agent'' is
  preferable because, whatever happens when one measures a quantum
  system, it cannot be regarded as a mere passive ``observation''.
  The term ``agent'' also emphasizes the close connection to the
  decision theoretic roots of classical Bayesian probability theory.}
to assign different quantum states to one and the same quantum system
(henceforth, to avoid repetition, the term ``state'' will be used to
refer to either a classical probability distribution or a quantum
state).  One way this can arise is when the agents have access to
differing data about the system.  For example, in the BB84 quantum key
distribution protocol \cite{Bennett1984}, Alice, having prepared the
system herself, would assign one of four pure states to the system,
whereas the best that Bob can do before making his measurement is to
assign a maximally mixed state to the system.  This naturally leads to
the question of when two state assignments are \emph{compatible} with
one another, i.e.\ when can they represent validly differing views on
one and the same system?

The meaning of ``validly differing view'' depends on the
interpretation of quantum theory and, in particular, on the status of
the quantum state within it.  If the quantum state is thought of as
being analogous to a Bayesian probability distribution, then the
meaning of ``validly differing view'' also depends on precisely which
approach to Bayesian probability one is trying to apply to the quantum
case.  In the Jaynes-Cox approach \cite{Jaynes2003, Cox1961},
sometimes called objective Bayesianism, states are taken to represent
objective information or knowledge and, given a particular collection
of known data, there is assumed to be a unique state that a rational
agent ought to assign, often derived from a rule such as the Jaynes
maximum entropy principle.  In contrast, in the de
Finetti-Ramsey-Savage approach \cite{Ramsey1931, Finetti1937,
  Finetti1974, Finetti1975, Savage1972}, often called subjective
Bayesianism, states are taken to represent an agent's subjective
degrees of belief and agents may validly assign different states to
the same system even if they have access to identical data about the
system.  This is due to differing prior state assignments, the roots
of which are taken to be unanalyzable by the subjective Bayesian.

In its modern form, the problem of quantum state compatibility was
first tackled by Brun, Finkelstein and Mermin (BFM) \cite{Mermin2002,
  Mermin2002a, Brun2002, Brun2002a}, although this work was motivated
by earlier concerns of Peierls \cite{Peierls1991, Peierls1991a}.  BFM
provide a compatibility criterion for quantum states on finite
dimensional Hilbert spaces.  Mathematically, the criterion is that two
density operators are compatible if the intersection of their supports
is nontrivial.  In particular, the BFM criterion implies that two
distinct pure states are never compatible, so that if any agent
assigns a pure state to the system then any other agent who wishes to
assign a compatible pure state must assign the same one.  In the
special case of commuting state assignments, it also implies the
classical criterion for compatibility of probability distributions on
finite sample spaces, which is that there must be at least one element
of the sample space that is in the support of both distributions.

To date there have been two types of argument given for requiring the
BFM compatibility criterion: one due to BFM themselves \cite{Brun2002}
(an argument that takes a similar point of view was later developed by
Jacobs \cite{Jacobs2002}) and one due to Caves, Fuchs and Schack (CFS)
\cite{Caves2002}.  Although not explicitly given in Bayesian terms,
the BFM argument has an objective Bayesian flavor in that it assumes
that there is a unique quantum state that all agents would assign to
the system if they had access to all the available data.  On the other
hand, the CFS argument is an attempt to give an explicitly subjective
Bayesian argument for the BFM compatibility criterion.  Both arguments
start from lists of intuitively plausible criteria that state
assignments should obey, but, in our view, a more rigorous approach is
needed in order to correctly generalize the meaning that compatibility
has in the classical case.

Classically, there are two arguments for compatibility depending on
whether one adopts the objective or the subjective approach.  In both
cases, compatibility is defined in terms of the rules that Bayesian
probability theory lays down for making probabilistic inferences, and,
in particular the requirement that, upon learning new data, states
should be updated by Bayesian conditioning.  The reason for demanding
an argument based on a well-defined methodology for inference is that
there are situations in which even a Bayesian would want to update
their state assignment by means other than Bayesian conditioning.  For
example, if you discover some information that is better represented
as a constraint than as the acquisition of new data, such as finding
out the mean energy of the molecules in a gas, then minimization of
relative entropy, rather than Bayesian conditioning, would commonly be
used to update probabilities \cite{Kullback1959, Shore1980}.
Arguments have also been made for applying generalizations of Bayesian
conditioning, e.g.\ Jeffrey conditioning \cite{Jeffrey1983,
  Jeffrey2004}, on the acquisition of new data in certain
circumstances.  It is not clear whether the intuitions used by BFM and
CFS are applicable to all such circumstances and indeed our intuitions
about probabilities and quantum states are not all that reliable in
general.  It is therefore important to be clear about the type of
inference procedures that are being allowed for in any argument for a
compatibility condition.

What is missing from the existing arguments for BFM compatibility is a
specification of precisely what sorts of probabilistic inferences are
valid --- in short, a precise quantum analog of Bayesian conditioning.
We have recently proposed such an analog within the formalism of
conditional quantum states \cite{Leifer2011a}.  This formalism has the
advantage of being more causally neutral than the standard quantum
formalism, by which we mean that Bayesian conditioning is applied in
the same way regardless of how the data is causally related to the
system of interest, e.g.\ the data could be the outcome of a direct
measurement of the system, a variable involved in the preparation of
the system, the outcome of a measurement of a remote system that is
correlated with the system of interest, etc.  This causal neutrality
allows us to develop arguments that are applicable in a broader range
of experimental scenarios --- or more accurately, \emph{causal}
scenarios --- than those obtained within the conventional formalism.

In this article, we derive BFM compatibility from the principled
application of the idea that, upon learning new data, agents should
update their states according to our quantum analogue of Bayesian
conditioning. This leaves no room for other principles of a more ad
hoc nature.  Both objective and subjective Bayesian arguments
are given by first reviewing the corresponding classical compatibility
arguments and then drawing out the parallels to the quantum case using
conditional states.  The BFM-Jacobs and CFS arguments are then
criticized in the light of our results.

Having dealt with the question of how state assignments can differ, we
then turn to the question of how to combine the state assignments of
different agents.  In Bayesian theory, the purpose of states is to
provide a guide to rational decision making via the principle of
maximizing expected utility.  In its usual interpretation, this is a
rule for \emph{individual} decision making that does not take into
account the views of other agents.  This raises two conceptually
distinct problems.

Firstly, decision making should be performed on the basis of all
available relevant evidence.  The fact that another agent assigns a
particular state could be relevant evidence, and may cause you to
change your state assignment, even in the case where both state
assignments are the same.  For example, if both you and I assign the
same high probability to some event, then telling you my state
assignment may cause you to assign an even higher probability if you
believe that my reasons for assigning a high probability are valid and
that they are independent of yours.  Following Herbut
\cite{Herbut2004}, we call updating your state assignment in light of
another agent's state assignment \emph{state improvement}.

Secondly, if two agents do have different state assignments, then they
may have different preferences over the available choices in decision
making scenarios.  In practice, decisions often have to be made as a
group, in which case a preference conflict prevents all the agents in
the group from maximizing their individual expected utilities
simultaneously.  This motivates the need for methods of combining
state assignments into a single assignment that accurately represents
the beliefs, information, or knowledge of the group as a whole.  This
problem is called \emph{state pooling}.

In the classical case, both improvement and pooling have been studied
extensively (see \cite{Genest1986} and \cite{Jacobs1995} for reviews).
From this it is clear that there is no hope of coming up with a
universal rule, applicable to all cases, that is just a simple
functional of the different state assignments. Instead, we offer a
general methodology for combining states, in both the classical and
quantum cases, again based on the application of Bayesian
conditioning.

Learning another agent's state assignment can be thought of as
acquiring new data.  Therefore, given our Bayesian methodology, the
state improvement problem is solved by simply conditioning on this
data.  For state pooling, we adopt the \emph{supra-Bayesian} approach
\cite{Keeney1976}, which requires the agents to put themselves in the
shoes of a neutral decision maker.  Although their ability to do this
is not guaranteed, doing so reduces the pooling problem to an instance
of state improvement, i.e.\ the neutral decision maker's state is
conditioned on all the other agents' state assignments and the result
is used as the pooled state.  As with compatibility, our approach to
these problems is to draw out the parallels to the classical case
using conditional states and to derive our results by a principled
application of Bayesian conditioning.  This is an improvement over earlier
approaches \cite{Brun2002a, Poulin2003, Herbut2004, Jacobs2002,
  Jacobs2005, Spekkens2007}, which use more ad hoc principles.
However, some of the results of these earlier approaches are recovered
within the present approach.  In particular, a pooling rule previously
proposed by Spekkens and Wiseman \cite{Spekkens2007} can be derived
from our method in the special case where the minimal sufficient
statistics for the data collected by different agents satisfy a
condition that is slightly weaker than conditional independence.  This
is an improvement on the original derivation, which only holds for a
more restricted class of scenarios.

The results in this paper can be viewed as a demonstration of the
conceptual power of the conditional states formalism developed in
\cite{Leifer2011a}.  However, two concepts that were not discussed in
\cite{Leifer2011a} are required to develop our approach to the state
improvement and pooling problems.  These are quantum conditional
independence and sufficient statistics.  Conditional independence has
previously been studied in \cite{Leifer2008}, from which we borrow the
required results.  Several definitions of quantum sufficient
statistics have been given in the literature
\cite{Barndorff-Nielsen2003, Jencova2006, Petz2008a}, but they concern
sufficient statistics for a quantum system with respect to a classical
parameter, or sufficient statistics for measurement data with respect
to a preparation variable.  By contrast, here we need sufficient
statistics for classical variables with respect to quantum systems.
Our treatment of this is novel to the best of our knowledge.

\section{Review of the conditional states formalism}

\subsection{Basic concepts}

The conditional states formalism, developed in \cite{Leifer2011a},
treats quantum theory as a generalization of the classical theory of
Bayesian inference. In the quantum generalization, classical variables
become quantum systems, and normalized probability distributions over
those variables become operators
on the Hilbert spaces of the systems that have unit trace but are not always positive. The generalization
is summarized in table~\ref{tbl:Review:Basics}, the elements of which
we now review.  The treatment here is necessarily brief.  A more
detailed development of the formalism and its relation to the
conventional quantum formalism can be found in \cite{Leifer2011a}.

Note that we adopt the convention that classical variables are denoted
by letters towards the end of the alphabet, such as $R, S, T, X, Y$
and $Z$, while quantum systems are denoted by letters near the
beginning of the alphabet, such as $A, B$ and $C$.

\begin{table*}[htb]
  \begin{tabular}{|p{15em}|>{\centering}p{15em}|>{\centering}p{15em}|}
    \hhline{|~|-|-|}
    \multicolumn{1}{p{15em}|}{\T\B}& {\bf Classical} & {\bf Quantum} \tabularnewline
    \hline
    \T State & $P(R)$ & $\tau_A$ \tabularnewline
    \M Joint state & $P(R,S)$ & $\tau_{AB}$ \tabularnewline
    \M\B Marginalization & $P(S) = \sum_R P(R,S)$ & $\tau_B = \Tr[A]{\tau_{AB}}$ \tabularnewline
    \hline
    \T Conditional state & $P(S|R)$ & $\tau_{B|A}$ \tabularnewline
    \M\B & $\sum_S P(S|R) = 1$ & $\Tr[B]{\tau_{B|A}} = I_A$ \tabularnewline
    \hline
    \T Relation between joint and & $P(R,S) = P(S|R)P(R)$ & $\tau_{AB} = \tau_{B|A} \Sprod \tau_A$ \tabularnewline
    \M\B  conditional states & $P(S|R) = P(R,S)/P(R)$ & $\tau_{B|A} = \tau_{AB} \Sprod
    \tau_A^{-1}$ \tabularnewline
    \hline
    \T\B Bayes' theorem & $P(R|S) = P(S|R)P(R)/P(S)$ & $\tau_{A|B} = \tau_{B|A} \Sprod (\tau_A \tau_B^{-1})$ \tabularnewline
    \hline
    \T\B Belief propagation & $P(S) = \sum_R P(S|R)P(R)$ & $\tau_{B} = \Tr[A]{\tau_{B|A} \tau_A}$ \tabularnewline
    \hline
  \end{tabular}
  \caption{\label{tbl:Review:Basics}Analogies between the classical
    theory of Bayesian inference and the conditional states formalism
    for quantum theory.}
\end{table*}

In the classical theory of Bayesian inference, a joint probability
distribution $P(R,S)$ describes an agent's knowledge, information or
degrees of belief about a pair of random variables $R$ and $S$.  There
is no constraint on the interpretation of what the two variables can
represent.  They may refer to the properties of two distinct physical
systems at a single time, or to the properties of a single system at
two distinct times, or indeed to any pair of physical degrees of
freedom located anywhere in spacetime.  They may even have a
completely abstract interpretation that is independent of physics,
e.g. $R$ could represent acceptance or rejection of the axioms of
Zermelo-Fraenkel set theory and $S$ could be the truth value of the
Reimann hypothesis.  However, given that we are interested in quantum
theory, such abstract interpretations are of less interest to us than
physical ones.  The main point is that the same mathematical object, a
joint probability distribution $P(R,S)$, is used regardless of the
interpretation of the variables in terms of physical degrees of
freedom.

The theory of quantum Bayesian inference aims to achieve a similar
level of independence from physical interpretation.  In particular, we
want to describe inferences about two systems at a fixed time via the
same rules that are used to describe a single system at two times.  As
such, the usual talk of ``systems'' in quantum theory is
inappropriate, as a system is usually thought of as something that
persists in time.  Instead, the basic element of the conditional
states formalism is a \emph{region}.  An \emph{elementary region}
describes what would normally be called a system at a fixed point in
time and a \emph{region} is a collection of elementary regions.  For
example, whilst the input and output of a quantum channel are usually
thought of as the same system in the conventional formalism, they
correspond to two disjoint regions in our terminology.  This gives a
greater symmetry to the case of two systems at a single time, which
also correspond to two disjoint regions.

A region $A$ is assigned a Hilbert space $\Hilb[A]$ and a composite
region $AB$ consisting of two disjoint regions, $A$ and $B$, is
assigned the Hilbert space $\Hilb[AB] = \Hilb[A] \otimes \Hilb[B]$.
The knowledge, information, or beliefs of an agent about $AB$ are
described by a linear operator on $\Hilb[AB]$ (this operator has other
mathematical properties which will be discussed further on).  This
operator is called the \emph{joint state} and, for the moment, we
denote it by $\tau_{AB}$.  Ideally, one would like this framework to
handle any set of regions, regardless of where they are situated in
spacetime, but unfortunately the formalism developed in
\cite{Leifer2011a} is not quite up to the task.  For instance, it is
currently unclear how to represent degrees of belief about three
regions that describe a system at three distinct times.

In a classical theory of Bayesian inference, one also has the freedom
to conditionalize upon any set of variables, regardless of the
spatio-temporal relations that hold among them, or indeed of the
spatio-temporal relations between the conditioning variables and the
conditioned variables.  Therefore, this is an ideal to which a quantum
theory of Bayesian inference should also strive.  Again, the formalism
of \cite{Leifer2011a} does not quite achieve this ideal.  For instance,
this framework cannot currently deal with pre- and post-selection, for
which the conditioning regions straddle the conditioned system in
time.

Whilst these sorts of consideration limit the scope of our results, we
are still able to treat a wide variety of causal scenarios including
all those that have been previously discussed in the literature on
compatibility, improvement, and pooling.  We begin by providing a
synopsis of the formalism as it has been developed thus
far\footnote{In both Bayesian probability and quantum theory, all
  states depend on the background information available to the agent,
  so all states are really conditional states.  When writing down an
  unconditional probability distribution, this background knowledge is
  assumed implicitly.  However, in the quantum case, the properties of
  a quantum state might well depend on the spatio-temporal relation of
  these implicit conditioning regions to the region of interest.
  Here, we assume that these conditioning regions are related to the
  conditioned region in the standard sort of way.  For instance,
  implicit pre- and post-selection are not permitted.}.

Table~\ref{tbl:Review:Basics} summarizes the basic concepts and
formulas of this framework and defines the terminology that we use for
them.

For an elementary region $A$, the quantum analogue of a normalized
probability distribution is a trace-one operator $\tau_A$ on
$\Hilb[A]$.  For a region $AB$, composed of two disjoint elementary
regions, the analogue of a joint distribution $P(R,S)$ is an operator
$\tau_{AB}$ on $\Hilb[AB]$.  The marginalization operation $P(S) =
\sum_R P(R,S)$ which corresponds to ignoring $R$, is replaced by the
partial trace operation, $\tau_B = \Tr[A]{\tau_{AB}}$, which
corresponds to ignoring $A$.  The role of the marginal distribution
$P(S)$ is played by the marginal state $\tau_B$.

If $A$ is an elementary region, then $\tau_A$ is also \emph{positive},
and simply corresponds to a conventional density operator on $A$.  To
highlight this fact, we denote it by $\rho_A$ in this case.  The
positivity of marginal states on elementary regions implies that the
joint state $\tau_{AB}$ of a pair of elementary regions must have
positive partial traces (but it need not itself be a positive
operator).

Another key concept in classical probability is a conditional
probability distribution $P(S|R)$.  $P(S|R)$ represents an agent's
degrees of belief about $S$ for each possible value of $R$.  It
satisfies $\sum_s P(S=s|R=r) = 1$ for all $r$ and is related to the
joint probability by $P(S|R)= P(R,S)/P(R)$.  This implies Bayes'
theorem, $P(R|S)= P(S|R)P(R)/P(S)$, which allows conditionals to be
inverted.  Conditional probabilities are critical to probabilistic
inference.  In particular, if you assign the conditional distribution
$P(S|R)$ and your state for $R$ is $P(R)$, then your state for $S$ can
be computed from $P(S) = \sum_R P(S|R) P(R)$.  This map from $P(R)$ to
$P(S)$ is called \emph{belief propagation}.

The quantum analogue of a conditional probability is a
\emph{conditional state} for region $B$ given region $A$.  This is a
linear operator on $\Hilb[AB]$, denoted $\tau_{B|A}$, that satisfies
$\Tr[B]{\tau_{B|A}}= I_A$.  It is related to the joint state by
$\tau_{B|A} = \tau_{AB} \Sprod \tau_A^{-1}$, where the
$\Sprod$-product is defined by
\begin{equation}
  M \Sprod N \equiv N^{1/2} M N^{1/2},
\end{equation}
and we have adopted the convention of dropping identity operators and
tensor products, so that $\tau_{AB} \Sprod \tau_A^{-1}$ is shorthand
for $\tau_{AB} \Sprod (\tau_A^{-1} \otimes I_B) = (\tau_A^{-1/2}
\otimes I_B) \tau_{AB} (\tau_A^{-1/2} \otimes I_B)$. The quantum
analogue of Bayes' theorem, relating $\tau_{B|A}$ and $\tau_{A|B}$, is
$\tau_{A|B} = \tau_{B|A} \Sprod (\tau_A \tau_B^{-1})$.  Conditional
states are the key to inference in this framework.  In particular, if
you assign the conditional state $\tau_{B|A}$ and your state for $A$
is $\tau_A$, then your state for $B$ can be computed from $\tau_B
= \Tr[A]{\tau_{B|A} \tau_A}$, where we have used the cyclic
property of the trace.  This map from $\tau_A$ to
$\tau_B$ is called \emph{quantum belief propagation}.

\subsection{The relevance of causal relations}

The rules of classical Bayesian inference are independent of the
causal relationships between the variables under consideration.  For
instance, the formula for belief propagation from $R$ to $S$ does not
depend on whether $R$ and $S$ represent properties of distinct systems
or of the same system at two different times.  Nonetheless, causal
relations between variables can affect the set of probability
distributions that are regarded as plausible models. For example, if
$T$ is a common cause of $R$ and $S$, then $R$ and $S$ should be
conditionally independent given $T$, i.e.\ any viable probability model
should satisfy $P(R,S|T)=P(R|T)P(S|T)$.

In the quantum case, the situation is similar.  The rules of
inference, such as the formula for belief propagation, do not depend
on the causal relations between the regions under consideration, but
causal relations do affect the set of operators that can describe
joint states.  Indeed, the dependence is stronger in the quantum case
because the kind of operator used depends on the causal relation even
for a \emph{pair} of regions.

Suppose that $A$ and $B$ represent elementary regions.  $A$ and $B$
are \emph{causally} related if there is a direct causal influence from
$A$ to $B$ (for instance, if $A$ and $B$ are the input and the output
of a quantum channel), or if there is an indirect causal influence
through other regions (for instance, there is a sequence of channels
with $A$ as the input to the first and $B$ as the output of the last).
$A$ and $B$ are \emph{acausally} related if there is no such direct or
indirect causal connection between them, for instance, if they
represent two distinct systems at a fixed time.

If $A$ and $B$ are \emph{acausally} related, then their joint state
$\tau_{AB}$ is a positive operator.  It simply corresponds to a
standard density operator for independent systems.  The conditionals
$\tau_{A|B}$ and $\tau_{B|A}$ are then also positive operators.  Given
that $\rho$ is the standard notation for density operators, a joint
state of two acausally related regions is denoted $\rho_{AB}$.
Similarly, the conditional states are denoted $\rho_{A|B}$ and
$\rho_{B|A}$.  This notation is meant to be a reminder of the
mathematical properties of these operators.  We refer to them as
\emph{acausal} (joint and conditional) states.

If $A$ and $B$ are \emph{causally} related, then $\tau_{AB}$ does not have to
be a positive operator, but $\tau_{AB}^{T_A}$ (or equivalently
$\tau_{AB}^{T_B}$) is always positive, where $^{T_A}$ and $^{T_B}$
denote partial transpose operations on $A$ and $B$\footnote{Whilst the
  partial transpose operation depends on a choice of basis, the
  \emph{set} of operators that are positive under partial
  transposition is basis independent.}.  Similarly, $\tau_{A|B}$ and
$\tau_{B|A}$ are not necessarily positive, but they must have positive
partial transpose.  In this case, the operators $\tau_{AB},
\tau_{A|B}$ and $\tau_{B|A}$ are denoted $\varrho_{AB}, \varrho_{A|B}$
and $\varrho_{B|A}$ respectively and we refer to them as \emph{causal}
(joint and conditional) states.  In particular, dynamical evolution
taking $\rho_A$ to $\rho_B$ can be represented as quantum belief propagation using a causal conditional
state $\varrho_{B|A}$, i.e. $\rho_B = \Tr[A]{\varrho_{B|A}\rho_A}$.
If, in the conventional formalism, the dynamics would be described by
a Completely Positive Trace-preserving (CPT) map $\mathcal{E}_{B|A}:
\Lin[A]\rightarrow\Lin[B]$, then the corresponding conditional state
$\varrho_{B|A}$ is the operator on $\Hilb[A]\otimes\Hilb[B]$ that is
Jamio{\l}kowski-isomorphic \cite{Jamiolkowski1972} to
$\mathcal{E}_{B|A}$, that is, $\varrho_{B|A} = \sum_{j,k}
\Ket{j}\Bra{k}_A \otimes \mathcal{E}_{B|A'}\left ( \Ket{k}\Bra{j}_{A'}
\right )$, where $\Hilb[A']$ is isomorphic to $\Hilb[A]$ and
$\{\Ket{j}\}$ is any orthonormal basis for $\Hilb[A]$.

\subsection{Modeling classical variables}

Joint, marginal and conditional classical probability distributions
are special cases of joint, marginal and conditional quantum states.
To see this, note that a random variable $R$, with $d_R$ possible
values, can be associated with a $d_R$ dimensional Hilbert space with
a preferred basis $\left \{ \Ket{r_1}_R, \Ket{r_2}_R, \ldots,
  \Ket{r_{d_R}}_R \right \}$ labeled by the possible values of $R$.
Then, a probability distribution $P(R)$ can be encoded in a density
operator that is diagonal in this basis via $ \tau_R = \sum_r P(R=r)
\Ket{r}\Bra{r}_R$.  Similarly, for two random variables, $R$ and $S$,
we can construct Hilbert spaces and preferred bases for each and
encode a joint distribution $P(R,S)$ in a joint state via $\tau_{RS} =
\sum_{r,s} P(R=r,S=s) \Ket{r}\Bra{r}_R \otimes \Ket{s}\Bra{s}_S$, and
a conditional distribution $P(S|R)$ in a conditional state via
$\tau_{S|R} = \sum_{r,s} P(S=s|R=r) \Ket{r}\Bra{r}_R \otimes
\Ket{s}\Bra{s}_S$.

Because all operators on a given classical region commute, the
$\Sprod$-product reduces to the regular operator product for classical
states, so that the formulas for quantum Bayesian inference reduce to
their classical counterparts.  For instance, the quantum Bayes'
theorem becomes $\tau_{R|S}= \tau_{S|R} \tau_{R} \tau_{S}^{-1}$, which
is equivalent to $P(R|S)=P(S|R)P(R)/P(S)$.

Note that if we adopt the convention that partial transposes on
classical regions are always defined with respect to the preferred
basis, then classical joint and conditional states are invariant under
this operation.  Therefore, classical \emph{causal} states have the
same mathematical properties as classical \emph{acausal}
states.\footnote{Even if we do not adopt the convention of evaluating
  partial transposes in the preferred basis, the sets of acausal and
  causal states are still isomorphic.  If $\left \{ \Ket{r}_R \right
  \}$ is the preferred basis for acausal states then this amounts to
  choosing a preferred basis $\left \{ \Ket{r^*}_R \right \}$ for
  causal states, where $^*$ is complex conjugation in the basis used
  to define the partial transpose.  However, this is an unnecessary
  complication that is avoided by adopting the recommended
  convention.}.  Since the notational distinction between $\rho$ and
$\varrho$ is supposed to act as a reminder of the mathematical
difference between causal and acausal states for pairs of quantum
regions, there is no need to make the distinction for classical
states.  We therefore adopt the convention of denoting classical
states over an arbitrary set of regions by $\rho$, regardless of how
the regions are causally related.

To complete our discussion of the basic objects in the conditional
states formalism, we need to describe how correlations between
classical and quantum regions can be represented.  The classical
variable $X$ is represented by a Hilbert space $\Hilb[X]$ with a
preferred basis, as described above, and the quantum region $A$ is
associated with a Hilbert space $\Hilb[A]$ with no preferred
structure.  The hybrid region $XA$ is assigned the Hilbert space
$\Hilb[XA] = \Hilb[X] \otimes \Hilb[A]$, but in representing
correlated states on this space, we must ensure that the classical
part remains classical.  In particular, this means that there can be
no entanglement between $X$ and $A$, and that the reduced state on $X$
must be diagonal in the preferred basis.  This motivates defining a
\emph{hybrid quantum-classical} operator on $\Hilb[XA]$ to be an
operator of the form $M_{XA} = \sum_x \Ket{x}\Bra{x}_X \otimes
M_{X=x,A}$, where each $M_{X=x,A}$ is an operator on $\Hilb[A]$.  The
operators $M_{X=x,A}$ are called the \emph{components} of $M_{XA}$.

It follows that a hybrid joint state has the form $\tau_{XA} = \sum_x
\Ket{x}\Bra{x}_X \otimes \tau_{X=x,A},$ where each component
$\tau_{X=x,A}$ is an operator on $\Hilb[A]$.  Recall that if $X$ and
$A$ are acausally related, then $\tau_{XA}$ must be positive, while if
$X$ and $A$ are causally related, then $\tau_{XA}^{T_A}$ must be
positive.  However, given the form of a hybrid state, $\tau_{XA}$ is
positive if and only if $\tau_{XA}^{T_A}$ is positive, so the two
conditions are equivalent.  Consequently, causal and acausal states on
hybrid regions correspond to the same set of operators.  Therefore, as
for classical states, $\rho$ is used to denote all hybrid states,
regardless of their causal interpretation.

By calculating the marginal state $\rho_X$ and $\rho_A$ from the
hybrid state $\rho_{AX}$, we can define conditional states as
$\rho_{X|A} = \rho_{AX} \Sprod \rho_A^{-1}$ and $\rho_{A|X} =
\rho_{AX} \Sprod \rho_X^{-1}= \rho_{AX} \rho_X^{-1}$.  In the latter
case, the $\Sprod$-product reduces to the regular operator product
because $X$ is classical.  There are two sorts of conditional states
for hybrid systems corresponding to whether the quantum or the
classical region is on the right of the conditional.  If the
conditioning system is quantum, then the conditional state has the
form $\rho_{X|A} = \sum_x \Ket{x}\Bra{x}_X \otimes \rho_{X=x|A}$ where
$\rho_{X=x|A}$ is positive and $\sum_x \rho_{X=x|A} =I_A$.  It follows
that the set of operators $\{ \rho_{X=x|A} \}$ is a Positive Operator
Valued Measure (POVM) and therefore such conditional states can be
used to represent measurements, a fact that we shall make use of in
\S\ref{Model:Single}.  If the conditioning system is classical, then
the conditional state has the form $\rho_{A|X} = \sum_x \rho_{A|X=x}
\otimes \Ket{x}\Bra{x}_X $ where $\rho_{A|X=x}$ is positive and
$\Tr[A]{\rho_{A|X=x}} =1$ for all $x$. The operators $\{\rho_{A|X=x}
\}$ therefore constitute a set of normalized states on $A$, and can
therefore be used to represent state preparations, a fact that will
also be used in \S\ref{Model:Single}.

\subsection{Bayesian conditioning}

Classically, if you are interested in a random variable $R$, and you
learn that a correlated variable $X$ takes the value $x$, then you
should update your probability distribution for $R$ from the prior,
$P(R)$, to the posterior, $P(R|X=x)$.  This is known as \emph{Bayesian
  Conditioning}.

In the conditional states formalism, whenever there is a hybrid
region, regardless of the causal relationship between the classical
variable $X$ and the quantum region $A$, you can always assign a joint
state $\rho_{XA}$.  When you learn that $X$ takes the value $x$, the
state of the quantum region should be updated from $\rho_A$ to
$\rho_{A|X=x}$.  This is \emph{quantum} Bayesian conditioning.

\subsection{How to read this paper}

This article is mainly concerned with the consequences of conditioning
a quantum region on classical data, so the main objects of interest
are hybrid conditional states with classical conditioning regions.  In
this case the set of operators under consideration does not depend on
the causal relation between the two regions.  However, thus far we
have only considered conditioning a quantum region on a \emph{single}
classical variable.  Suppose instead that you learn the values of
\emph{two} classical variables, $X_1$ and $X_2$, and you want to
update your beliefs about a quantum region $A$.  In this case, there
are some causal scenarios where your beliefs cannot be correctly
represented by a joint state $\rho_{AX_1X_2}$.  In such scenarios, our
results do not apply.

To properly explain the distinction between the types of causal
scenario to which our results apply and those to which they do not
requires delving into the conditional states formalism in more detail.
However, this extra material is not necessary for understanding most
of our results, so the reader who is eager to get to the discussion of
compatibility, improvement and pooling can skip ahead to
\S\ref{Compat}, referring back to \S\ref{Model} as necessary.

The next section covers the required background for understanding the
scope of our results and gives several examples of experimental
scenarios to which our results apply. In particular, all of the causal
scenarios that have been considered to date in the literature on
compatibility, improvement, and pooling are within the scope of our
results.  Indeed, given that all previous results have been derived in
the context of specific causal scenarios, our results represent a
substantial increase in the breadth of applicability, even if they do
not yet cover all conceivable cases.

\section{Modeling experimental scenarios using the conditional states
  formalism}

\label{Model}

Table~\ref{tbl:Model:Trans} translates various concepts and formulas
from the conventional quantum formalism into the language of
conditional states.  These correspondences are described in more
detail in \cite{Leifer2011a}.  The meaning of most of the rows should
be evident from the discussion in the previous section, and the rest
are explained in this section as needed.

\begin{table*}[htb]
  \begin{tabular}{|p{18em}|>{\centering}p{15em}|>{\centering}p{15em}|}
    \hhline{|~|-|-|}
    \multicolumn{1}{p{18em}|}{\T\B} & {\bf Conventional Notation} &
    {\bf Conditional States Formalism} \tabularnewline
    \hline
    \T \hspace{0.5em} {\bf Probability distribution of $X$} & $P(X)$ & $\rho_{X}$ \tabularnewline
    \M\B \hspace{0.5em} {\bf Probability that $X=x$} & $P(X=x)$ & $\rho_{X=x}$ \tabularnewline
    \hline
    \T \hspace{0.5em} {\bf Set of states on $A$} & $\{ \rho^A_x \}$ & $\rho_{A|X}$ \tabularnewline
    \M\B \hspace{0.5em} {\bf Individual state on $A$} & $\rho^A_x$ & $\rho_{A|X=x}$ \tabularnewline
    \hline
    \T \hspace{0.5em} {\bf POVM on $A$} & $\{ E^A_x \}$ & $\rho_{X|A}$ \tabularnewline
    \M\B \hspace{0.5em} {\bf Individual effect on $A$} & $E^A_x$ & $\rho_{X=x|A}$ \tabularnewline
    \hline
    \T\B \hspace{0.5em} {\bf Channel from $A$ to $B$} & $\mathcal{E}_{B|A}$ & $\varrho_{B|A}$ \tabularnewline
    \hline
    \M \hspace{0.5em} {\bf Instrument} & $\{ \mathcal{E}^{B|A}_x \}$ & $\varrho_{XB|A}$ \tabularnewline
    \M\B \hspace{0.5em} {\bf Individual Operation} & $\mathcal{E}^{B|A}_x$ & $\varrho_{X=x,B|A}$ \tabularnewline
    \hhline{|=|=|=|}
    \T\B \hspace{0.5em} {\bf The Born rule}  & $\forall x: P(X=x) = \Tr[A]{E^A_x \rho_A}$ & $\rho_X =
    \Tr[A]{\rho_{X|A} \rho_A}$ \tabularnewline
    \hline
    \T\B \hspace{0.5em} {\bf Ensemble averaging} & $\rho_A = \sum_x P(X=x) \rho^A_x$ & $\rho_A = \Tr[X]{\rho_{A|X}
      \rho_X}$ \tabularnewline
    \hline
    \T\B \hspace{0.5em} {\bf Action of a quantum channel} &  $\rho_B = \mathcal{E}_{B|A} \left ( \rho_A \right )$ & $\rho_B =
    \Tr[A]{\varrho_{B|A} \rho_A}$ \tabularnewline
    \hline
    \T\B \hspace{0.5em} {\bf Composition of channels} & $\mathcal{E}_{C|A}
    = \mathcal{E}_{C|B} \circ \mathcal{E}_{B|A}$ & $\varrho_{C|A} =
    \Tr[B]{\varrho_{C|B}\varrho_{B|A}}$ \tabularnewline
    \hline
    \T\B \hspace{0.5em} {\bf State update rule} & $\forall x:P(X=x) \rho^B_x = \mathcal{E}^{B|A}_x \left ( \rho_A \right )$ &
    $\rho_{XB} = \Tr[A]{\varrho_{XB|A} \rho_A}$ \tabularnewline
    \hline
  \end{tabular}
  \caption{\label{tbl:Model:Trans}Translation of concepts and equations from conventional notation to the conditional states formalism.}
\end{table*}

We begin by showing how conditioning a quantum region on a single
classical variable works in several different experimental scenarios.
This is necessary background knowledge for considering the more
relevant scenarios involving conditioning on a pair of variables. The
different experiments correspond to different causal structures, which
are illustrated by directed acyclic graphs.

\subsection{Conditioning on a single classical variable}

\label{Model:Single}

In this section, the quantum region we are interested in making
inferences about is always denoted $B$ and the classical variable on
which the inference is based is denoted $X$.

\begin{figure}
  \subfloat[][]{\label{fig:Model:Prep}
    \quad \includegraphics[scale=0.4]{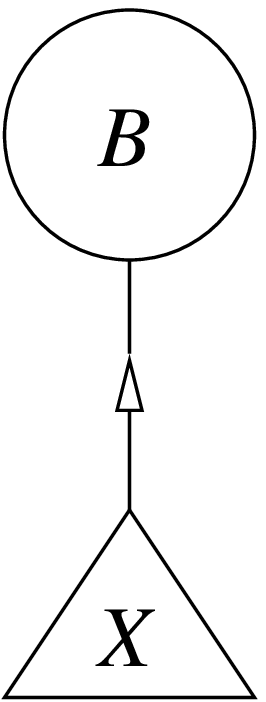} \quad}
  \subfloat[][]{\label{fig:Model:Remote}
    \quad \includegraphics[scale=0.4]{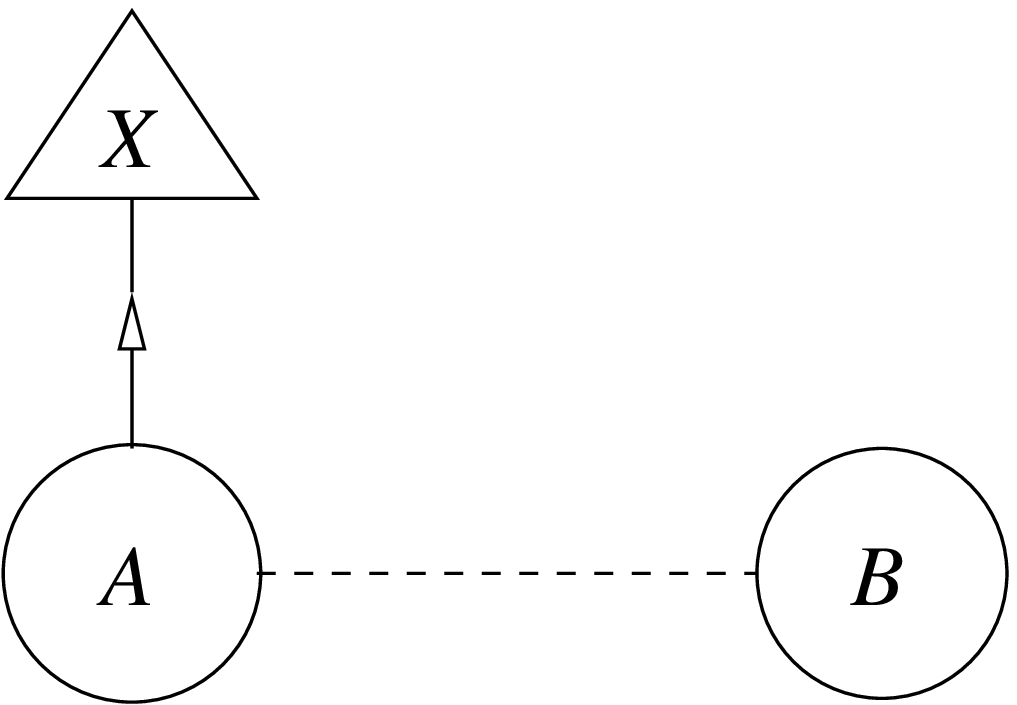} \quad}
  \subfloat[][]{\label{fig:Model:Meas}
    \quad \includegraphics[scale=0.4]{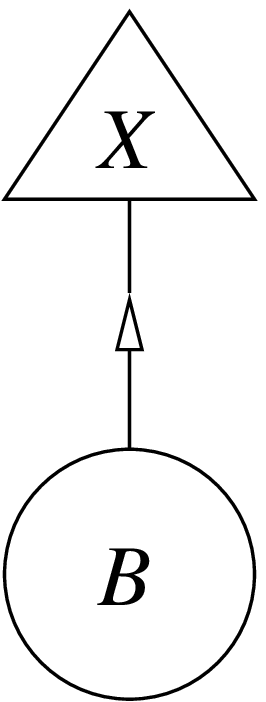} \quad}
  \caption{Quantum-classical hybrid regions with different causal
    relations.  Triangles represent classical variables (as suggested
    by the shape of the probability simplex) and circles represent
    quantum regions (as suggested by the spherical state space of a
    qubit).  \subref{fig:Model:Prep} Preparation procedure: a quantum
    region $B$ is prepared in one of a set of states depending on the
    value of a classical variable $X$ ($B$ is in the causal future of
    $X$).  \subref{fig:Model:Remote} Remote measurement: a
    measurement is made on $A$, which is acausally related to $B$.
    The classical outcome $X$ is then acausally related to $B$.
    \subref{fig:Model:Meas} Measurement: a measurement is made on a
    quantum region $B$ and the classical variable $X$ represents the
    outcome ($X$ is in the causal future of $B$).
    \label{fig:Model:Hybrid}}
\end{figure}

\begin{example}
  Consider the following preparation procedure.  A classical random variable
  $X$ with probability distribution $P(X)$ is generated by flipping
  coins, rolling dice or any other suitable procedure, and then a
  quantum region is prepared in a state $\rho^B_x$ depending on the
  value of $X$ obtained.  This scenario is depicted in
  fig.~\ref{fig:Model:Prep}.  Suppose that, initially, you do not know
  the value of $X$ that was obtained in this procedure.  In the
  conditional states formalism, your beliefs about $X$ are represented
  by a diagonal state $\rho_X$ with components $\rho_{X=x}\equiv
  P(X=x)$.  The set of states prepared is represented by a conditional
  state $\rho_{B|X}$ with components $\rho_{B|X=x}\equiv \rho^B_x$.
  Since the $\Sprod$-product reduces to a regular product for
  classical states, the joint state of $XB$ is $ \rho_{XB} =
  \rho_{B|X}\rho_X$. In terms of components, this is $\rho_{XB} =
  \sum_x P(X=x) \Ket{x}\Bra{x}_X\otimes \rho^B_x$.  It follows that
  $\rho_{XB}$ contains sufficient information to describe an
  \emph{ensemble} of states, i.e.\ a set of states supplemented with a
  probability distribution over them.  Tracing over $X$ gives the
  marginal $\rho_B = \Tr[X]{\rho_{B|X}\rho_X}=\sum_x P(X=x) \rho^B_x$,
  which is easily recognized as the ensemble average state on $B$.

  According to the conventional formalism, upon learning that $X$
  takes the value $x$, you should assign the state that was prepared
  for that particular value of $X$ to $B$, which is just $\rho^B_x$.
  However, since $\rho_{B|X=x} = \rho^B_x$ in the conditional states
  formalism, this update has the form $\rho_B \rightarrow
  \rho_{B|X=x}$, so it is an example of quantum Bayesian conditioning.
  The interpretation of conditioning in this scenario is as an update
  from the ensemble average state to a particular state in the
  ensemble.
\end{example}

\begin{example}
  Suppose that $A$ and $B$ are two acausally related quantum regions
  to which you assign the state $\rho_{AB}$.  The (prior) reduced
  state on $B$ is $\rho_B = \Tr[A]{\rho_{AB}}$.  Now suppose that you
  make a measurement on $A$ with outcome described by the variable $X$
  and that the measurement is associated with a POVM $\{ E^A_x\}$.  In
  the conditional states formalism, the measurement is represented by
  a conditional state $\rho_{X|A}$, where $\rho_{X=x|A}\equiv E^A_x$.
  We are interested in how the state for $B$ gets updated upon
  learning the outcome $x$ of $X$.  This causal scenario is depicted
  in fig.~\ref{fig:Model:Remote}.  This is the scenario that occurs in
  the EPR experiment, or more generally in ``quantum steering''.  The
  update map in this case is sometimes called a ``remote collapse
  rule''.

  In the conditional states formalism, the joint state on $XB$ can be
  determined by belief propagation from $A$ to $X$, i.e. $\rho_{XB} =
  \Tr[A]{\rho_{X|A}\rho_{AB}}$.  The marginal on $X$ gives the outcome
  probabilities for the measurement and is given by $\rho_X =
  \Tr[B]{\rho_{BX}}$. From these, the conditional state $\rho_{B|X}$
  is determined via $\rho_{B|X}=\rho_{BX}\rho_X^{-1}$.  By
  substituting $X=x$ into the expression for $\rho_{B|X}$, we obtain
  $\rho_{B|X=x}$.  This is the state that you should assign to $B$
  when you learn that $X=x$, i.e.\ the update rule for the remote
  region is just Bayesian conditioning $\rho_B \rightarrow
  \rho_{B|X=x}$.  The updated state $\rho_{B|X=x}$ can be expressed in
  terms of the givens in the problem, i.e.\ the state $\rho_{AB}$ and
  the POVM elements $E^A_x$, but this is not especially instructive
  for present purposes. Interested readers can consult
  \cite{Leifer2011a}, where it is shown that this form of Bayesian
  conditioning is precisely the same as the usual remote collapse rule
  in the conventional formalism.
\end{example}

\begin{example}
  Consider the case where $X$ represents the outcome of a direct
  measurement made on $B$ and you want to condition the state of $B$
  on the value of this outcome.  This causal scenario is depicted in
  fig.~\ref{fig:Model:Meas}, and is described by an input state
  $\rho_B$ and a conditional state $\rho_{X|B}$ with components given
  by the POVM that is being measured.  The conditional $\rho_{B|X=x}$
  is then the $X=x$ component of $\rho_{B|X}$, which can be computed
  from an application of Bayes' theorem $\rho_{B|X} = \rho_{X|B}
  \Sprod \left ( \rho_B \rho_X^{-1} \right)$, where $\rho_X =
  \Tr[B]{\rho_{X|B}\rho_B}$.  The operator $\rho_{B|X=x}$ is the state
  that should be assigned to region $B$ upon learning that the outcome
  $X$ takes the value $x$.

  Note that Bayesian conditioning in this case is a kind of
  \emph{retrodiction}: the region being conditioned upon, the outcome
  of the measurement, is to the future of the conditioned region, the
  quantum input to the measurement. This application of Bayesian
  conditioning to retrodiction is discussed in detail in
  \cite{Leifer2011a} and is shown to generate precisely the same
  operational consequences as would be obtained in the conventional
  formalism for retrodiction.
\end{example}

\begin{figure}
  \includegraphics[scale=0.4]{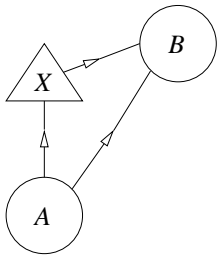}
  \caption{Causal scenario for describing measurement update rules, or
    quantum instruments.  $A$ represents the system before the
    measurement, $X$ is the measurement outcome and $B$ is the system
    after the measurement has been completed.\label{fig:Model:Inst}}
\end{figure}

\begin{example}
  Finally, consider a direct measurement again, but where the region
  of interest is the quantum output of the measurement rather than its
  input.  Let $A$ and $B$ denote the input and output respectively.
  Since these are distinct regions, they must be given distinct labels
  in the conditional states formalism, whereas conventionally they
  would be given the same label as they represent the same system at
  two different times.  The classical variable representing the
  outcome is $X$.  We are interested in how the state of $B$ should be
  updated upon learning the value of $X$.  The relevant causal
  structure is depicted in fig.~\ref{fig:Model:Inst}.  The causal
  arrow from $X$ to $B$ represents the fact that the post-measurement
  state can depend on the measurement outcome in addition to the
  pre-measurement state.

  In general, the rule for determining the state of the region after
  the measurement, given the state of the region before the measurement
  and the outcome, is not uniquely determined by the POVM associated
  with the measurement.  The most general possible rule is
  conventionally represented by a \emph{quantum instrument,} which is
  a set of trace-nonincreasing completely positive maps, $\{
  \mathcal{E}^{B|A}_x\}$.  The operation $\mathcal{E}^{B|A}_x$ maps a
  pre-measurement state $\rho_A$ to the unnormalized post-measurement
  state that should be assigned when the outcome is $x$,
  i.e. $\mathcal{E}^{B|A}_x(\rho_A) = P(X=x)\rho^B_x$, where $P(X=x)$
  is the probability of obtaining outcome $x$ and $\rho^B_x$ is the
  normalized post-measurement state.  This implies that
  if a measurement is associated with a POVM $\{ E^A_x \}$, then
  the quantum instrument must satisfy
  $\Tr[B]{\mathcal{E}^{B|A}_x(\rho_A)}=\Tr[A]{E^A_x \rho_A}$ for all
  input states $\rho_A$.

  It is not too difficult to see how to represent a quantum instrument
  in the conditional states formalism.  First, note that the
  measurement generates an ensemble of states for $B$, i.e.\ for each
  possible outcome $X=x$ there is a probability $P(X=x)$, given by the
  Born rule, and a corresponding state $\rho^B_x$ for $B$, which
  is the state that should be assigned to $B$ when the outcome $X=x$
  occurs.  We have already seen that an ensemble of states can be
  written as a joint state $\rho_{XB}$ of the hybrid region $XB$ via
  $\rho_{XB} = \sum_x P(X=x) \Ket{x}\Bra{x}_X \otimes \rho^B_x$.  What
  is needed then, is a way of determining a joint state $\rho_{XB}$ of
  $XB$, given a state $\rho_A$ of region $A$.  Perhaps unsurprisingly,
  this can be done by specifying a causal conditional state
  $\varrho_{XB|A}$ and using belief propagation to obtain $\rho_{XB} =
  \Tr[A]{\varrho_{XB|A} \rho_A}$.  The POVM that is
  measured by this procedure is given by the components of the
  conditional state $\varrho_{X|A} = \Tr[B]{\varrho_{XB|A}}$.  The
  precise relation between the instrument $\{ \mathcal{E}^{B|A}_x\}$
  and the causal conditional state $\varrho_{XB|A}$ is obtained
  through the Jamio{\l}kowski isomorphism and is described in
  \cite{Leifer2011a}.

  If you assign a prior state $\rho_A$ to the region before the
  measurement, and describe the quantum instrument implementing the
  measurement by $\varrho_{XB|A}$, then the ensemble of output states
  is described by $\rho_{XB}=\Tr[A]{\rho_{XB|A} \rho_A}$.  The
  marginal state $\rho_B = \Tr[X]{\rho_{XB}}$ is then your prior state
  for the output region and $\rho_X = \Tr[B]{\rho_{XB}}$ gives the
  Born rule probabilities for the measurement outcomes.  The states in
  the ensemble, $\rho_{B|X=x}$, can then be computed from the
  conditional $\rho_{B|X}=\rho_{XB} \rho_X^{-1}$.  Upon learning that
  $X=x$, you should update your beliefs about $B$ by Bayesian
  conditioning, i.e.\ by the rule $\rho_{B} \to \rho_{B|X=x}$.

  Note that Bayesian conditioning is \emph{not} a rule that maps your prior
  state about the measurement's input to your posterior state about the
  measurement's output, which would be a map of the form $\rho_A \rightarrow
  \rho_{B|X=x}$.  The projection postulate is an instance of this
  latter kind of update, but it is not an instance of Bayesian
  conditioning.  Bayesian conditioning is a map from prior states to
  posterior states of \emph{one and the same region}. The map
  $\rho_{B} \to \rho_{B|X=x}$, which takes the prior state of the
  measurement's output to the posterior state of the measurement's
  output \emph{is} an instance of quantum Bayesian conditioning.  In
  the conventional formalism it corresponds to a transition from the
  output of a non-selective state-update rule, which you would apply
  when you know that a measurement has occurred but not which outcome
  was obtained, to the output of the corresponding selective
  state-update rule, which applies when you do know the outcome.
\end{example}

\subsection{Conditioning on two classical variables}

\label{Model:Two}

The problems discussed in this paper concern inferences made by
multiple (typically two) agents based on different data.  Thus, we are
interested in conditioning a quantum region on the values
of more than one classical variable, which may or may not be known to
all the agents.

It is convenient to introduce a few more notational conventions to
handle such scenarios.  Since we are using letters to denote regions,
we use numbers to refer to agents.  Given that regions $A$ and $B$ are
prominent in our article, it is confusing to use the usual names Alice
and Bob for our numbered agents, so we refer to agent $1$ as \1 and
agent $2$ as \2. Occasionally, we will refer to a decision-maker, whom
we call \0, and for which we use the number $0$.  A classical variable
that agent $j$ learns during the course of their inference procedure
is denoted $X_j$. The quantum region about which the agents are making
inferences is denoted $B$, and, when making analogies between quantum
theory and classical probability theory, the classical variable
analogous to $B$ is denoted $Y$.  Any other auxiliary quantum regions
involved in setting up the causal scenario are denoted $A$ (or
$A_1,A_2,\ldots$ if there is more than one of them) and auxiliary
classical variables are denoted $Z$ (or $Z_1,Z_2,\ldots$ if there is
more than one of them).

Depending on the causal relations between the classical variables
$X_j$ and an elementary quantum region $B$, it is possible to construct
scenarios in which the available information about the quantum region
cannot be summed up by the assignment of a single state (positive
density operator) to the region.  For example, this is familiar in the
case of pre- and post-selected ensembles, which are described by a
pair of states rather than a single state in the formalism of Aharonov
et.\ al.\ \cite{Aharonov2008}.  Although our results apply to a much
wider variety of causal scenarios than those typically discussed in
the literature on compatibility, improvement, and pooling, we still do
not consider situations in which the region of interest has to be
described by a more exotic object than a single quantum state.  Of
course, a general quantum theory of Bayesian inference \emph{should}
be able to address such scenarios, but that is a topic for future
work.

Mathematically speaking, our results apply whenever the following
condition holds:
\begin{condition} \label{condition:jointdefined}
  The joint region consisting of the quantum region of interest, $B$,
  and all the classical variables involved in the
  inference procedure,  $X_1,X_2\ldots$, can be assigned a joint state
  $\rho_{BX_1X_2\ldots}$ (which may be either an acausal or a causal
  state).
\end{condition}

Consider the case of two classical variables, $X_1$ and $X_2$, and
suppose that a joint state $\rho_{BX_1X_2}$ exists.  From this, one
can compute the reduced states $\rho_B$, $\rho_{X_1}$ and
$\rho_{X_2}$, and the joint states $\rho_{BX_1}$, $\rho_{BX_2}$ and
$\rho_{X_1X_2}$.  From these, one can easily compute the conditional
states $\rho_{B|X_1}$, $\rho_{B|X_2}$ and $\rho_{B|X_1X_2}$.  If \1
learns that $X_1=x_1$ then she updates $\rho_B$ to her posterior state
$\rho_{B|X_1=x_1}$, and if \2 learns that $X_2=x_2$ then he updates
$\rho_B$ to his posterior state $\rho_{B|X_2=x_2}$.  An agent who
learns both outcomes would update to $\rho_{B|X_1=x_1,X_2=x_2}$.  The
existence of the joint state $\rho_{X_1X_2B}$ ensures that all the
posterior states $\rho_{B|X_1=x_1}$, $\rho_{B|X_2=x_2}$ and
$\rho_{B|X_1=x_1,X_2=x_2}$ are well defined.  Similar comments apply
when there are more than two classical variables.

In the remainder of this section, we give several examples of causal
scenarios in which this condition does apply, in order to emphasize
the generality of our results, and we provide some examples where it
does not, to clarify the limitations to their applicability.  All the
examples involve inferences about a quantum region $B$ based on two
classical variables $X_1$ and $X_2$.

\subsubsection{Examples of causal scenarios in which a joint state can be
  assigned}

\begin{figure}
  \subfloat[][]{
    \quad \includegraphics[scale=0.3]{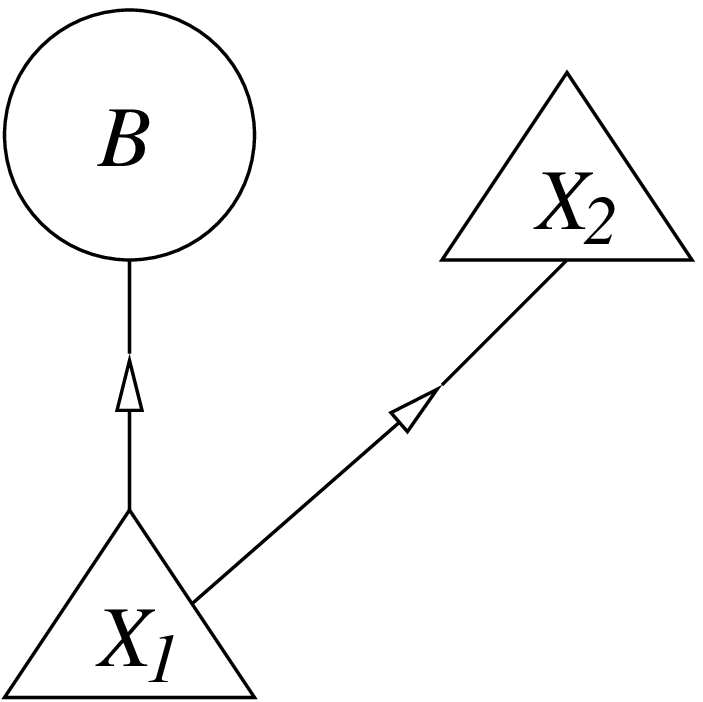} \quad}
  \subfloat[][]{
    \quad \includegraphics[scale=0.3]{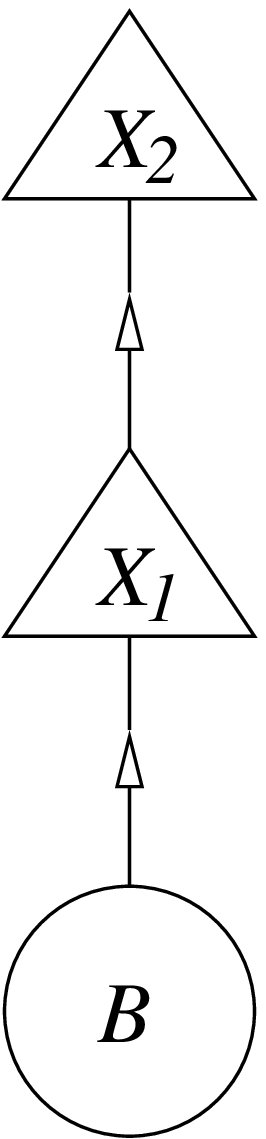} \quad}
  \subfloat[][]{
    \quad \includegraphics[scale=0.3]{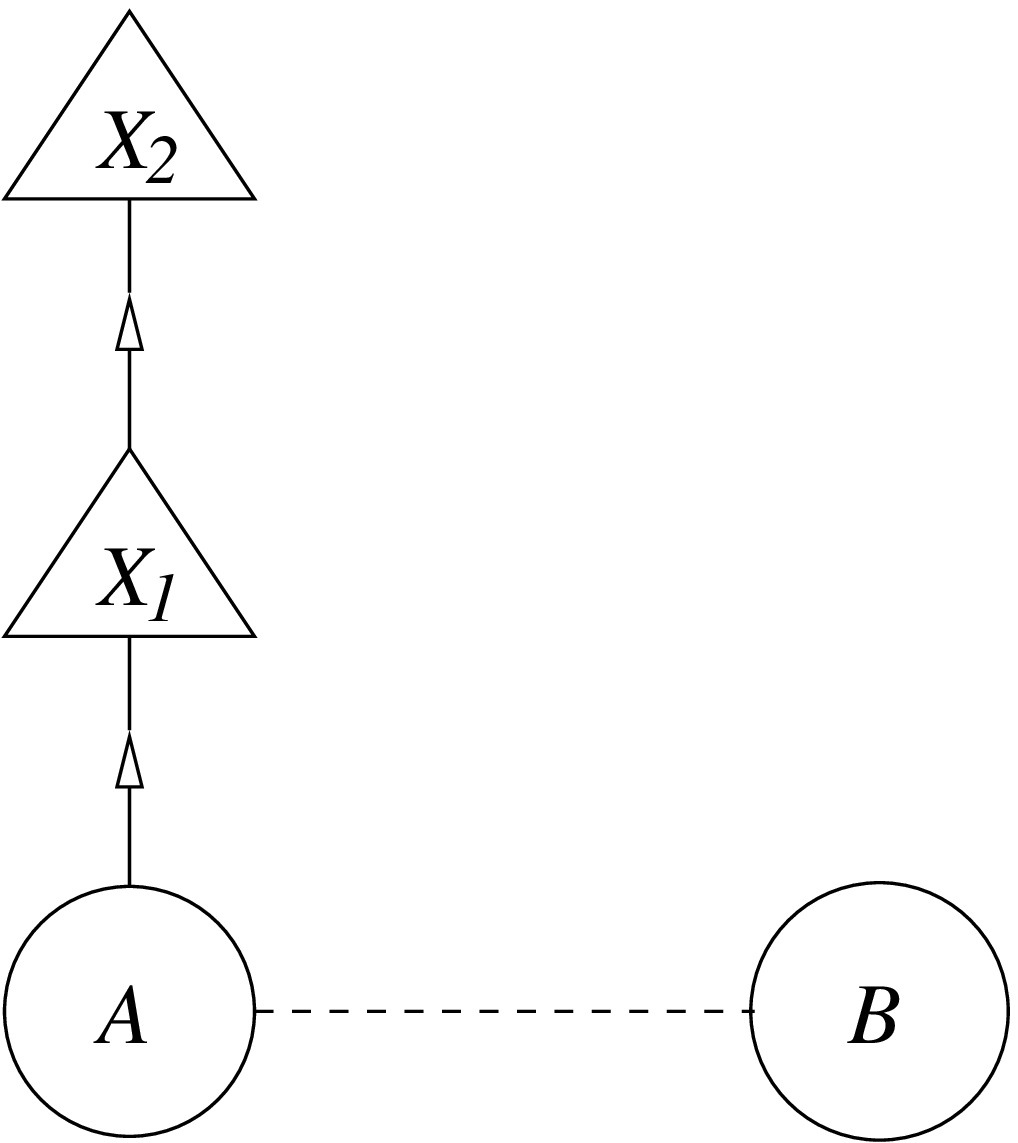} \quad}
  \caption{Introducing an extra classical variable to the causal
    scenarios depicted in fig.~\ref{fig:Model:Hybrid} via
    post-processing.\label{fig:Model:Suff}}
\end{figure}

\begin{example}
  Perhaps the simplest class of causal scenarios in which a joint
  state can be assigned are those in which the second variable $X_2$
  is obtained via a post-processing of the variable $X_1$, i.e. $X_2$
  is obtained from $X_1$ via conditional probabilities $P(X_2|X_1)$,
  or equivalently a classical conditional state $\rho_{X_2|X_1}$.
  Only $X_1$ is directly related to the quantum region $B$ and any
  correlations between $X_2$ and $B$ are mediated by $X_1$.  Examples
  of this sort of causal scenario are depicted in
  fig.~\ref{fig:Model:Suff}.

  In all these scenarios, we already know from \S\ref{Model:Single}
  that $BX_1$ can be assigned a joint state $\rho_{BX_1}$ and then the
  joint state of $BX_1X_2$ is just
  \begin{equation}
    \rho_{BX_1X_2} = \rho_{X_2|X_1} \rho_{BX_1},
  \end{equation}
  so condition \ref{condition:jointdefined} is satisfied.  These
  examples are important because they imply that arbitrary classical
  processing may be performed on a classical variable without changing
  our ability to assign a joint state.  In particular, this is used in
  \S\ref{Suff} where hybrid sufficient statistics are defined as a
  kind of processing of a classical data variable.
\end{example}

\begin{figure}[htb]
  \includegraphics[scale=0.4]{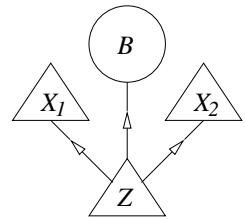}
  \caption{\label{fig:Model:TwoPrep}\1 and \2 learn variables that are
    correlated with a variable used to prepare region $B$.}
\end{figure}

\begin{example}
  Consider a generalization of the preparation scenario depicted in
  fig.~\ref{fig:Model:Prep} to the scenario depicted in
  fig.~\ref{fig:Model:TwoPrep}, which adds two further classical
  variables that depend on the preparation variable.  In this
  scenario, a classical random variable $Z$ is sampled from a
  probability distribution $P(Z)$ and, upon obtaining the outcome
  $Z=z$, a region $B$ is prepared in the state $\rho_{B|Z=z}$.  Some
  data about $Z$ is revealed to the two agents: $X_1$ to \1, and $X_2$
  to \2.  $X_1$ and $X_2$ may be coarse-grainings of $Z$, or they may
  even depend on $Z$ stochastically.  For example, if $Z$ is the
  outcome of a dice roll, then $X_1$ and $X_2$ could both be binary
  variables, with $X_1$ indicating whether $Z$ is odd or even and
  $X_2$ indicating whether it is $\leq 3$.  Generally, the dependence
  of $X_1$ and $X_2$ on $Z$ is given by classical conditional states
  $\rho_{X_1|Z}$ and $\rho_{X_2|Z}$.  A joint state for $X_1X_2B$ can
  be defined in this case via
  \begin{equation}
    \rho_{X_1X_2B} = \Tr[Z]{\rho_{X_1|Z} \rho_{X_2|Z} \rho_{B|Z} \rho_Z},
  \end{equation}
  so, again,condition \ref{condition:jointdefined} is satisfied.
\end{example}

\begin{figure}[htb]
  \includegraphics[scale=0.4]{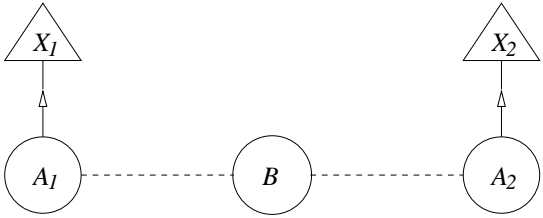}
  \caption{\label{fig:Model:TwoRemote}\1 and \2 learn about $B$ by
    making measurements on two acausally related regions $A_1$ and
    $A_2$.}
\end{figure}

\begin{example}
  Consider the generalization of the remote measurement scenario
  depicted in fig.~\ref{fig:Model:Remote} to \emph{a pair} of remote
  measurements, as depicted in fig.~\ref{fig:Model:TwoRemote}.  This
  scenario is in fact the one that is adopted in much of the
  literature on compatibility and pooling \cite{Brun2002, Brun2002a,
    Spekkens2007}.  The region of interest, $B$, is acausally related
  to two other quantum regions, $A_1$ and $A_2$, so we have a
  tripartite state $\rho_{A_1A_2B}$.  Direct measurements are made on
  $A_1$ and $A_2$, with outcomes $X_1$ and $X_2$ respectively, and
  which are described by the conditional states $\rho_{X_1|A_1}$ and
  $\rho_{X_2|A_2}$ respectively. It is assumed that \1 learns only
  $X_1$ and \2 learns only $X_2$.  In this case, we can define a
  tripartite acausal state by
  \begin{equation}
    \rho_{X_1X_2B} =\Tr[A_1A_2]{\rho_{X_1|A_1}\rho_{X_2|A_2}\rho_{A_1A_2B}}.
  \end{equation}
\end{example}

\begin{figure}[htb]
  \includegraphics[scale=0.4]{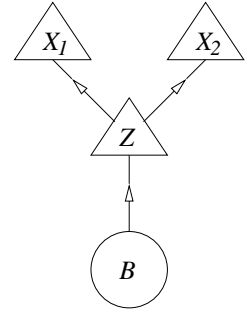}
  \caption{\label{fig:Model:TwoMeas}\1 and \2 learn variables derived
    from a direct measurement made on region $B$.}
\end{figure}

\begin{example}
  Consider a generalization of the direct measurement scenario
  depicted in fig.~\ref{fig:Model:Meas} to the scenario of
  fig.~\ref{fig:Model:TwoMeas}, which introduces two further classical
  variables that depend on the measurement result.  This is similar to
  the second example considered in this section except that, rather
  than $Z$ being used to prepare $B$, it is now obtained by making a
  direct measurement on $B$, described by the conditional state
  $\rho_{Z|B}$.  As before, some information about $Z$ is distributed
  to each agent, specifically, variables $X_1$ and $X_2$ to \1 and \2
  respectively.  The dependence of $X_1$ and $X_2$ on $Z$ is again
  described by conditional states $\rho_{X_1|Z}$ and $\rho_{X_2|Z}$.
  In this case, a joint state $\rho_{X_1X_2B}$ can be defined as
  \begin{equation}
    \rho_{X_1X_2B} = \Tr[Z]{\rho_{X_1|Z}\rho_{X_2|Z} \left [ \rho_{Z|B} \Sprod
        \rho_B \right ]},
  \end{equation}
  and conditioning on values of the classical variables yields states
  that are relevant for retrodiction.
\end{example}

\begin{figure}[htb]
  \includegraphics[scale=0.4]{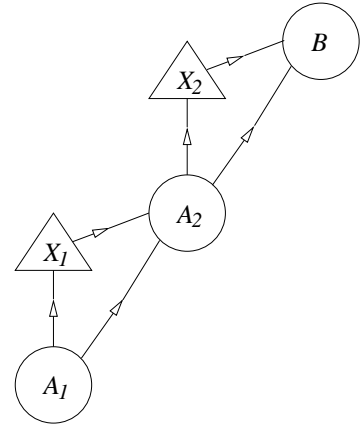}
  \caption{\label{fig:Model:Jacobs}\1 and \2 learn the results of two
    measurements preformed in sequence.}
\end{figure}

\begin{example}
  Consider a generalization of the measurement scenario depicted in
  fig.~\ref{fig:Model:Inst} to a case where a pair of measurements are
  implemented in succession, as depicted in
  fig.~\ref{fig:Model:Jacobs}.  This scenario has been considered in
  the context of compatibility and pooling by Jacobs
  \cite{Jacobs2002}, as discussed in \S\ref{Compat:Jacobs}.  The input
  region of the first measurement is denoted $A_1$. The output of the
  first measurement, which is also the input of the second, is denoted
  $A_2$, and the output of the second measurement, which is the region
  about which \1 and \2 seek to make inferences, is denoted by $B$.
  The classical variables describing the outcomes of the two
  measurements are denoted $X_1$ and $X_2$ respectively, and it is
  assumed that \1 learns $X_1$ while \2 learns $X_2$.

  Suppose that \1 and \2 agree on the input state $\rho_{A_1}$ and on
  the causal conditional states, $\varrho_{X_1A_2|A_1}$ and
  $\varrho_{X_2B|A_2}$, that describe the measurements.  A joint state
  can then be assigned to $X_1X_2B$ via
  \begin{equation}
    \label{eq:Model:Jacobs}
    \rho_{X_1X_2B}=\Tr[A_1A_2]{\varrho_{X_2B|A_2} \varrho_{X_1A_2|A_1}
      \rho_{A_1}}.
  \end{equation}
  The interpretation of eq.~\eqref{eq:Model:Jacobs} is that the two
  consecutive measurements can be thought of as a preparation
  procedure for $B$ that prepares the states in the ensemble
  $\rho_{X_1X_2B}$ depending on the values of $X_1$ and $X_2$.
\end{example}

These examples should serve to give an idea of the type of scenarios
to which our results apply.

\subsubsection{Examples of causal scenarios in which a joint state
  cannot be assigned}

\begin{figure}[htb]
  \includegraphics[scale=0.4]{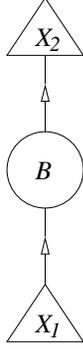}
  \caption{\label{fig:Model:PrepMeas}\1 learns a preparation variables
    and \2 learns a measurement variable.  Learning both variables
    gives a pre- and post-selected ensemble.}
\end{figure}

\begin{example}
  Consider the prepare-and-measure scenario depicted in
  fig.~\ref{fig:Model:PrepMeas}.  Here, $B$ is prepared in a state
  depending on the preparation variable $X_1$ and then $B$ is
  measured, resulting in the outcome $X_2$.  More concretely, consider
  the case where $B$ is a qubit prepared in the $\left \{ \Ket{0}_B,
    \Ket{1}_B \right \}$ basis and measured in the $\left \{
    \Ket{+}_B,\Ket{-}_B \right \}$ basis, where $\Ket{\pm} =
  \frac{1}{\sqrt{2}} \left ( \Ket{0} \pm \Ket{1} \right )$.  Suppose
  that $X_2$ takes the value $X_2=0$ for $\Ket{+}_B$ and $X_2=1$ for
  $\Ket{-}_B$.  Although it is possible to assign joint states to
  $X_1B$ and to $X_2B$, the conditional states that these assignments
  imply are not compatible with any joint state for $X_1X_2B$.

  To see this, note that the joint states for $X_1B$ and $X_2B$ have
  to be of the form
  \begin{multline}
    \rho_{X_1B} = P(X_1=0) \Ket{0}\Bra{0}_{X_1} \otimes
    \Ket{0}\Bra{0}_B \\ + P(X_1=1) \Ket{1}\Bra{1}_{X_1} \otimes
    \Ket{1}\Bra{1}_B
  \end{multline}
  \begin{multline}
    \rho_{X_2B} = P(X_2=0) \Ket{0}\Bra{0}_{X_2} \otimes
    \Ket{+}\Bra{+}_B \\ + P(X_2=1) \Ket{1}\Bra{1}_{X_2} \otimes
    \Ket{-}\Bra{-}_B,
  \end{multline}
  where $P(X_1)$ is the distribution of the preparation variable and
  $P(X_2)$ is the Born rule probability distribution for the outcomes
  of the measurement.

  Then, $\rho_{B|X_1=x_1}$ is a definite state in the $\left \{
    \Ket{0}_B, \Ket{1}_B \right \}$ basis and $\rho_{B|X_2=x_2}$ is a
  definite state in the $\left \{ \Ket{+}_B,\Ket{-}_B \right \}$
  basis.  Any putative $\rho_{B|X_1=x_1,X_2=x_2}$, derived from a
  joint state of all three regions, would then have to have definite
  values for measurements in both the $\left \{ \Ket{0}_B, \Ket{1}_B
  \right \}$ basis and in the $\left \{ \Ket{+}_B,\Ket{-}_B \right \}$
  basis.  There is no state with this property because these are
  complimentary observables.

  Conditioning on both $X_1=x_1$ and $X_2=x_2$ represents a case of
  pre- and post-selection, and, as argued by Aharonov et.\ al.\
  \cite{Aharonov2008}, the concept of a quantum state has to be
  generalized in order to handle such cases.
\end{example}

\begin{figure}[htb]
  \includegraphics[scale=0.4]{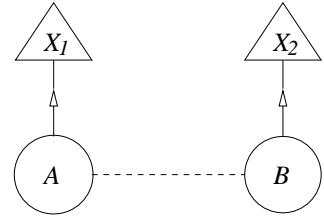}
  \caption{\label{fig:Model:DirectRemote}Learning both the outcome of
    a direct measurement and the outcome of a remote measurement.}
\end{figure}

\begin{example}
  Consider two acausally related quantum regions $A$ and $B$.  Here,
  $B$ is the region of interest, but direct measurements are made on
  both $A$ and $B$, resulting in the classical variables $X_1$ and
  $X_2$ respectively.  This is depicted in
  fig.~\ref{fig:Model:DirectRemote}.  Formally, this is very similar
  to pre- and post-selection and a joint state of $X_1X_2B$ is ruled
  out for similar reasons.

  Suppose that $A$ and $B$ are qubits and that $\rho_{AB} =
  \Ket{\Psi^-}\Bra{\Psi^-}_{AB}$ is a singlet state, where
  $\Ket{\Psi^-}_{AB} = \frac{1}{\sqrt{2}} \left ( \Ket{01} - \Ket{10}
  \right )_{AB}$.  If $X_1$ is the result of a measurement of $A$ in
  the $\left \{ \Ket{0}_A, \Ket{1}_A \right \}$ basis and $X_2$ is the
  result of measuring $B$ in the $\left \{ \Ket{+}_{B},\Ket{-}_{B}
  \right \}$ basis then, as before, the state $\rho_{B|X_1=x_1}$ would
  have to be a definite state in the $\left \{ \Ket{0}_B, \Ket{1}_B
  \right \}$ basis and $\rho_{B|X_2=x_2}$ would have to be a definite
  state in the $\left \{ \Ket{+}_B,\Ket{-}_B \right \}$ basis.  The
  putative joint state would then have to have a conditional with
  components $\rho_{B|X_1=x_1,X_2=x_2}$ that are definite in both
  bases, which is not possible in the formalism as it currently stands.
\end{example}

\section{Compatibility of quantum states}

\label{Compat}

This section describes our Bayesian approach to the compatibility of
quantum states.  We give alternative derivations of the BFM
compatibility criterion from the point of view of objective and
subjective Bayesianism.  In each case, we begin by reviewing the
corresponding argument in the classical case in order to build
intuition, and draw out the parallels to the quantum case using the
conditional states formalism.

\subsection{Objective Bayesian compatibility}

\label{Compat:Obj}

First consider compatibility for a classical random variable $Y$. For
the objective Bayesian, the only way that two agents' probability
assignments can differ is if they have had access to different data,
so suppose \1 learns the value of a random variable $X_{1}$ and \2
learns the value of a different random variable $X_{2}$. According to
the objective Bayesian, there is a unique prior probability
distribution $P(Y,X_{1},X_{2})$ that both \1 and \2 ought to
initially assign to the three variables before they have observed the
values of the $X_{j}$'s. Both \1 and \2's prior distribution for
$Y$ alone is simply the marginal $P(Y) = \sum_{X_{1},X_{2}}
P(Y,X_{1},X_{2})$. Upon observing a particular value $x_{j}$ of $X_{j}$,
\1 and \2 update to their posterior distributions $P(Y|X_{j} =
x_{j})$.

Now suppose that we don't know the details of how \1 and \2 arrived at
their probability assignments and we are simply told that, at some
specific point in time, \1 assigns some distribution $Q_{1}(Y)$ to $Y$
and \2 assigns a distribution $Q_{2}(Y)$ (different from $Q_{1}(Y)$ in
general). For the objective Bayesian, this can only arise in the
manner described above, so the notion of compatibility is defined as
follows.

\begin{definition}[Classical objective Bayesian compatibility]
  \label{def:Compat:Obj}
  Two probability distributions $Q_{1}(Y)$ and $Q_{2}(Y)$ are
  \emph{compatible} if it is possible to construct a pair of random
  variables, $X_{1}$ and $X_{2}$, and a joint distribution
  $P(Y,X_{1},X_{2})$ such that $Q_{1}(Y)$ can be obtained by Bayesian
  conditioning on $X_{1}=x_{1}$ for some value $x_{1},$ and $Q_{2}(Y)$
  can be obtained by Bayesian conditioning on $X_{2}=x_{2}$ for some
  value $x_{2},$ that is,
  \begin{equation}
    Q_{j}(Y)=P\left(  Y|X_{j}=x_{j}\right)
  \end{equation}
  for some values $x_{j}$ of $X_{j}$.  Further, we require that
  $P(X_1=x_1,X_2=x_2) \neq 0$ so that there is a possibility for both
  outcomes to be obtained simultaneously.
\end{definition}

This definition of compatibility is equivalent to the requirement that
the supports of $Q_1(Y)$ and $Q_2(Y)$ have nontrivial intersection,
where the support of a probability distribution $P(Y)$ is defined as
\textrm{supp}$\left[ P(Y)\right] \equiv\left\{ y\mid P\left(
    Y=y\right) >0\right\}$.

\begin{theorem}
  \label{thm:Compat:Obj}
  Two distributions $Q_{1}(Y)$ and $Q_{2}(Y)$ satisfy
  definition~\ref{def:Compat:Obj}, i.e.\ they are compatible in the
  objective Bayesian sense, iff they share some common support, i.e.
  \begin{equation}
    \emph{supp}\left[  Q_{1}(Y)\right]  \cap \emph{supp} \left[
      Q_{2}(Y)\right]  \neq \emptyset.
  \end{equation}
\end{theorem}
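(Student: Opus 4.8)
The plan is to prove the two implications of the biconditional separately, treating the common-support condition as the target characterization of Definition~\ref{def:Compat:Obj}.

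For the forward direction (compatibility $\Rightarrow$ common support), I would start from a witnessing joint distribution $P(Y,X_1,X_2)$ together with values $x_1,x_2$ satisfying $Q_j(Y)=P(Y|X_j=x_j)$ and $P(X_1=x_1,X_2=x_2)\neq 0$. Since the joint probability of the outcome pair is strictly positive, marginalizing over $Y$ produces at least one value $y$ with $P(Y=y,X_1=x_1,X_2=x_2)>0$. For this $y$, both $P(Y=y,X_1=x_1)$ and $P(Y=y,X_2=x_2)$ are bounded below by this positive number, while $P(X_1=x_1)$ and $P(X_2=x_2)$ are themselves positive, so $Q_1(Y=y)=P(Y=y|X_1=x_1)>0$ and $Q_2(Y=y)=P(Y=y|X_2=x_2)>0$. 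Hence $y\in\mathrm{supp}[Q_1(Y)]\cap\mathrm{supp}[Q_2(Y)]$ and the intersection is nonempty. This direction is routine.

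The substance is the converse (common support $\Rightarrow$ compatibility), which I would establish by an explicit construction. Assume some $y^*$ with $Q_1(Y=y^*)>0$ and $Q_2(Y=y^*)>0$. I would build a joint distribution over $Y$ and two binary variables $X_1,X_2\in\{0,1\}$, aiming to recover $Q_1$ by conditioning on $X_1=0$ and $Q_2$ by conditioning on $X_2=0$ while keeping $P(X_1=0,X_2=0)>0$. The idea is to concentrate all the weight of the doubly conditioned cell on $y^*$: set $P(Y=y,X_1=0,X_2=0)=\lambda\,\delta_{y,y^*}$ for a small $\lambda>0$, and define the remaining two ``mixed'' cells by $P(Y=y,X_1=0,X_2=1)=Q_1(y)c_1-\lambda\delta_{y,y^*}$ and $P(Y=y,X_1=1,X_2=0)=Q_2(y)c_2-\lambda\delta_{y,y^*}$, where $c_1,c_2$ are the intended marginals $P(X_1=0)$ and $P(X_2=0)$. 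By design the $X_1=0$ slice marginalizes to $Q_1(y)c_1$ and the $X_2=0$ slice to $Q_2(y)c_2$, so both conditioning constraints hold automatically, and the leftover mass can be placed on the $(X_1=1,X_2=1)$ cell at $y^*$.

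The main obstacle is verifying that this table is a genuine probability distribution, i.e.\ that every cell is nonnegative and the total is one. Nonnegativity of the two mixed cells forces $\lambda\le Q_1(y^*)c_1$ and $\lambda\le Q_2(y^*)c_2$, which is precisely where the common-support hypothesis $Q_1(y^*),Q_2(y^*)>0$ enters: both bounds are satisfiable by a strictly positive $\lambda$. Normalization leaves the $(1,1)$ cell with total mass $1-c_1-c_2+\lambda$, so I would take $c_1,c_2$ small enough (for instance $c_1=c_2=1/4$) to make this residual nonnegative, and then choose $\lambda\in(0,\min(Q_1(y^*)c_1,Q_2(y^*)c_2)]$. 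With these choices all four cells are nonnegative, the distribution is normalized, $P(X_1=0,X_2=0)=\lambda>0$, and conditioning on $X_1=0$ and $X_2=0$ returns $Q_1$ and $Q_2$ exactly, so Definition~\ref{def:Compat:Obj} is satisfied with $x_1=x_2=0$. I expect the only delicate point to be tracking the nonnegativity constraints simultaneously; once the parameter ranges above are fixed, the conclusion is immediate.
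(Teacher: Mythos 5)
Your proof is correct, and while the forward direction matches the paper's argument (the paper merely packages your direct observation---that positivity of $P(X_1=x_1,X_2=x_2)$ forces a $y$ with $P(Y=y,X_1=x_1,X_2=x_2)>0$, hence $y$ lies in both supports---into lemma~\ref{lem:Compat:DistSupp} applied to the doubly conditioned distribution), the converse uses a genuinely different construction. The paper first decomposes $Q_1(Y)=p_1Q_0(Y)+(1-p_1)Q_1'(Y)$ and $Q_2(Y)=p_2Q_0(Y)+(1-p_2)Q_2'(Y)$ with a common normalized component $Q_0$ supported in the intersection, assigns $Q_0$, $Q_1'$, $Q_2'$ and an arbitrary $N$ to the four cells of $P(Y|X_1,X_2)$, and places a \emph{product} distribution with weights built from $p_1,p_2$ on $(X_1,X_2)$. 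You instead fix a single point $y^*$ in the common support and write the joint table directly, putting mass $\lambda\delta_{y,y^*}$ in the $(0,0)$ cell and subtracting it from the two mixed cells, so your marginal on $(X_1,X_2)$ is generally not a product. Your bookkeeping is right: the constraints $\lambda\le Q_j(y^*)c_j$ (where the common-support hypothesis enters) and $1-c_1-c_2+\lambda\ge 0$ are exactly what is needed, a mixed cell of total mass zero is harmless, and $P(X_1=0,X_2=0)=\lambda>0$ as required. Your construction is essentially the special case of the paper's in which $Q_0$ is the point mass at $y^*$, rearranged at the level of the joint table; it is more concrete and makes the nonnegativity checks explicit. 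What the paper's convex-decomposition phrasing buys is that it transfers verbatim to the quantum setting of theorem~\ref{thm:Compat:QObj}, where $Q_0$ becomes a density operator $\mu_B$ supported in the geometric intersection of the supports---your point-mass version also quantizes (take $\mu_B$ to be a pure state in the intersection), but the decomposition form makes the classical--quantum parallel, which is the organizing principle of the paper, cleaner.
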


The proof makes use of the following lemma.

\begin{lemma}
  \label{lem:Compat:DistSupp}
  If a probability distribution $P(X,Y)$ satisfies $P(X=x) \neq 0$
  then $\text{supp} \left[ P(Y|X=x)\right] \subseteq \text{supp}
  \left[ P(Y)\right]$.
\end{lemma}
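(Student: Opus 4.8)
The plan is to prove this directly from the definition of support and the formula for conditional probability. This is a very elementary lemma — the kind of statement that's almost immediate once you unpack the definitions — so the main task is simply to write out the logical implication cleanly rather than to find any clever trick.

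First I would recall that $\text{supp}\left[P(Y)\right] = \{y \mid P(Y=y) > 0\}$, and similarly $\text{supp}\left[P(Y|X=x)\right] = \{y \mid P(Y=y|X=x) > 0\}$. To show the claimed inclusion, I would take an arbitrary $y \in \text{supp}\left[P(Y|X=x)\right]$ and show that $y \in \text{supp}\left[P(Y)\right]$. By definition this means assuming $P(Y=y|X=x) > 0$ and deducing $P(Y=y) > 0$.

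The key step is the chain rule: since $P(X=x) \neq 0$, the conditional probability is well defined and we have
\begin{equation}
  P(Y=y, X=x) = P(Y=y|X=x)\, P(X=x).
\end{equation}
If $y$ is in the support of $P(Y|X=x)$, then $P(Y=y|X=x) > 0$, and combined with the hypothesis $P(X=x) > 0$ this forces $P(Y=y,X=x) > 0$. The final step is to marginalize: since $P(Y=y) = \sum_{x'} P(Y=y, X=x') \geq P(Y=y, X=x) > 0$, where the inequality follows because every term in the sum is nonnegative and we have identified one strictly positive term. Hence $P(Y=y) > 0$, so $y \in \text{supp}\left[P(Y)\right]$.

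Since $y$ was arbitrary, the inclusion $\text{supp}\left[P(Y|X=x)\right] \subseteq \text{supp}\left[P(Y)\right]$ follows, completing the proof. There is no real obstacle here — the only thing to be careful about is invoking the hypothesis $P(X=x) \neq 0$ at the right place, since without it the conditional distribution $P(Y|X=x)$ would be undefined and the statement would be vacuous or ill-posed. The nonnegativity of probabilities, which guarantees that a marginal sum is bounded below by any single one of its terms, is the other ingredient that makes the final implication go through.
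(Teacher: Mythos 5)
Your proof is correct and is essentially the same argument as the paper's: both rest on the chain rule $P(Y=y,X=x)=P(Y=y|X=x)P(X=x)$, nonnegativity of the joint distribution, and marginalization over $X$. The only difference is that you argue directly (a point in the support of $P(Y|X=x)$ lies in the support of $P(Y)$) whereas the paper argues the contrapositive via kernels (a point in $\ker[P(Y)]$ lies in $\ker[P(Y|X=x)]$), which is a purely cosmetic distinction.
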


\begin{proof}
  The condition $P(X=x) \neq 0$ implies that $P(Y=y|X=x)$ is well
  defined for all $y$.  Let $\ker \left [ P(Y) \right ] = \left \{ y
    \mid P(Y=y) = 0 \right \}$, i.e. $\ker \left [ P(Y) \right ]$ is
  the complement of $\text{supp} \left [ P(Y) \right ]$.  Let $y \in
  \ker \left [ P(Y) \right ]$.  Since $P(Y=y)=0$, we have $\sum_{x'}
  P(Y=y,X=x')=0$, which implies that $P(Y=y,X=x')=0$ for every value
  $x'$ and consequently that $P(Y=y|X=x)=0$.  In other words,
  $y\in\ker\left[ P\left( Y\right) \right] $ implies $y\in\ker\left[
    P\left( Y|X=x\right) \right] $, which means that $\ker\left[
    P\left( Y\right) \right] \subseteq\ker\left[ P\left(Y|X=x\right)
  \right] $, or equivalently $\text{supp} \left[ P(Y|X=x)\right]
  \subseteq \text{supp}\left[ P(Y)\right]$.
\end{proof}

\begin{proof}[Proof of theorem~\ref{thm:Compat:Obj}]\mbox{\phantom{M}}\vspace{1em} \\
  \textbf{The ``only if'' half}: \\
  It is given that there is a joint distribution $P \left (
    Y,X_{1},X_{2}\right ) $ such that $Q_{j}(Y)=P\left( Y|X_j =x_j
  \right ) $.  Since $P(X_1=x_1,X_2=x_2) \neq 0$,
  $P(Y|X_1=x_1,X_2=x_2)$ exists and, by
  lemma~\ref{lem:Compat:DistSupp}, it must satisfy
  \begin{multline}
    \text{supp}\left[ P\left( Y|X_{1}=x_{1},X_{2}=x_{2}\right) \right]
    \subseteq \\ \text{supp}\left[  P\left(  Y|X_{1}=x_{1}\right)
    \right]
  \end{multline}
  \begin{multline}
    \text{supp}\left[ P\left( Y|X_{1}=x_{1},X_{2}=x_{2}\right) \right]
    \subseteq \\ \text{supp}\left[ P\left( Y|X_{2}=x_{2}\right)
    \right].
  \end{multline}
  Since every probability distribution has nontrivial support,
  $\text{supp} [P\left( Y|X_{1}=x_{1},X_{2}=x_{2}\right)] \neq
  \emptyset$, so
  \begin{equation}
    \text{supp}\left[  P\left(  Y|X_{1}=x_{1}\right)  \right]  \cap\text{supp}
    \left[  P\left(  Y|X_{2}=x_{2}\right)  \right]  \neq \emptyset.
  \end{equation}

  \noindent\textbf{The ``if'' half}:\\
  Given that $Q_{1}(Y)$ and $Q_{2}(Y)$ have intersecting support, one
  can find a normalized probability distribution $Q_0 \left( Y\right)
  $ such that
  \begin{align}
    Q_{1}(Y) & = p_{1} Q_0 \left( Y\right) +\left( 1-p_{1}\right)
    Q'_{1}(Y), \label{eq:Compat:jj1}\\
    Q_{2}(Y) & = p_{2} Q_0 \left( Y\right) +\left( 1-p_{2}\right)
    Q'_{2}(Y), \label{eq:Compat:jj2}
  \end{align}
  where $Q'_{1}(Y)$ and $Q'_{2}(Y)$ are each normalized probability
  distributions and $0<p_{1},p_{2}\leq 1$.

  This decomposition can be used to construct two random variables,
  $X_1$ and $X_2$, and a joint distribution $P\left(
    Y,X_{1},X_{2}\right) $ such that $P(X_1=x_1,X_2=x_2) \neq 0$ and
  $Q_{j}(Y)=P\left( Y|X_{j}=x_{j}\right)$ for some values $x_1$ and
  $x_2$.  Let $X_{1}$ and $X_{2}$ be bit-valued variables that take
  values $\{0,1\}$, and define
  \begin{align}
    P\left(  Y|X_{1}=0,X_{2}=0\right) & = Q_0(Y) \\
    P\left(  Y|X_{1}=0,X_{2}=1\right) & = Q'_1(Y) \\
    P\left(  Y|X_{1}=1,X_{2}=0\right) & = Q'_2(Y).
  \end{align}
  The result of conditioning on $(X_{1}=1,X_{2}=1)$ can be taken to be
  an arbitrary distribution, denoted by $N(Y)$, i.e.
  \begin{equation}
    P\left(  Y|X_{1}=1,X_{2}=1\right)  =N(Y).
  \end{equation}
  Next, define the following distribution over $X_{1}$ and $X_{2}$:
  \begin{align}
    P\left ( X_1 = 0, X_2 = 0 \right ) &  = p_{1}p_{2} \\
    P\left ( X_1 = 0, X_2 = 1 \right ) & = (1-p_{1})p_{2} \\
    P\left ( X_1 = 1, X_2 = 0 \right ) & = p_{1} \left ( 1-p_{2}
    \right ) \\
    P\left ( X_1 = 1, X_2 = 1 \right ) & = \left ( 1-p_{1} \right )
    \left ( 1-p_{2} \right ).
  \end{align}

  Using these, we can define $P(Y,X_1,X_2) = P(Y|X_1,X_2)P(X_1,X_2)$.
  It is straightforward to verify that this satisfies $P(X_1=0,X_2=0)
  = p_1p_2 >0$ and that $P\left( Y|X_{1}=0\right) $ and $P\left(
    Y|X_{2}=0\right) $ are equal to the right-hand sides of
  eqs.~\eqref{eq:Compat:jj1} and \eqref{eq:Compat:jj2}.  Consequently,
  they are equal to $Q_{1}(Y)$ and $Q_{2}(Y)$ respectively.
\end{proof}

The ``only if'' part of the proof of theorem~\ref{thm:Compat:Obj}
establishes that intersecting supports is a necessary requirement for
objective Bayesian state assignments.  On the other hand, the ``if''
part only establishes sufficiency for causal scenarios that support
generic joint states.  For a given causal scenario, i.e.\ a given set
of causal relations holding among $Y$, $X_1$ and $X_2$, there may be
restrictions on the joint probability distributions that can arise.
As an extreme example, if the causal structure is such that the
composite variable $YX_1$ and the elementary variable $X_2$ are
neither connected by some direct or indirect causal influence, nor
connected by a common cause, then they will be statistically
independent and the joint distribution will factorize as
$P(Y,X_1,X_2)=P(Y,X_1)P(X_2)$.  Under such a restriction, there are
certain pairs of states $Q_1(Y)$ and $Q_2(Y)$ that have intersecting
support, but \1 and \2 could never come to assign them by conditioning
on $X_1$ and $X_2$.  For instance, in the example just mentioned,
$Q_2(Y)$ must be equal to the prior over $Y$ and consequently, by
lemma~\ref{lem:Compat:DistSupp}, the only pairs $Q_1(Y)$ and $Q_2(Y)$
that can arise by such conditioning are pairs for which the support of
$Q_1(Y)$ is contained in that of $Q_2(Y)$. Therefore, not every pair
of compatible state assignments will arise in a given causal scenario.
On the other hand, in ``generic'' scenarios wherein the causal
structure does not force any conditional independences in the joint
distribution over $Y$, $X_1$ and $X_2$, the ``if'' part of the proof
does establish that any pair of states with intersecting support can
arise as the state assignments of a pair of objective Bayesian agents.

Turning now to the quantum case, consider a quantum region $B$ with
Hilbert space $\Hilb[B]$. For the objective Bayesian the only way that
two agents' state assignments to $B$ can differ is if they have access
to different data.  We represent this data by two random variables
$X_{1}$ and $X_{2}$, where \1 has access to $X_{1}$ and \2 has access
to $X_{2}$. Assume that the causal scenario of the experiment can be
described by a joint state on the hybrid region $BX_1X_2$, as
discussed in \S\ref{Model:Two}.

Given that this is an objective Bayesian approach, before \1 and \2
observe the values of the $X_{j}$'s, there is a unique prior state
$\rho_{BX_{1}X_{2}}$ which they should both assign, the prior state for $B$
alone being $\rho_{B} = \Tr[X_1X_2]{\rho_{BX_1X_2}}$.  After \1 and \2
observe the values $x_{j}$ for $X_{j}$ they update their states for $B$ to the
posteriors $\rho_{B|X_{j}=x_{j}}$.

Now suppose that we don't know the details of how \1 and \2
arrived at their state assignments and we are simply told that, at
some specific point in time, \1 assigns a state $\sigma_{B}^{(1)}$
to $B$ and \2 assigns a state $\sigma_{B}^{(2)}$ (different from
$\sigma_{B}^{(1)}$ in general). For the objective Bayesian, this can
only arise in the manner described above, so the condition for
compatibility is that it should be possible to construct a hybrid
state $\rho_{BX_{1}X_{2}}$ over $B$ and two classical random variables
$X_{1}$ and $X_{2}$ such that $\sigma_{B}^{(j)}=\rho_{B|X_{j}=x_{j}}$
for some values $x_{j}$ of $X_{j}$.

\begin{definition}[Quantum objective Bayesian compatibility]
  \label{def:Compat:QObj}
  Two quantum states $\sigma_{B}^{(1)}$ and $\sigma_{B}^{(2)}$ of a
  region $B$ are compatible if it is possible to construct a pair of
  random variables $X_{1}$ and $X_{2}$ and a hybrid state
  $\rho_{BX_{1}X_{2}}$ such that $\sigma_{B}^{(1)}$ can be obtained by
  Bayesian conditioning on $X_{1}=x_{1}$ for some value $x_{1},$ and
  $\sigma_{B}^{(2)}$ can be obtained by Bayesian conditioning on
  $X_{2}=x_{2}$ for some value $x_{2}$, i.e.
  \begin{equation}
    \sigma_{B}^{(j)}=\rho_{B|X_{j}=x_{j}}
  \end{equation}
  for some values $x_{j}$ of $X_{j}$.  Further, we require that
  $\rho_{X_1=x_1,X_2=x_2} \neq 0$ so that there is a possibility for
  both outcomes to be obtained simultaneously.
\end{definition}

This holds whenever the BFM compatibility condition is satisfied, as
the following theorem demonstrates.  Recall that the support of a
state $\rho_{B}$ is the span of the eigenvectors of $\rho_{B}$
associated with nonzero eigenvalues.  We denote it by $\text{supp} \left
  [ \rho _{B}\right ]$.

\begin{theorem}
  \label{thm:Compat:QObj}
  Two quantum states $\sigma_{B}^{(1)}$ and $\sigma_{B}^{(2)}$ of a
  region $B$ satisfy definition~\ref{def:Compat:QObj}, i.e.\ they are
  compatible in the objective Bayesian sense, iff they share some
  common support, i.e.
  \begin{equation}
    \emph{supp} \left[  \sigma_{B}^{(1)}\right]  \cap\emph{supp}
  \left[  \sigma_{B}^{(2)}\right]  \neq\emptyset,
  \end{equation}
  where $\cap$ indicates the geometric intersection of the subspaces.
\end{theorem}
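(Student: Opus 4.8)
The plan is to mirror the classical argument of Theorem~\ref{thm:Compat:Obj} step for step, replacing Lemma~\ref{lem:Compat:DistSupp} by a quantum counterpart. The key observation that makes this possible is that, since $X_1$ and $X_2$ are classical, the hybrid state $\rho_{BX_1X_2}$ is block-diagonal in the preferred bases of $X_1$ and $X_2$; conditioning on a value of a classical variable reduces to selecting a component and renormalizing, and the $\Sprod$-product plays no role. Consequently almost the entire combinatorial structure of the classical proof carries over verbatim, with distributions $P(Y|\cdots)$ replaced by density operators $\rho_{B|\cdots}$ and the support of a distribution replaced by the support of a density operator.

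First I would establish the quantum analogue of Lemma~\ref{lem:Compat:DistSupp}: if $\rho_{BX}$ is a hybrid state with $\rho_{X=x}\neq 0$, then $\text{supp}[\rho_{B|X=x}]\subseteq\text{supp}[\rho_B]$. This follows at once from $\rho_B=\Tr[X]{\rho_{BX}}=\sum_{x'}\rho_{X=x'}\,\rho_{B|X=x'}$, a convex combination of positive operators with nonnegative scalar weights $\rho_{X=x'}$: for positive operators one has $\ker(A+B)=\ker A\cap\ker B$, so the support of the sum is the span of the union of the individual supports, and each term of nonzero weight has support contained in that of $\rho_B$. For the ``only if'' half I would then condition the given joint state $\rho_{BX_1X_2}$ on $X_1=x_1$ (well defined since $\rho_{X_1=x_1,X_2=x_2}\neq 0$ forces $\rho_{X_1=x_1}\neq 0$) to obtain a hybrid state over $BX_2$ whose $B$-marginal is $\sigma_B^{(1)}=\rho_{B|X_1=x_1}$, and apply the lemma to conclude $\text{supp}[\rho_{B|X_1=x_1,X_2=x_2}]\subseteq\text{supp}[\sigma_B^{(1)}]$; symmetrically $\text{supp}[\rho_{B|X_1=x_1,X_2=x_2}]\subseteq\text{supp}[\sigma_B^{(2)}]$. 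Since every density operator has nonempty support, the two supports intersect.

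For the ``if'' half I would fix any density operator $\sigma_B^{(0)}$ whose support is contained in $\text{supp}[\sigma_B^{(1)}]\cap\text{supp}[\sigma_B^{(2)}]$ (nonempty by hypothesis) and seek convex decompositions $\sigma_B^{(j)}=p_j\sigma_B^{(0)}+(1-p_j)\sigma_B^{(j)\prime}$ with $0<p_j\le 1$ and $\sigma_B^{(j)\prime}$ a normalized state, exactly paralleling eqs.~\eqref{eq:Compat:jj1}--\eqref{eq:Compat:jj2}. With these in hand I would build the hybrid state from bit-valued $X_1,X_2$ by the same assignments used classically, namely $\rho_{B|X_1=0,X_2=0}=\sigma_B^{(0)}$, $\rho_{B|X_1=0,X_2=1}=\sigma_B^{(1)\prime}$, $\rho_{B|X_1=1,X_2=0}=\sigma_B^{(2)\prime}$, an arbitrary state for $\rho_{B|X_1=1,X_2=1}$, together with the same distribution over $(X_1,X_2)$; the verification that $\rho_{B|X_1=0}=\sigma_B^{(1)}$ and $\rho_{B|X_2=0}=\sigma_B^{(2)}$ is then identical to the classical computation, as it involves only the classical labels and the positive scalar weights.

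The main obstacle is establishing the existence of these decompositions, i.e.\ that $\sigma_B^{(j)}-p_j\sigma_B^{(0)}\ge 0$ for some $p_j>0$; this is the one genuinely noncommutative step, since $\sigma_B^{(0)}$ and $\sigma_B^{(j)}$ need not commute. I expect to handle it by working relative to the orthogonal decomposition $\Hilb[B]=V_j\oplus V_j^\perp$ with $V_j=\text{supp}[\sigma_B^{(j)}]$: because $\text{supp}[\sigma_B^{(0)}]\subseteq V_j$, both operators are block-diagonal with vanishing $V_j^\perp$-block, so the difference vanishes on $V_j^\perp$, while on $V_j$ one has $\sigma_B^{(j)}\ge\lambda_{\min}I$ with $\lambda_{\min}>0$ the least nonzero eigenvalue and $\sigma_B^{(0)}\le I$, whence $\sigma_B^{(j)}-p_j\sigma_B^{(0)}\ge 0$ for any $p_j\le\lambda_{\min}$. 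This support-containment argument is precisely what replaces the trivial classical positivity, and it is where the BFM criterion of common support enters essentially.
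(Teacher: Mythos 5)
Your proposal is correct and follows essentially the same route as the paper: the same quantum analogue of Lemma~\ref{lem:Compat:DistSupp} (your kernel-of-a-sum argument is just a repackaging of the paper's vector-by-vector computation), the same application of it for necessity, and the same four-component hybrid-state construction for sufficiency. The only addition is that you explicitly justify the existence of the convex decompositions $\sigma_B^{(j)}=p_j\sigma_B^{(0)}+(1-p_j)\sigma_B^{(j)\prime}$ via the least-nonzero-eigenvalue bound on $\mathrm{supp}[\sigma_B^{(j)}]$, a step the paper asserts without proof; your argument for it is sound.
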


The proof of this theorem closely resembles the proof of its classical
counterpart. First, note the quantum analogue of
lemma~\ref{lem:Compat:DistSupp}.

\begin{lemma}
  \label{lem:Compat:QDistSupp}
  If a hybrid state $\rho_{XB}$ satisfies $\rho_{X=x} \neq 0$ then
  $\text{supp}\left[ \rho_{B|X=x}\right] \subseteq \text{supp}\left[
    \rho_{B}\right]$.
\end{lemma}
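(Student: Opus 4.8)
The plan is to mirror the classical argument of lemma~\ref{lem:Compat:DistSupp}, working throughout with kernels rather than supports and replacing the classical fact ``a sum of nonnegative numbers vanishes iff every term vanishes'' by its operator analogue for positive operators.

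First I would unpack the hybrid structure. Since $X$ is classical, the joint state decomposes as $\rho_{XB} = \sum_{x'} \Ket{x'}\Bra{x'}_X \otimes \rho_{X=x',B}$, and the positivity of hybrid joint states (established earlier, and holding irrespective of the causal relation between $X$ and $B$) forces each component $\rho_{X=x',B}$ to be a positive operator on $\Hilb[B]$. Because $X$ is classical the $\Sprod$-product reduces to the ordinary operator product, so $\rho_{B|X} = \rho_{XB}\rho_X^{-1}$ has components $\rho_{B|X=x} = \rho_{X=x,B}/\rho_{X=x}$, where the scalar $\rho_{X=x} = \Tr[B]{\rho_{X=x,B}} = P(X=x)$. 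The hypothesis $\rho_{X=x} \neq 0$ makes this well defined, and since $\rho_{B|X=x}$ is then a strictly positive rescaling of $\rho_{X=x,B}$, the two share the same support: $\text{supp}\left[\rho_{B|X=x}\right] = \text{supp}\left[\rho_{X=x,B}\right]$.

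Next I would exploit positivity of the summands of the marginal. Tracing out $X$ gives $\rho_B = \Tr[X]{\rho_{XB}} = \sum_{x'} \rho_{X=x',B}$, a sum of positive operators. The crux is the elementary identity $\ker\left[\sum_{x'}\rho_{X=x',B}\right] = \bigcap_{x'}\ker\left[\rho_{X=x',B}\right]$: for any $\Ket{\psi}$, the quantity $\Bra{\psi}\rho_B\Ket{\psi} = \sum_{x'}\Bra{\psi}\rho_{X=x',B}\Ket{\psi}$ is a sum of nonnegative terms, hence vanishes iff each term does, and for a positive operator $\Bra{\psi}\rho_{X=x',B}\Ket{\psi} = 0$ is equivalent to $\rho_{X=x',B}\Ket{\psi} = 0$. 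This is precisely the operator counterpart of the classical step ``$\sum_{x'}P(Y=y,X=x')=0$ implies $P(Y=y,X=x')=0$ for every $x'$''. In particular $\ker\left[\rho_B\right] \subseteq \ker\left[\rho_{X=x,B}\right] = \ker\left[\rho_{B|X=x}\right]$.

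Finally I would take orthogonal complements. For self-adjoint operators support and kernel are mutually orthogonal complements, so the inclusion of kernels reverses to $\text{supp}\left[\rho_{B|X=x}\right] = \text{supp}\left[\rho_{X=x,B}\right] \subseteq \text{supp}\left[\rho_B\right]$, which is the claim. The one step that genuinely does the work --- the ``main obstacle,'' such as it is --- is establishing the positivity of the components $\rho_{X=x',B}$, since this is what simultaneously licenses the same-support claim for the rescaling and the kernel-of-a-sum identity; with positivity in hand the remainder is a direct transcription of the classical lemma.
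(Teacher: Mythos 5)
Your proof is correct and follows essentially the same route as the paper's: pass to kernels, use positivity of the components $\rho_{X=x',B}$ to conclude that a vanishing sum of nonnegative quadratic forms $\Bra{\psi}\rho_{X=x',B}\Ket{\psi}$ forces each term to vanish, and then take orthogonal complements. You simply make explicit a few steps the paper leaves implicit (the scalar rescaling relating $\rho_{B|X=x}$ to $\rho_{X=x,B}$ and the kernel-of-a-sum identity), so no substantive difference.
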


\begin{proof}
  The condition $\rho_{X=x} \neq 0$ implies that $\rho_{B|X=x}$ is
  well defined.  Let $\ker \left [ \rho_B \right ] = \left \{
    \Ket{\psi}_B \mid \rho_B \Ket{\psi}_B = 0 \right \}$, i.e. $\ker
  \left [ \rho_B \right ]$ is the orthogonal complement of
  $\text{supp} \left [ \rho_B \right ]$.  Let $\Ket{\psi}_B \in \ker
  \left [ \rho_B \right ]$.  Then $\Bra{\psi}_B \Tr[X]{\rho_{BX}}
  \Ket{\psi}_B =0$.  This implies that $\Bra{\psi}_B \rho_{B,X=x'}
  \Ket{\psi}_B=0$ for every $x'$ because each operator $\rho_{B,X=x'}$
  is positive.  Consequently, $\Bra{\psi}_B \rho_{B|X=x} \Ket{\psi}_B
  =0$.  In other words, if $\Ket{\psi}_B \in \ker \left [ \rho_{B}
  \right ]$ then $\Ket{\psi}_B \in \ker \left [ \rho_{B|X=x} \right
  ]$, which means that $\ker\left[ \rho_{B}\right] \subseteq\ker\left[
    \rho_{B|X=x}\right] $, or equivalently $\text{supp} \left[
    \rho_{B|X=x}\right] \subseteq \text{supp}\left[ \rho_{B}\right]$.
\end{proof}

\begin{proof}[Proof of theorem~\ref{thm:Compat:QObj}]
  \mbox{\phantom{M}} \vspace{1em}\\
  \textbf{The ``only if'' half}: \\
  It is given that there is a hybrid joint state $\rho_{BX_{1}X_{2}}$
  such that $\sigma_{B}^{(j)}=\rho_{B|X_{j}=x_{j}}$ for some values
  $x_{j}$ of $X_{j}$.  Since $\rho_{X_1=x_1,X_2=x_2} \neq 0$, the
  conditional state $\rho_{B|X_{1}=x_{1},X_{2}=x_{2}}$ is well
  defined. Lemma~\ref{lem:Compat:QDistSupp} then implies that
  \begin{align}
    \text{supp}\left[ \rho_{B|X_{1}=x_{1},X_{2}=x_{2}}\right] &
    \subseteq
    \text{supp}\left[  \rho_{B|X_{1}=x_{1}}\right]  \\
    \text{supp}\left[ \rho_{B|X_{1}=x_{1},X_{2}=x_{2}}\right] &
    \subseteq \text{supp}\left[ \rho_{B|X_{2}=x_{2}}\right].
  \end{align}
  Since $\rho_{B|X_{1}=x_{1},X_{2}=x_{2}}$ has nontrivial
  support, it follows that
  \begin{equation}
    \text{supp}\left[  \rho_{B|X_{1}=x_{1}}\right]  \cap\text{supp}\left[
      \rho_{B|X_{2}=x_{2}}\right]  \neq \emptyset.
  \end{equation}

  \noindent\textbf{The ``if'' half}: \\
  Given that $\sigma _{B}^{(1)}$ and $\sigma_{B}^{(2)}$ have
  intersecting support, one can find a quantum state $\mu_{B}$
  such that
  \begin{align}
    \sigma_{B}^{(1)} & =p_{1}\mu_{B} + \left( 1-p_{1}\right)
    \eta_{B}^{(1)}, \label{eq:Compat:qq1} \\
    \sigma_{B}^{(2)} & =p_{2}\mu_{B}+\left( 1-p_{2}\right)
    \eta_{B}^{(2)}, \label{eq:Compat:qq2}
  \end{align}
  where $\eta_{B}^{(1)}$ and $\eta_{B}^{(2)}$ are each quantum states
  and $0<p_{1},p_{2}\leq 1$.

  This decomposition can be used to construct two classical variables,
  $X_1$ and $X_2$ and a hybrid state $\rho_{BX_{1}X_{2}}$ such that
  $\rho_{X_1 = x_1,X_2 = x_2} \neq 0$ and $\sigma
  _{B}^{(j)}=\rho_{B|X_{j}=x_{j}}$ for some values $x_1$ and $x_2$. \
  Let $X_{1}$ and $X_{2}$ be bit-valued variables, and define
  \begin{align}
    \rho_{B|X_{1}=0,X_{2}=0} & = \mu_{B} \\
    \rho_{B|X_{1}=0,X_{2}=1} & =\eta_{B}^{(1)} \\
    \rho_{B|X_{1}=1,X_{2}=0} & =\eta_{B}^{(2)}.
  \end{align}
  The result of conditioning on $(X_{1}=1,X_{2}=1)$ can be taken to be
  an arbitrary state, denoted $\nu_{B}$, i.e.
  \begin{equation}
    \rho_{B|X_{1}=1,X_{2}=1}=\nu_{B}.
  \end{equation}
  Next, define the following (classical) state over $X_{1}$ and
  $X_{2}$:
  \begin{multline}
    \rho_{X_{1},X_{2}} =p_{1}p_{2}\Ket{00}\Bra{00}_{X_1X_2}
    \\ +(1-p_{1})p_{2} \Ket{01}\Bra{01}_{X_1X_2} \\
    +p_{1}\left( 1-p_{2}\right) \Ket{10}\Bra{10}_{X_1X_2} \\ + \left(
      1-p_{2}\right) \left( 1-p_{2}\right) \Ket{11}\Bra{11}_{X_1X_2}.
  \end{multline}
  This can be combined with the conditional states defined above to obtain
  \begin{multline}
    \rho_{BX_{1}X_{2}} = \rho_{B|X_1 X_2} \rho_{X_1 X_2} \\
    = p_{1}p_{2}\left( \mu_{B}\otimes
      \Ket{00}\Bra{00}_{X_1X_2} \right) \\
    +(1-p_{1})p_{2}\left( \eta_{B}^{(1)}\otimes
      \Ket{01}\Bra{01}_{X_1X_2} \right) \\
    +p_{1}\left( 1-p_{2}\right) \left(
      \eta_{B}^{(2)}\otimes \Ket{10}\Bra{10}_{X_1X_2} \right) \\
    + \left( 1-p_{2}\right) \left( 1-p_{2}\right) \left( \nu_{B}
      \otimes \Ket{11}\Bra{11}_{X_1X_2} \right).
  \end{multline}
  It is then easy to verify that $\rho_{B|X_{1}=0}$ and
  $\rho_{B|X_{2}=0}$ are equal to the right-hand sides of
  eqs.~\eqref{eq:Compat:qq1} and \eqref{eq:Compat:qq2}, and
  consequently are equal to $\sigma_{B}^{(1)}$ and $\sigma_{B}^{(2)}$
  respectively.
\end{proof}

As noted in the classical case, the ``if'' part of the proof only
establishes sufficiency of the BFM criterion for causal scenarios that
support generic joint states.  Certain causal scenarios may enforce a
restriction on the pairs of states $\sigma^{(1)}_B$ and
$\sigma^{(2)}_B$ that \1 and \2 can come to assign by conditioning on
$X_1$ and $X_2$.  For instance, consider the causal scenarios depicted
in fig.~\ref{fig:Model:Suff}, where $X_2$ is obtained by
post-processing of $X_1$, so that all correlations between $X_2$ and
$B$ are mediated by $X_1$.  In this case, the only pairs
$\sigma^{(1)}_B$ and $\sigma^{(2)}_B$ that can arise by conditioning
on $X_1$ and $X_2$ are those for which the support of $\sigma^{(1)}_B$
is contained in that of $\sigma^{(2)}_B$.  We are led to the same
conclusion as we found classically: although BFM compatibility is
necessary in any causal scenario, not every pair of BFM compatible
state assignments can arise in every causal scenario.  Nonetheless, we
can always find a causal scenario wherein there are no restrictions on
the joint state $\rho_{BX_1X_2}$ and therefore no restriction on the
states to which a pair of agents can be led by Bayesian conditioning.
The causal scenario considered by BFM, where $X_1$ and $X_2$ are the
outcomes of a pair of remote measurements on $B$ (depicted in
fig.~\ref{fig:Model:TwoRemote}) is one such example, as is the causal
scenario considered by Jacobs, where $X_1$ and $X_2$ are the outcomes
of a sequential pair of measurements and $B$ is the output (depicted
in fig.~\ref{fig:Model:TwoRemote})\footnote{To realize an arbitrary
  joint state $\rho_{BX_1X_2}$ in this scenario, it suffices to use
  the following components.  Let $A_1$ and $A_2$ be classical systems
  that each have a preferred basis that is labeled by the values of
  both $X_1$ and $X_2$. Set the state of the input $A_1$ to
  $\rho_{A_1} = \sum_{x_1,x_2} P(X_1=x_1,X_2=x_2)
  \Ket{x_1x_2}\Bra{x_1x_2}_{A_1}$, where $P(X_1=x_1,X_2=x_2)= \rho_{
    X_1=x_1,X_2=x_2}$.  Let the first quantum instrument,
  $\varrho_{X_1A_2|A_1}$, be a measurement of the projector-valued
  measure $\{ \Ket{x_1}\Bra{x_1}\otimes I \}$ with a projection
  postulate update rule, and let the second quantum instrument,
  $\varrho_{X_2 B|A_2}$, be a measurement of the projector-valued
  measure $\{ I \otimes \Ket{x_2}\Bra{x_2} \}$ with an update rule
  that takes $\Ket{x_1x_2}\Bra{x_1x_2}_{A_2}$ to
  $\rho_{B|X_1=x_1,X_2=x_2}$.}.

\subsection{Subjective Bayesian compatibility}

A subjective Bayesian cannot use the approach just discussed in
general, since it depends on postulating a unique prior state over
$B$, $X_{1}$, and $X_{2}$ (or $Y$, $X_{1}$, and $X_{2}$ in the
classical case) that all agents agree upon before collecting their
data. Given that the choice of prior is an unanalyzable matter of
belief for the subjectivist, there is no reason why \1 and \2 need to
agree on a prior at the outset and, further, there is no reason why
the difference in their probability assignments has to be explained by
their having had access to different data in the first place. If it
happens that \1 and \2 did have a shared prior before collecting their
data then the argument runs through, but for the subjective Bayesian
this is the exception rather than the rule. In fact, since subjective
Bayesians do not rule out as irrational the possibility of agents
starting out with contradictory beliefs, it might seem that there is
no role for compatibility criteria in this approach at all.

However, this is not the case since, although subjective Bayesians do
not analyze how agents arrive at their beliefs, they are interested in
whether it is possible for them to reach inter-subjective agreement in
the future, i.e.\ whether it is possible for them to resolve their
differences by experiment or whether their disagreement is so extreme
that one of them has to make a wholesale revision of their beliefs in
order to reach agreement. In the classical case, a subjective Bayesian
will therefore say that two probability assignments $Q_{1}(Y)$ and
$Q_{2}(Y)$ to a random variable $Y$ are compatible if it is possible
to construct an experiment, for which \1 and \2 agree upon a
statistical model, i.e.\ a likelihood function $P(X|Y)$, such that at
least one outcome $X=x$ of the experiment would cause \1 and \2 to
assign identical probabilities when they update their probabilities by
Bayesian conditioning.  In other words, the subjective Bayesian
account of compatibility is in terms of the possibility of
\emph{future} agreement, in contrast to the objective Bayesian
account, which relies on a guarantee of agreement in the
\emph{past}\footnote{Note that by ``the possibility of future
  agreement'' we mean agreement about what value the variable $Y$ took
  at some particular moment in its dynamical history, but where that
  agreement might only be achieved in the future epistemological lives
  of the agents, after they have acquired more information.  We do
  \emph{not} mean the possibility of agreement about what value the
  variable $Y$ takes at some future moment in its dynamical history.
  The distinction between these two sorts of temporal relation,
  i.e.\ between moments in the epistemological history of an agent on
  the one hand and between moments in the ontological history of the
  system on the other, is critical when discussing Bayesian inference
  for systems that persist in time.  It is discussed in
  \cite{Leifer2011a} and in \S~\ref{CFS}.}

\begin{definition}[Classical subjective Bayesian compatibility]
  \label{def:Compat:Subj}
  Two probability distributions, $Q_{1}(Y)$ and $Q_{2}(Y),$ are
  compatible if it is possible to construct a random variable $X$ and
  a conditional probability distribution $P(X|Y)$ (often called a
  likelihood function in this context) such that there exists a value
  $x$ of $X$ for which $\sum_Y P(X=x|Y)Q_j(Y) \neq 0$ and
  \begin{equation}
    P_{1}(Y|X=x)=P_{2}(Y|X=x)
  \end{equation}
  where $P_{j}(Y|X)\equiv P\left( X|Y\right) Q_{j}(Y)/\sum_{Y}P\left(
    X|Y\right) Q_{j}(Y)$.
\end{definition}

It turns out that the
mathematical criteria that $Q_1(Y)$ and $Q_2(Y)$ must satisfy in order
to be compatible in this subjective Bayesian sense are precisely the
same as those required for objective Bayesian compatibility.

\begin{theorem}
  \label{thm:Comp:Subj}
  Two probability distributions $Q_{1}(Y)$ and $Q_{2}(Y)$ satisfy
  definition~\ref{def:Compat:Subj}, i.e.\ they are compatible in the
  subjective Bayesian sense, iff they share some common support, i.e.
  \begin{equation}
    \emph{supp}\left[  Q_{1}(Y)\right]  \cap \emph{supp}\left[
      Q_{2}(Y)\right]  \neq \emptyset.
  \end{equation}
\end{theorem}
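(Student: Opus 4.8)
**The plan is to prove both directions, using the support-intersection characterization as the bridge between the subjective and objective notions of compatibility.** The key observation is that Theorem~\ref{thm:Comp:Subj} has exactly the same conclusion as Theorem~\ref{thm:Compat:Obj}, so the task is to show that Definition~\ref{def:Compat:Subj} is also equivalent to the condition $\text{supp}[Q_1(Y)] \cap \text{supp}[Q_2(Y)] \neq \emptyset$. I would structure the argument as two implications, and I expect the ``only if'' half to be the more delicate one.

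For the ``only if'' half, I would assume compatibility in the subjective sense: there is a likelihood $P(X|Y)$ and a value $x$ with $\sum_Y P(X=x|Y)Q_j(Y) \neq 0$ for both $j$, and $P_1(Y|X=x) = P_2(Y|X=x)$. The strategy is to show that any $y$ in the support of this common posterior must lie in the supports of both $Q_1$ and $Q_2$, which would immediately give a nonempty intersection. Concretely, from the defining formula $P_j(Y=y|X=x) = P(X=x|Y=y)Q_j(Y=y)/\sum_{Y}P(X=x|Y)Q_j(Y)$, if $P_1(Y=y|X=x) = P_2(Y=y|X=x) > 0$ for some $y$ (which exists since the posterior is normalized), then the numerator $P(X=x|Y=y)Q_j(Y=y)$ must be nonzero for each $j$, forcing $Q_j(Y=y) > 0$ for both $j$. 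Hence $y \in \text{supp}[Q_1(Y)] \cap \text{supp}[Q_2(Y)]$.

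For the ``if'' half, I would assume the supports intersect and explicitly construct a likelihood witnessing subjective compatibility. The natural choice is to pick some $y_0 \in \text{supp}[Q_1(Y)] \cap \text{supp}[Q_2(Y)]$ and engineer an experiment with an outcome $x$ whose posterior concentrates on $y_0$ regardless of which prior is used. For instance, take $X$ to be bit-valued with likelihood $P(X=1|Y=y) = \delta_{y,y_0}$, so that conditioning on $X=1$ projects onto $y_0$; then $P_1(Y|X=1) = P_2(Y|X=1) = \delta_{Y,y_0}$, and the normalization $\sum_Y P(X=1|Y)Q_j(Y) = Q_j(Y=y_0) > 0$ is guaranteed precisely because $y_0$ lies in both supports. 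This satisfies Definition~\ref{def:Compat:Subj} directly.

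**The main obstacle is conceptual rather than computational:** making sure the likelihood construction in the ``if'' half respects the constraint $\sum_Y P(X=x|Y)Q_j(Y) \neq 0$ \emph{for both agents simultaneously}, which is exactly what the support-intersection hypothesis delivers, and checking in the ``only if'' half that the equality of \emph{normalized} posteriors, rather than merely proportional numerators, does not introduce a subtlety when the normalizing constants differ between the two agents. I expect this to resolve cleanly because the equality of normalized distributions at a point where they are positive forces both unnormalized numerators to be nonzero, which is all that is needed.
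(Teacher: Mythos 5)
Your proposal is correct and follows essentially the same route as the paper: the ``if'' half uses the identical delta-function likelihood construction concentrating both posteriors on a point of the common support, and the ``only if'' half extracts a common support point from the shared posterior, which you do by unpacking the Bayes formula directly whereas the paper packages the same observation as lemma~\ref{lem:Compat:DistSupp} (conditioning can only shrink support). No gaps.
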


\begin{proof}\mbox{\phantom{M}}\vspace{1em}\\
  \textbf{The ``only if'' half}: \\
  Since $P_{j}(Y|X=x)$ is derived from $Q_{j}(Y)$ by Bayesian
  conditioning, it follows from lemma~\ref{lem:Compat:DistSupp} that
  \begin{align}
    \text{supp}\left[ P_{1}\left( Y|X=x\right) \right] & \subseteq
    \text{supp}\left[  Q_{1}\left(  Y\right)  \right]  \label{eq:Compat:Css1} \\
    \text{supp}\left[ P_{2}\left( Y|X=x\right) \right] & \subseteq
    \text{supp}\left[ Q_{2}\left( Y\right) \right].\label{eq:Compat:Css2}
  \end{align}
  However, by assumption, $P_{1}\left( Y|X=x\right) =P_{2}\left(
    Y|X=x\right)$, so the left-hand sides of
  eqs.~\eqref{eq:Compat:Css1} and \eqref{eq:Compat:Css2} are equal.
  It follows that $Q_{1}\left( Y\right) $ and $Q_{2}(Y)$ have some
  common support, namely, $\text{supp}\left[ P_{1}\left( S|X=x\right)
  \right]$.

  \noindent\textbf{The ``if'' half}: \\
  By assumption, there is at least one value $y$ of $Y$ belonging to
  the common support of $Q_{1}\left( Y\right) $ and $Q_{2}\left(
    Y\right)$.  Let $X$ be a classical bit and define the likelihood
  function
  \begin{align}
    P(X=0|Y=y) & = 1 & P(X=1|Y=y) & = 0 \\
    P(X=0|Y\neq y) & = 0 & P(X=1|Y\neq y) & = 1.
  \end{align}
  If \1 and \2 agree to use this likelihood function, then, upon
  observing $X=0$, they will update their distributions to
  \begin{multline}
    P_{j}(Y=y'|X=0)= \\
    \frac{P(X=0|Y=y')Q_{j}(Y=y')}{\sum_{y'}P(X=0|Y=y')Q_{j}(Y=y')} \\ =
    \delta_{y,y'},
  \end{multline}
  which is independent of $j$ and hence brings them into agreement.
\end{proof}

In the quantum case, if \1 and \2 assign states $\sigma_{S}^{(j)}$ to
a quantum region then they are compatible if there is some classical
data $X$ that they can collect about the system, for which \1 and
\2 agree upon a statistical model, such that observing at least one
value $x$ of $X$ would cause their state assignments to become
identical.

\begin{definition}[Quantum subjective Bayesian compatibility]
  \label{def:Compat:QSubj}
  Two states $\sigma_{B}^{(1)}$ and $\sigma_{B}^{(2)}$ are compatible
  if it is possible to construct a random variable $X$ and a
  conditional state $\rho_{X|B}$ (which we call a likelihood operator)
  such that there exists a value $x$ of $X$ for which
  $\Tr[B]{\rho_{X=x|B}\sigma_B^{(j)}} \neq 0$ and
  \begin{equation}
    \rho_{B|X=x}^{(1)}=\rho_{B|X=x}^{(2)}
  \end{equation}
  where $\rho_{B|X=x}^{(j)}$ is given by the quantum Bayes' theorem:
  $\rho_{B|X=x}^{(j)}\equiv\left( \rho_{X=x|B}\Sprod
    \sigma_{B}^{(j)}\right) /\Tr[B]{\rho_{X=x|B}\sigma_{B}^{(j)}}$.
\end{definition}

Once again, subjective Bayesian compatibility has the same
mathematical consequences as its objective counterpart.  Both are
equivalent to requiring the BFM criterion.

\begin{theorem}
  \label{thm:Compat:QSubj}
  Two states $\sigma_{B}^{(1)}$ and $\sigma_{B}^{(2)}$ satisfy
  definition~\ref{def:Compat:QSubj}, i.e.\ they are compatible in the
  subjective Bayesian sense, iff they share common support, i.e.
  \begin{equation}
    \emph{supp} \left[  \sigma_{B}^{(1)}\right]
    \cap \emph{supp} \left[  \sigma_{B}^{(2)}\right]  \neq
    \emptyset,
  \end{equation}
  where $\cap$ denotes the geometric intersection.
\end{theorem}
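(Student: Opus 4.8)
The plan is to follow the template of the classical subjective argument (Theorem~\ref{thm:Comp:Subj}), treating the two implications separately; the ``only if'' half transcribes almost verbatim, while the ``if'' half needs a genuinely quantum construction. For the ``only if'' direction, assume $\sigma_{B}^{(1)}$ and $\sigma_{B}^{(2)}$ are compatible, so there is a likelihood operator $\rho_{X|B}$ and a value $x$ with $\Tr[B]{\rho_{X=x|B}\sigma_{B}^{(j)}} \neq 0$ and a common posterior $\mu_{B} := \rho_{B|X=x}^{(1)} = \rho_{B|X=x}^{(2)}$. Each posterior is produced from the prior $\sigma_{B}^{(j)}$ by quantum Bayesian conditioning, so Lemma~\ref{lem:Compat:QDistSupp} gives $\text{supp}[\rho_{B|X=x}^{(j)}] \subseteq \text{supp}[\sigma_{B}^{(j)}]$ (concretely, $\rho_{B|X=x}^{(j)} \propto (\sigma_{B}^{(j)})^{1/2}\rho_{X=x|B}(\sigma_{B}^{(j)})^{1/2}$ has range inside $\text{supp}[\sigma_{B}^{(j)}]$). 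Since the posteriors coincide, $\text{supp}[\mu_{B}]$ lies in both supports, and as $\mu_{B}$ is a normalized state its support is nontrivial, forcing $\text{supp}[\sigma_{B}^{(1)}] \cap \text{supp}[\sigma_{B}^{(2)}] \neq \emptyset$.

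For the ``if'' direction the classical recipe has no naive analogue, and this is where the real work sits. Classically one measures the variable that perfectly identifies a value $y$ in the common support, collapsing both agents onto $\delta_{y}$ regardless of their priors. The obvious quantum imitation---measuring a rank-one effect $\Ket{\phi}\Bra{\phi}_{B}$ with $\Ket{\phi}$ in the common support---fails, because by the quantum Bayes' theorem agent $j$'s posterior is the pure state along $(\sigma_{B}^{(j)})^{1/2}\Ket{\phi}$, and the $\Sprod$-sandwich rotates the fixed test vector $\Ket{\phi}$ differently for each prior. The fix is to tailor the effect to both priors at once: I would seek a vector $\Ket{e}$ that is a \emph{generalized eigenvector} of the pair $(\sigma_{B}^{(1)}, \sigma_{B}^{(2)})$, i.e.\ $(\sigma_{B}^{(1)})^{1/2}\Ket{e} \parallel (\sigma_{B}^{(2)})^{1/2}\Ket{e}$ with both images nonzero. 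Measuring the two-outcome POVM $\{\Ket{e}\Bra{e}, I_{B} - \Ket{e}\Bra{e}\}$ (with $\Ket{e}$ normalized so the effect is bounded by $I_{B}$) then sends, for the first outcome, both agents to the same pure state along the common image direction, while the trace condition $\Tr[B]{\Ket{e}\Bra{e}\sigma_{B}^{(j)}} = \|(\sigma_{B}^{(j)})^{1/2}\Ket{e}\|^{2} \neq 0$ holds precisely because the images are nonzero. Thus Definition~\ref{def:Compat:QSubj} is satisfied.

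The crux is therefore the existence of such a generalized eigenvector whenever $W := \text{supp}[\sigma_{B}^{(1)}] \cap \text{supp}[\sigma_{B}^{(2)}]$ is nontrivial, and this is the step I expect to demand the most care. Necessity is immediate, since the common image $(\sigma_{B}^{(1)})^{1/2}\Ket{e} = \lambda(\sigma_{B}^{(2)})^{1/2}\Ket{e}$ lies in $\text{ran}(\sigma_{B}^{(1)})^{1/2} \cap \text{ran}(\sigma_{B}^{(2)})^{1/2} = W$. For sufficiency I would compress the (pseudo-)inverse square roots onto $W$: letting $P_{W}$ project onto $W$, set $C_{j} := P_{W}(\sigma_{B}^{(j)})^{-1/2}P_{W}$, viewed as operators on $W$. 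Because every $\Ket{\psi} \in W$ lies in $\text{supp}[\sigma_{B}^{(j)}]$, one finds $\Bra{\psi}C_{j}\Ket{\psi} = \Bra{\psi}(\sigma_{B}^{(j)})^{-1/2}\Ket{\psi} > 0$, so each $C_{j}$ is positive definite on the nonzero space $W$. The pencil equation $C_{1}\Ket{\psi} = \mu C_{2}\Ket{\psi}$ on $W$ is then equivalent to the ordinary Hermitian eigenproblem for $C_{2}^{-1/2}C_{1}C_{2}^{-1/2}$, which always admits an eigenvector $\Ket{\psi} \in W$ with $\mu > 0$. Unwinding the compression---noting $(\sigma_{B}^{(1)})^{-1/2}\Ket{\psi} - \mu(\sigma_{B}^{(2)})^{-1/2}\Ket{\psi} \in W^{\perp} = \ker(\sigma_{B}^{(1)})^{1/2} + \ker(\sigma_{B}^{(2)})^{1/2}$, splitting this vector into its two kernel pieces and reabsorbing them---recovers an $\Ket{e}$ with $(\sigma_{B}^{(1)})^{1/2}\Ket{e} \parallel (\sigma_{B}^{(2)})^{1/2}\Ket{e}$. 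Notably, unlike the objective-Bayesian case, positive-definiteness of the $C_{j}$ guarantees a solution outright, so no genericity caveat on the causal scenario is required here.
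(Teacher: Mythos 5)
Your ``only if'' half coincides with the paper's: both apply Lemma~\ref{lem:Compat:QDistSupp} to each agent's update and note that the common posterior has nontrivial support. Your ``if'' half genuinely departs from the paper's, and the departure is substantive. The paper takes an arbitrary $\Ket{\psi}_B$ in the common support, has both agents use the likelihood operator with $X=0$ effect $\Ket{\psi}\Bra{\psi}_B$, and asserts that $\left(\rho_{X=0|B}\Sprod\sigma_B^{(j)}\right)/\Tr[B]{\rho_{X=0|B}\sigma_B^{(j)}}=\Ket{\psi}\Bra{\psi}_B$ for both $j$. As you point out, with $M\Sprod N=N^{1/2}MN^{1/2}$ this posterior is really the pure state along $(\sigma_B^{(j)})^{1/2}\Ket{\psi}$, which depends on $j$ and equals $\Ket{\psi}\Bra{\psi}_B$ only when $\Ket{\psi}$ is an eigenvector of $\sigma_B^{(j)}$; for instance, with full-rank qubit states $\sigma_B^{(1)}=\mathrm{diag}(3/4,1/4)$, $\sigma_B^{(2)}=\mathrm{diag}(1/4,3/4)$ and $\Ket{\psi}=\Ket{+}$ the two posteriors are distinct, so the paper's construction as written does not bring the agents into agreement. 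Your generalized-eigenvector construction repairs this: the compressions $C_j=P_W(\sigma_B^{(j)})^{-1/2}P_W$ are positive definite on the (nontrivial) intersection $W$ of the supports, the pencil $C_1\Ket{\psi}=\mu C_2\Ket{\psi}$ therefore has a solution with $\mu>0$, and lifting it --- e.g.\ taking $\Ket{e}=(\sigma_B^{(1)})^{-1/2}\Ket{\psi}-\Ket{k_1}$ with $\Ket{k_1}$ the $\ker\sigma_B^{(1)}$ piece of the residual lying in $W^{\perp}=\ker\sigma_B^{(1)}+\ker\sigma_B^{(2)}$ --- yields $(\sigma_B^{(1)})^{1/2}\Ket{e}=\Ket{\psi}$ and $(\sigma_B^{(2)})^{1/2}\Ket{e}=\mu\Ket{\psi}$, so the effect $\Ket{e}\Bra{e}$ (rescaled to be bounded by $I_B$) sends both agents to the same pure state and has nonzero probability for both. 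What the paper's route buys is brevity and a verbatim parallel with the classical proof; what yours buys is an argument that is actually valid for the $\Sprod$-product update rule stipulated in Definition~\ref{def:Compat:QSubj}, at the cost of the operator-pencil existence lemma. Your closing observation that no genericity caveat on the causal scenario is needed here is also consistent with the paper, which imposes such caveats only in the objective-Bayesian ``if'' directions.
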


\begin{proof}\mbox{\phantom{M}}\vspace{1em}\\
  \textbf{The ``only if'' half}: \\
  Since $\rho_{B|X=x}^{(j)}$ is derived from $\sigma_{B}^{(j)}$ by
  Bayesian conditioning, it follows from
  lemma~\ref{lem:Compat:QDistSupp} that
  \begin{align}
    \text{supp}\left[ \rho_{B|X=x}^{(1)}\right] &
    \subseteq\text{supp}\left[
      \sigma_{B}^{(1)}\right]  \label{eq:Compat:Qss1} \\
    \text{supp}\left[ \rho_{B|X=x}^{(2)}\right] &
    \subseteq\text{supp}\left[
      \sigma_{B}^{(2)}\right]. \label{eq:Compat:Qss2}
  \end{align}
  However, by assumption, $\rho _{B|X=x}^{(1)}=\rho_{B|X=x}^{(2)}$, so
  the left-hand sides of eqs.~\eqref{eq:Compat:Qss1} and
  \eqref{eq:Compat:Qss2} are equal.  It follows that
  $\sigma_{B}^{(1)}$ and $\sigma_{B}^{(2)}$ have some common support,
  namely, $\mathrm{supp}\left[ \rho_{B|X=x}^{(1)}\right]$.

  \noindent\textbf{The ``if'' half}: \\
  By assumption, the supports of $\sigma_{B}^{(1)}$ and
  $\sigma_{B}^{(2)}$ have nontrivial intersection. It follows that
  there is a pure state $\Ket{\psi}_B \in \Hilb[B]$ in the common
  support.  Let $X$ be a classical bit and define the likelihood
  operator
  \begin{multline}
    \rho_{X|B}= \\ \Ket{0}\Bra{0}_X \otimes \Ket{\psi}\Bra{\psi}_B +
    \Ket{1}\Bra{1}_X \otimes \left(
      I_{B}-\Ket{\psi}\Bra{\psi}_B \right).
  \end{multline}
  If \1 and \2 agree to use this likelihood operator, then, upon
  observing $X=0$ they will update their states to
  \begin{equation}
    \rho_{B|X=0}^{(j)}=\frac{\rho_{X=0|B}\Sprod \sigma_{B}^{(j)}}
    {\Tr[B]{\rho_{X=0|B} \sigma_{B}^{(j)}}}
    =\Ket{\psi}\Bra{\psi}_B,
  \end{equation}
  which is independent of $j$ and hence brings them into agreement.
\end{proof}

\subsection{Comparison to other approaches}

\subsubsection{Brun, Finkelstein and Mermin}

The original BFM argument \cite{Brun2002}, which is objective Bayesian
in flavor, is divided into arguments for the necessity and
sufficiency of their criterion.  To establish sufficiency, they show
that for any pair of state assignments that satisfy their criterion,
one can find a triple of distinct systems, and a quantum state thereon,
such that if \1 measures one system and \2 another, then for some pair
of outcomes \1 and \2 are led to update their description of the third
system to the given pair of state assignments.  This is equivalent to
the ``if'' part of our theorem~\ref{thm:Compat:QObj} when applied to
the remote measurement scenario depicted in
fig.~\ref{fig:Model:TwoRemote}.  The argument provided by BFM for the
necessity of their criterion is based on a set of reasonable-sounding
requirements.  For example, their first requirement is:
\begin{quote}
  If anybody describes a system with a density matrix $\rho$, then
  nobody can find it to be in a pure state in the null space of
  $\rho$.  For although anyone can get a measurement outcome that
  everyone has assigned nonzero probabilities, nobody can get an
  outcome that anybody knows to be impossible.
\end{quote}

If one is adopting an approach wherein quantum states describe the
information, knowledge, or beliefs of agents, then the notion of
finding a system ``to be in a pure state'' is inappropriate, as
emphasized by Caves, Fuchs and Schack \cite{Caves2002}.  However, even
glossing over this, their argument does not satisfy an ideal to which
a proper objective Bayesian account of compatibility should strive,
namely, of being justified by a general methodology for Bayesian
inference.  This ideal is illustrated by the derivation of the
objective Bayesian criterion of \emph{classical} compatibility
presented in \S\ref{Compat:Obj}: if a pair of agents obey the
strictures of objective Bayesianism, i.e.\ assigning the same
ignorance priors and updating their probabilities via Bayesian
conditioning, then they will never encounter a situation in which the
compatibility condition does not hold, and conversely if the
compatibility condition holds, it is always possible for them to come
to their state assignments by Bayesian updating.

Because we have proposed a methodology for \emph{quantum} Bayesian
inference, we can achieve this ideal in the quantum case as well.
Indeed, the close parallel between the proofs of the classical and
quantum compatibility theorems demonstrates that one can achieve the
ideal in the quantum context to precisely the same extent that it can
be achieved in the classical context.  Whilst our argument for
sufficiency of the BFM criterion (the second part of our proof of
theorem~\ref{thm:Compat:QObj}) is mathematically similar to BFM's
argument for sufficiency, it is only against the background of our
framework of quantum conditional states that it becomes possible to
identify the update rule used by \1 and \2 as an instance of
Bayesian conditioning, and thus a special case of a general
methodology for Bayesian inference.

A second point to note is that in our argument for the compatibility
condition, we consider a triple of space-time regions that do not
necessarily correspond to three distinct systems at a given time ---
the case considered by BFM.  The causal relation between them might
instead be any of those depicted in
figs.~\ref{fig:Model:Suff}--\ref{fig:Model:Jacobs}, or indeed any
scenario wherein all the available information about the quantum
region can be captured by assigning a single quantum state.  Thus, our
results generalize the range of applicability of the BFM compatibility
criterion to a broader set of causal scenarios.

\subsubsection{Jacobs}

\label{Compat:Jacobs}

Jacobs \cite{Jacobs2002} has also considered the compatibility of
state assignments using an approach that is objective Bayesian in
flavor.  In his analysis, the region of interest is the output of a
sequence of measurements made one after the other on the same system,
and \1 and \2 have information about the outcomes of distinct subsets
of those measurements.  A simple version of this scenario is where
there is a sequence of \emph{two} measurements, where the outcome of
the first measurement is known to \1 and the outcome of the second is
known to \2.  This is just the causal scenario depicted in
fig.~\ref{fig:Model:Jacobs}, and as emphasized there, such a scenario
falls within the scope of our approach.  In the objective Bayesian
framework, \1 and \2 agree on the input state to the pair of
measurements and they agree on the quantum instruments that describe
each measurement.  Jacobs shows that if \1 and \2's state
assignments are obtained in this way, then they must satisfy the BFM
compatibility criterion, that is, he provides an argument for the
necessity of the BFM compatibility criterion in this causal context.

\newcounter{fn}
\setcounter{fn}{\value{footnote}}
\addtocounter{fn}{-1}

If \1 and \2 come to their state assignments for $B$ using Jacobs'
scheme, then, as explained in \S\ref{Model:Two}, their prior knowledge
of $B$ and the two outcome variables, $X_1$ and $X_2$, can be
described by a joint state $\rho_{BX_1X_2}$.  After observing values
$x_1$ and $x_2$ respectively, they come to assign states
$\rho_{B|X_1=x_1}$ and $\rho_{B|X_2=x_2}$, which are derived from the
conditional states of the joint state $\rho_{BX_1X_2}$.  Such a pair
of states satisfies the definition~\ref{def:Compat:QObj} of quantum
objective Bayesian compatibility.  Theorem~\ref{thm:Compat:QObj} then
implies that their state assignments satisfy the BFM compatibility
criterion.  Conversely, because the set of joint states
$\rho_{BX_1X_2}$ which can arise in this causal scenario is
unrestricted (see footnote \footnotemark[\value{fn}]), it also follows
from theorem~\ref{thm:Compat:QObj} that for any pair of state
assignments satisfying the BFM criterion, \1 and \2 could come to
assign those states in this causal scenario.  These results can be
generalized to the case of a longer sequence of measurements with the
outcomes distributed arbitrarily among a number of parties, which
covers the most general case considered by Jacobs.

To summarize, our results can be applied to Jacobs' scenario and we
recover Jacobs' result that the BFM criterion is a necessary
requirement.  Furthermore, we have improved upon Jacobs' analysis in
two ways.  Firstly, we have shown that the BFM compatibility criterion
is not only a necessary condition for compatibility in this scenario,
but is \emph{sufficient} as well.  Secondly, our analysis demonstrates
that, just as with the scenario of remote measurements, the BFM
criterion can be justified in the scenario of sequential measurements by insisting that
states should be updated by Bayesian conditioning within a general
framework for quantum Bayesian inference.

\subsubsection{Caves, Fuchs and Schack}
\label{CFS}

In contrast to BFM and Jacobs, CFS \cite{Caves2002} discuss the
problem of quantum state compatibility from an explicitly subjective
Bayesian point of view.  They argue that there cannot be a unilateral
requirement to impose compatibility criteria of any sort on subjective
Bayesian degrees of belief because there is no unique prior quantum
state that an agent ought to assign in light of a given collection of
data.  The only necessary constraint is that states should satisfy the
axioms of quantum theory, i.e.\ they should be normalized density
operators.  In particular, it should not be viewed as irrational for
two agents to assign distinct, or even orthogonal, pure states to a
quantum system.

Whilst we agree with this argument, we think that there is still a
role for compatibility criteria within the subjective approach.  They
can be viewed as a check to see whether it is worthwhile for the
agents to engage in a particular inference procedure, and this is
conceptually distinct from viewing them as unilateral requirements that
must be imposed upon all state assignments.  In the case of BFM
compatibility, the criterion of intersecting supports is simply a
check that agents can apply to see if it is worth their while to try
and resolve their differences empirically by collecting more data, or
whether their disagreement is so extreme that resolving it requires
one or more of the agents to make a wholesale revision of their
beliefs.  From this point of view, BFM plays the same role as the
criterion of overlapping supports does in classical subjective
Bayesian probability.

Despite their skepticism of compatibility criteria, CFS do attempt to
recast the necessity part of the BFM argument in terms that would be
more acceptable to the subjective Bayesian, i.e.\ they outline a
series of requirements that a pair of subjective Bayesian agents may
wish to adopt that would lead them to assign BFM compatible states.
They do not provide an argument for sufficiency, so this is one way in
which our argument is more complete.  CFS's argument is quite similar
to the BFM necessity argument, except that it is phrased in terms that
would be more acceptable to a subjective Bayesian.  For example, they
talk about the ``firm beliefs'' of agents rather than saying that
systems are ``found to be'' in certain pure states.  However, this
line of argument is still open to an objection that we leveled against the
BFM argument.  In our view, compatibility criteria should be derived
from the inference methodologies that are being used by the agents
rather than from a list of reasonable sounding requirements.  Another
objection is that their argument relies on strong Dutch Book
coherence, which is a strengthening of the usual Dutch Book coherence
that subjective Bayesians use to derive the structure of classical
probability theory.  Strong coherence entails that if an agent assigns
probability one to an event then she must be certain that it will
occur.  This is obviously problematic in the case of infinite sample
spaces due to the presence of sets of measure zero and, since there is
nothing in the Dutch Book argument that singles out finite sample
spaces, it would not usually be accepted by subjective Bayesians in
that case either.

Since CFS do not believe that the BFM criterion is a uniquely
compelling requirement, they also introduce a number of weaker
compatibility criteria based on the compatibility of the probability
distributions obtained by making different types of measurement on the
system.  Three of these compatibility criteria are equivalent to the
usual intersecting support criterion in the classical case, but they
become inequivalent when applied to quantum theory.  Presumably, this
is supposed to cast doubt upon the uniqueness of BFM as a compelling
compatibility criterion in the quantum case.  However, in our view,
the non-BFM criteria in the CFS hierarchy are not meaningful as
compatibility criteria.  To explain why, we take their weakest
criterion --- $W'$ compatibility --- as an example.

The $W'$ criterion says that two quantum states are compatible if
there exists a measurement for which the Born rule outcome probability
distributions computed from the two states are compatible in the
classical sense, i.e.\ they have intersecting support in the set of
outcomes.  It is fairly easy to see that this places no constraint at
all on state assignments --- such a measurement can always be found.
For example, if \1 and \2 assign two orthogonal pure states then a
measurement in a complementary basis would always yield compatible
probability distributions over the set of outcomes.  CFS argue that \1
and \2 could resolve their differences empirically by making such a
measurement in this scenario.  After the measurement, if both \1 and
\2 learn the outcome and apply the projection postulate, then they
would end up assigning the same state to the system, specifically, the
state in the complementary basis corresponding to the outcome that was
observed.

However, in our view, this does not resolve the original conflict
between \1 and \2.  Although \1 and \2's state assignments to the
region \emph{after} the measurement (its quantum output) are now
identical, their state assignments to the region \emph{before} the
measurement (its quantum input) remain unchanged.  As stated in
\S\ref{Model:Single}, and explained in more detail in
\cite{Leifer2011a}, the state of the region before the measurement
updates via quantum Bayesian conditioning rather than by the
projection postulate. Pure states are fixed points of quantum Bayesian
conditioning, so \1 and \2 will always continue to disagree about the
state of this region, whatever information they later acquire about
the region.

The mistake that CFS have made is to think of compatibility in terms
of persistent systems rather than spatio-temporal regions, and to
think of the projection postulate as a quantum analogue of Bayesian
conditioning.  It is easy to make this mistake because in a classical
theory of Bayesian inference, a measurement can be non-disturbing.  In
this case, the value of the variable $Y$ being measured is not changed
by the measurement, and the update rule for the probability
distribution of $Y$ can be understood as conditioning $Y$ on the
outcome of the measurement.  The variable describing the system before
the measurement is the same as the one describing it after, so that
updating your beliefs about one is the same as updating your beliefs
about the other.  But this is no longer true for classical
measurements that \emph{disturb} the system, and as argued in
\cite{Leifer2011a}, all nontrivial quantum measurements are analogous
to these.  Therefore, to highlight the problem with the $W'$
compatibility criterion, we consider what it would predict in the case
of a disturbing classical measurement.

Suppose the system is a coin that has just been flipped, but is
currently hidden from \1 and \2.  If \1 believes that the coin has
definitely landed heads and \2 believes that it has definitely landed
tails, then their beliefs are certainly incompatible.  If the coin is
then flipped again and \1 and \2 are shown the outcome of the second
toss, they will agree on the current state of the coin, and hence
their state assignments to the system after the observation are now
compatible.  However, because the configuration of the coin was
disturbed in the process of measurement, there is no sense in which
their disagreement about the outcome of the first coin flip has been
resolved.  Similarly, we believe that because nontrivial quantum
measurements always entail a disturbance (in the sense described in
\cite{Leifer2011a}), coming to agreement about the state of the region
after the measurement does not resolve a disagreement about the state
of the region before the measurement.
 
Despite our reservations about the CFS compatibility criteria, they
are still of some independent interest.  In particular, one of them
(the PP criterion) was recently used in a quite different context as
part of a no-go theorem for certain types of hidden variable models
for quantum theory \cite{Barrett2011}.

\section{Intermezzo: conditional independence and sufficiency}

Having dealt with state compatibility, our next task is to develop a
Bayesian approach to combining state assignments.  In order to do
this, two additional concepts are needed: conditional independence and
sufficient statistics, which are reviewed in this section.  Quantum
conditional independence has been studied in \cite{Leifer2008}, from
which we quote results without proof.

\subsection{Conditional independence}

\label{CI}

\subsubsection{Classical conditional independence}

A pair of random variables $R$ and $S$ are conditionally independent
given another random variable $T$ if they satisfy any of the following
equivalent conditions:

\begin{ccindep}

\item \label{item:CI:Def1} $P(S|R,T)=P(S|T)$

\item \label{item:CI:Def2} $P(R|S,T)=P(R|T)$

\item \label{item:CI:Def3} $I(R:S|T)=0$

\item \label{item:CI:Def4} $P(R,S|T)=P(R|T)P(S|T)$,

\end{ccindep}
where it is left implicit that these equations only have to hold for
those values of the variables for which the conditionals are well-defined.
Here, $I(R:S|T)$ is the conditional mutual information of $R$ and $S$
given $T$, defined by
\begin{multline}
  I(R:S|T) = H(R,T)+H(S,T) \\ -H(T)-H(R,S,T),
\end{multline}
where $H(R) = - \sum_r P(R=r)\log_2 P(R=r)$ is the Shannon entropy of
$R$, with the obvious generalization to multiple variables.  Note that
the conditional mutual information is always positive.

Conditional independence of $R$ and $S$ given $T$ means that any correlations between $R$ and
$S$ are mediated, or screened-off, by $T$.  In other words, if one
were to learn the value of $T$ then $R$ and $S$ would become
independent.

\subsubsection{Quantum conditional independence for acausally related
  regions}

In the quantum case, the three random variables $R$, $S$ and $T$
become quantum regions with Hilbert spaces $\Hilb[A]$, $\Hilb[B]$ and
$\Hilb[C]$.  We specialize to the case of three acausally related
regions because the theory of conditional independence has not yet
been developed for other causal scenarios.  Prior to the
introduction of conditional states, it was not obvious whether the
conditional independence conditions \ref{item:CI:Def1},
\ref{item:CI:Def2} and \ref{item:CI:Def4} had quantum analogs, but
\ref{item:CI:Def3} has a straightforward generalization where
$I(A:B|C)$ is now the quantum conditional mutual information defined
as
\begin{multline}
I(A :B|C) =S(A,C)+S(B,C) \\ -S(C)-S(A,B,C),
\end{multline}
where $S(A)=-\Tr[A]{\rho_A \log \rho_A}$ is the von Neumann entropy of
the state on $A$.  The quantum conditional mutual information
satisfies $I(A:B|C) \geq 0$, which is equivalent to the strong
sub-additivity inequality \cite{Lieb1973}, and so the quantum
conditional independence condition $I(A:B|C) = 0$ is the equality
condition for strong sub-additivity.

In the conditional states formalism, there are direct analogs of the
conditions \ref{item:CI:Def1} and \ref{item:CI:Def2} that provide an
alternative characterization of quantum conditional independence.
\begin{theorem}
  \label{thm:CI:QCI}
  For three acausally related regions, $A$, $B$ and $C$, the following
  conditions are equivalent:
  \begin{qcindep}
  \item \label{item:CI:QDef1}$\rho_{A|BC} = \rho_{A|C}$

  \item \label{item:CI:QDef2}$\rho_{B|AC} = \rho_{B|C}$

  \item \label{item:CI:QDef3}$I(A:B|C) = 0$
  \end{qcindep}
\end{theorem}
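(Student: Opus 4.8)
The plan is to exploit the fact that condition~\ref{item:CI:QDef3}, $I(A:B|C)=0$, is manifestly symmetric under interchange of $A$ and $B$. It therefore suffices to establish \ref{item:CI:QDef1}~$\Leftrightarrow$~\ref{item:CI:QDef3}; the equivalence \ref{item:CI:QDef2}~$\Leftrightarrow$~\ref{item:CI:QDef3} then follows by relabelling $A\leftrightarrow B$, and \ref{item:CI:QDef1}~$\Leftrightarrow$~\ref{item:CI:QDef2} follows by transitivity. Since the three regions are acausally related, $\rho_{ABC}$ is a genuine density operator and every conditional state in play is positive, so the standard machinery of von Neumann and relative entropy applies.

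The central idea is to recognise $I(A:B|C)$ as the relative entropy lost under the partial trace $\Tr[B]{\cdot}$, and then to read off \ref{item:CI:QDef1} from the equality condition for that loss. Writing $D(\rho\|\sigma)=\Tr{\rho\log\rho}-\Tr{\rho\log\sigma}$ and taking the reference $\sigma_{ABC}=I_A/d_A\otimes\rho_{BC}$, where $d_A=\dim\Hilb[A]$, a short calculation using $\Tr[B]{\sigma_{ABC}}=I_A/d_A\otimes\rho_C$ yields
\begin{equation}
  I(A:B|C)=D\!\left(\rho_{ABC}\,\big\|\,\sigma_{ABC}\right)-D\!\left(\rho_{AC}\,\big\|\,\Tr[B]{\sigma_{ABC}}\right),
\end{equation}
the two $\log d_A$ contributions cancelling. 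Because $\rho_{AC}=\Tr[B]{\rho_{ABC}}$, the right-hand side is exactly the decrease in relative entropy under the CPTP map $\Tr[B]{\cdot}$. Monotonicity of relative entropy then reproduces strong subadditivity, $I(A:B|C)\geq 0$ (already quoted above via \cite{Lieb1973}), so both directions of the equivalence concern only the \emph{saturation} of this inequality.

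The heart of the argument is the equality condition for this monotonicity, due to Petz: one has $D(\mathcal{E}(\rho)\|\mathcal{E}(\sigma))=D(\rho\|\sigma)$ iff $\rho$ is exactly recovered from $\mathcal{E}(\rho)$ by the Petz recovery map $\mathcal{R}$ associated with $\mathcal{E}$ and $\sigma$. For $\mathcal{E}=\Tr[B]{\cdot}$ and the reference $\sigma_{ABC}$ above, the $d_A$-dependent factors again cancel and the recovery map evaluated on $\rho_{AC}$ collapses to
\begin{equation}
  \mathcal{R}(\rho_{AC})=\rho_{BC}^{1/2}\,\rho_C^{-1/2}\,\rho_{AC}\,\rho_C^{-1/2}\,\rho_{BC}^{1/2}=\rho_{A|C}\Sprod\rho_{BC},
\end{equation}
where identities on the absent factors are suppressed per the paper's conventions and I have used $\rho_{A|C}=\rho_C^{-1/2}\rho_{AC}\rho_C^{-1/2}$. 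Hence $I(A:B|C)=0$ iff $\rho_{ABC}=\rho_{A|C}\Sprod\rho_{BC}$. Conjugating this identity by $\rho_{BC}^{-1/2}$ and recalling $\rho_{A|BC}=\rho_{ABC}\Sprod\rho_{BC}^{-1}$ rewrites it as $\rho_{A|BC}=\rho_{A|C}$, which is precisely \ref{item:CI:QDef1}. This closes \ref{item:CI:QDef1}~$\Leftrightarrow$~\ref{item:CI:QDef3}, and the symmetry remarks complete the theorem.

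I expect the main obstacle to be the equality-condition theorem for monotonicity of relative entropy: it is the single nontrivial input, everything else being bookkeeping with the $\Sprod$-product. A self-contained alternative that sidesteps the Petz condition is to invoke the explicit structural characterisation of states saturating strong subadditivity, namely a decomposition $\Hilb[C]=\bigoplus_k \Hilb[C_k^L]\otimes\Hilb[C_k^R]$ with $\rho_{ABC}=\bigoplus_k p_k\,\rho_{AC_k^L}\otimes\rho_{C_k^R B}$; from this block form one computes $\rho_{A|BC}$ and $\rho_{A|C}$ directly and checks that they agree. Either way, care is needed with operator supports, since $\rho_{BC}^{-1/2}$ and $\rho_C^{-1/2}$ exist only on their respective supports; one restricts all identities to $\mathrm{supp}[\rho_{BC}]$, which is harmless because $\rho_{ABC}$ is supported there.
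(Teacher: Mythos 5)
Note first that the paper itself supplies no proof of this theorem: it is explicitly quoted without proof from \cite{Leifer2008}, so there is no in-paper argument to compare against. Judged on its own terms, your proof is correct and is essentially the standard route taken in that reference and the surrounding literature. The identity $I(A:B|C)=D(\rho_{ABC}\|I_A/d_A\otimes\rho_{BC})-D(\rho_{AC}\|I_A/d_A\otimes\rho_C)$ checks out (the $\log d_A$ terms cancel as you say), the Petz recovery map for $\Tr[B]{\cdot}$ with this reference state does reduce to $\rho_{AC}\mapsto\rho_{A|C}\Sprod\rho_{BC}$, and conjugation by $\rho_{BC}^{-1/2}$ converts the recovery condition into \ref{item:CI:QDef1}; the converse direction follows from the easy half of Petz's theorem (recovery implies saturation, via monotonicity applied twice, using $\mathcal{R}(\mathcal{E}(\sigma))=\sigma$). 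The symmetry argument giving \ref{item:CI:QDef2} is fine since $I(A:B|C)$ is symmetric in $A$ and $B$. The one point to be explicit about is the support bookkeeping you flag at the end: equalities such as $\rho_{A|BC}=\rho_{A|C}$ can only be asserted after compressing both sides onto $\Hilb[A]\otimes\mathrm{supp}[\rho_{BC}]$, since $\rho_{A|C}$ is a priori defined on the possibly larger subspace $\Hilb[A]\otimes\Hilb[B]\otimes\mathrm{supp}[\rho_C]$; this is the same implicit convention the paper adopts classically for conditionals that are "well-defined," so it is harmless but should be stated. Your alternative via the Hayden--Jozsa--Petz--Winter block decomposition is equally valid and is in fact the form in which \cite{Leifer2008} presents the result; the Petz route buys a shorter derivation at the cost of importing the equality condition for monotonicity as a black box, while the structure theorem buys an explicit normal form for the conditionally independent states.
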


Due to these equivalences, any of
\ref{item:CI:QDef1}--\ref{item:CI:QDef3} can be viewed as the
definition of quantum conditional independence.

It is also true that
\begin{theorem}
  \label{thm:CI:ComProd}
  If $A$ is conditionally independent of $B$ given $C$, then
  \begin{qcindep}[start=4]
  \item \label{item:CI:QDef4}$\rho_{AB|C} = \rho_{A|C} \rho_{B|C}$.
  \end{qcindep}
\end{theorem}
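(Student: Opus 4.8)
The plan is to obtain the product formula from a quantum chain rule for conditional states together with condition \ref{item:CI:QDef1}, reducing the whole statement to a single commutation fact that the structure of conditional independence supplies. First I would set up the chain rule by writing the joint state in two ways, $\rho_{ABC} = \rho_{AB|C}\Sprod\rho_C$ and $\rho_{ABC}=\rho_{A|BC}\Sprod\rho_{BC}$, and using $\rho_{BC}=\rho_{B|C}\Sprod\rho_C$. Unpacking the definition $M\Sprod N = N^{1/2}MN^{1/2}$ (with identity factors on the absent regions suppressed as usual) gives $\rho_C^{1/2}\rho_{AB|C}\rho_C^{1/2} = \rho_{BC}^{1/2}\rho_{A|BC}\rho_{BC}^{1/2}$, and multiplying both sides by $\rho_C^{-1/2}$ — working on the support of $\rho_C$, with pseudo-inverses understood — yields
\begin{equation}
  \rho_{AB|C} = \rho_C^{-1/2}\rho_{BC}^{1/2}\,\rho_{A|BC}\,\rho_{BC}^{1/2}\rho_C^{-1/2}.
\end{equation}

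Next I would invoke conditional independence in the form \ref{item:CI:QDef1}, namely $\rho_{A|BC}=\rho_{A|C}$, which is available by Theorem \ref{thm:CI:QCI}. Introducing the operator $K \equiv \rho_C^{-1/2}\rho_{BC}^{1/2}$ on $\Hilb[B]\otimes\Hilb[C]$ (trivial on $A$), the displayed identity becomes $\rho_{AB|C} = K\rho_{A|C}K^{\dagger}$, while $KK^{\dagger} = \rho_C^{-1/2}\rho_{BC}\rho_C^{-1/2} = \rho_{B|C}$. Thus, provided $K$ commutes with $\rho_{A|C}$, one immediately gets $\rho_{AB|C}=\rho_{A|C}KK^{\dagger}=\rho_{A|C}\rho_{B|C}$, which is the assertion.

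The one substantive step, and the \emph{main obstacle}, is the commutation $[K,\rho_{A|C}]=0$; note that $\rho_{A|C}$ acts on $A$ and $C$ while $K$ acts on $B$ and $C$, so they share the $C$ factor and do not commute for a generic joint state. I would establish the commutation from the structure of conditionally independent states borrowed from \cite{Leifer2008}: when $I(A:B|C)=0$, one has the decomposition $\Hilb[C]=\bigoplus_j\Hilb[c_j^L]\otimes\Hilb[c_j^R]$ with $\rho_{ABC}=\bigoplus_j p_j\,\rho_{Ac_j^L}\otimes\rho_{c_j^RB}$. A short block-diagonal computation then gives $\rho_{A|C}=\bigoplus_j\rho_{A|c_j^L}\otimes I_{c_j^R}$ and $K=\bigoplus_j I_{c_j^L}\otimes\bigl(\rho_{c_j^R}^{-1/2}\rho_{c_j^RB}^{1/2}\bigr)$. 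Within each block these act on the complementary tensor factors $\Hilb[A]\otimes\Hilb[c_j^L]$ and $\Hilb[c_j^R]\otimes\Hilb[B]$, so they commute block by block and hence commute overall. The same computation evaluates $\rho_{AB|C}$ directly as $\bigoplus_j\rho_{A|c_j^L}\otimes\rho_{B|c_j^R}$, giving an independent check and simultaneously showing that $\rho_{A|C}$ and $\rho_{B|C}$ commute, so that the product $\rho_{A|C}\rho_{B|C}$ is automatically a legitimate positive operator equal to $\rho_{AB|C}$.
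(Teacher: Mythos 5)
Your argument is correct, but note that the paper does not actually prove Theorem~\ref{thm:CI:ComProd}: it is explicitly quoted without proof from \cite{Leifer2008}, so there is no in-paper proof to compare against. Measured against that reference, your route is essentially the standard one, since all of the real work is done by the structure theorem for states with $I(A:B|C)=0$ (the decomposition $\Hilb[C]=\bigoplus_j\Hilb[c_j^L]\otimes\Hilb[c_j^R]$ with $\rho_{ABC}=\bigoplus_j p_j\,\rho_{Ac_j^L}\otimes\rho_{c_j^RB}$), which is exactly the ingredient \cite{Leifer2008} uses. Your preliminary chain-rule computation, $\rho_{AB|C}=K\rho_{A|C}K^{\dagger}$ with $K=\rho_C^{-1/2}\rho_{BC}^{1/2}$ and $KK^{\dagger}=\rho_{B|C}$, is a correct and rather elegant way of isolating the single commutation fact $[K,\rho_{A|C}]=0$ that needs to be supplied; but as you yourself observe, once the block decomposition is invoked the direct computation $\rho_{AB|C}=\bigoplus_j\rho_{A|c_j^L}\otimes\rho_{B|c_j^R}=\rho_{A|C}\rho_{B|C}$ already finishes the proof, so the $K$-conjugation scaffolding is dispensable. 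Two minor points worth making explicit: first, you correctly rightly identify that $\rho_{A|C}$ and $K$ do \emph{not} commute for generic states since they overlap on $\Hilb[C]$, and that conditional independence is what forces them onto complementary tensor factors within each block; second, the support/pseudo-inverse caveats you flag are handled exactly as in the rest of the paper, so they introduce no gap. The proof is sound.
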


Because $\rho_{AB|C}$ is self-adjoint, theorem~\ref{thm:CI:ComProd}
implies that $\rho_{A|C}$ and $\rho_{B|C}$ must commute when $A$ and
$B$ are conditionally independent given $C$.  Unlike in the classical
case, the converse of theorem~\ref{thm:CI:ComProd} does not hold,
i.e. $\rho_{AB|C} = \rho_{A|C} \rho_{B|C}$ does not imply conditional
independence.  Extra constraints on the form of $\rho_C$ can be
imposed to yield equivalence, but these are not important for present
purposes (see \cite{Leifer2008} for details).

\subsubsection{Hybrid conditional independence}

The case that is most relevant to the present work is when two
classical random variables $X_1$ and $X_2$ are conditionally
independent given a quantum region $B$.  The proofs of
theorems~\ref{thm:CI:QCI} and \ref{thm:CI:ComProd} only depend on the
existence of a joint state (positive, normalized, density operator)
for the three regions under consideration.  Therefore, if we
specialize to causal scenarios in which a joint state $\rho_{BX_1X_2}$
can be assigned, as discussed in \S\ref{Model:Two}, then the
definitions \ref{item:CI:QDef1}--\ref{item:CI:QDef3} can now be
applied in any of these causal scenarios by substituting $X_1$ for
$A$, $X_2$ for $B$ and $B$ for $C$.  The consequence
\ref{item:CI:QDef4} also applies to this case.

\subsection{Sufficient statistics}

\label{Suff}

The idea of a sufficient statistic can be motivated by a typical
example problem in statistics: estimating the bias of a coin from a
sequence of coin flips that are judged to be independent and
identically distributed.  In this problem, only the relative frequency
of occurrence of heads and tails in the sequence is relevant to the
bias, whilst the exact ordering of heads and tails is irrelevant.  The
relative frequency is then an example of a sufficient statistic for
the sequence with respect to the bias.  In this section, this notion is generalized to the
hybrid case wherein the classical parameter to be estimated is
replaced by a quantum region, but the data is still classical, i.e.\
this section concerns sufficient statistics for classical data with
respect to a quantum region.  Note that quantum sufficient statistics
have been considered before in the literature \cite{Jencova2006,
  Petz2008a, Barndorff-Nielsen2003}, but these works are somewhat
orthogonal to the present treatment because they concern sufficiency
of a quantum system with respect to classical measurement data
\cite{Jencova2006, Petz2008a}, or the sufficiency of measurement data
with respect to preparation data \cite{Barndorff-Nielsen2003}.

\subsubsection{Classical sufficient statistics}

Suppose a parameter, represented by a random variable $Y$, is to be
estimated from data, represented by a random variable $X$.
\begin{definition}
  A \emph{sufficient statistic} for $X$ with respect to $Y$ is a
  function $t$ of the values of $X$ such that the random variable
  $t(X)$ satisfies
  \begin{equation}
    \label{eq:Suff:CDef}
    P(Y|t(X)=t(x)) = P(Y|X=x),
  \end{equation}
  for all $x$ such that $P(X = x) \neq 0$.
\end{definition}
A sufficient statistic for $X$ is a way of processing $X$ such that
the result is just as informative about $Y$ as $X$ is.  In other
words, learning the value of the processed variable $t(X)$ allows an
agent to make all the same inferences about $Y$ that they could have
made by learning the value of $X$ itself.  Such processings are
coarse-grainings of the values of $X$, which discard information about
$X$, but only information that is not relevant for making inferences
about $Y$.

Since $t(X)$ is just a function of $X$, it is immediate that $Y$ is
conditionally independent of $t(X)$ given $X$, i.e.
\begin{equation}
  P(Y|X,t(X)) = P(Y|X),
\end{equation}
This follows from the fact that we can write the joint distribution as $P(Y,X,t(X))=P(t(X)|X)P(YX)$ (where $P(t(X)=a|X=x)=\delta_{a,t(x)}$).
Moreover, the sufficiency condition, eq.~\eqref{eq:Suff:CDef}, implies
that it is also true that $Y$ is conditionally independent of $X$
given $t(X)$, i.e.
\begin{equation}
  P(Y|X,t(X)) = P(Y|t(X)).
\end{equation}
This is a consequence of the fact that the joint distribution can also be written as $P(Y,X,t(X))=P(t(X)|X)P(Y|t(X))P(X)$, where we have used eq.~\eqref{eq:Suff:CDef}.
\begin{definition}
  A \emph{minimal} sufficient statistic for $X$ with respect to $Y$ is
  a sufficient statistic that can be written as a function of any
  other sufficient statistic for $X$ with respect to $Y$.
\end{definition}
A minimal sufficient statistic for $X$ with respect to $Y$ contains
only that information about $X$ that is relevant for making inferences
about $Y$.  Clearly, a sufficient statistic $t$ is minimal iff
\begin{equation}
  \label{eq:Suff:Equiv}
  t(x) = t(x') \,\, \Leftrightarrow \,\, P(Y|X=x) = P(Y|X=x').
\end{equation}

The following lemma is used repeatedly in our discussion of combining
quantum states.

\begin{lemma}
  \label{lem:Suff:BayesianPP}
  Let $P(X,Y)$ be a probability distribution over two random variables
  and let $t(x) = P(Y|X=x)$, i.e. $t$ is a statistic for $X$ that
  takes functions of $Y$ for its values.  Then, $t$ is a minimal
  sufficient statistic for $X$ with respect to $Y$ and
  \begin{equation}
    \label{eq:Suff:BayesianPP}
    P(Y|t(X) = t(x)) = t(x).
  \end{equation}
\end{lemma}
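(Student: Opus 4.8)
The plan is to verify sufficiency by a direct computation, after which both the displayed identity eq.~\eqref{eq:Suff:BayesianPP} and minimality follow almost immediately from the definitions already in hand. Concretely, I would (i) show that $t$ satisfies the sufficiency condition eq.~\eqref{eq:Suff:CDef}, (ii) combine this with the definition $t(x)=P(Y|X=x)$ to obtain eq.~\eqref{eq:Suff:BayesianPP}, and (iii) invoke the minimality characterization eq.~\eqref{eq:Suff:Equiv} to conclude that $t$ is minimal.

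The one step that requires genuine work is (i). First I would unpack the conditioning event: since $t(x')=P(Y|X=x')$, the event $\{t(X)=t(x)\}$ is precisely the fiber $C_x=\{x' : P(Y|X=x')=P(Y|X=x)\}$ of values sharing the conditional distribution of $x$. Then I would expand $P(Y|t(X)=t(x))$ as a ratio of sums over $C_x$, namely $\sum_{x'\in C_x}P(Y,X=x')$ divided by $\sum_{x'\in C_x}P(X=x')$. Writing each joint probability as $P(Y|X=x')P(X=x')$ and using that $P(Y|X=x')=P(Y|X=x)$ for every $x'\in C_x$, the common conditional factors out of the numerator sum and the residual $\sum_{x'\in C_x}P(X=x')$ cancels against the denominator. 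This leaves $P(Y|t(X)=t(x))=P(Y|X=x)$, which is exactly eq.~\eqref{eq:Suff:CDef}.

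Steps (ii) and (iii) are then essentially tautological. For (ii), the sufficiency identity just established reads $P(Y|t(X)=t(x))=P(Y|X=x)=t(x)$ by the very definition of $t$, which is eq.~\eqref{eq:Suff:BayesianPP}. For (iii), the criterion eq.~\eqref{eq:Suff:Equiv} says a sufficient statistic is minimal exactly when $t(x)=t(x')\Leftrightarrow P(Y|X=x)=P(Y|X=x')$; but here $t(x)=t(x')$ \emph{means} $P(Y|X=x)=P(Y|X=x')$, so the equivalence is immediate. The only point needing a word of care is that the fiber $C_x$ may include values $x'$ with $P(X=x')=0$; these contribute zero to both sums and so do not affect the computation, while restricting attention to $x$ with $P(X=x)\neq0$ keeps every conditional distribution well defined.
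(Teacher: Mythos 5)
Your proposal is correct and follows essentially the same route as the paper: both reduce $P(Y\,|\,t(X)=t(x))$ to a sum over the fiber $\{x' : t(x')=t(x)\}$, use that $P(Y|X=x')$ is constant on that fiber to factor it out, and normalize, after which eq.~\eqref{eq:Suff:BayesianPP} and minimality via eq.~\eqref{eq:Suff:Equiv} are immediate. The only cosmetic difference is that you write this as a direct ratio-of-sums computation, while the paper phrases the same cancellation through belief propagation and the conditional independence of $Y$ and $t(X)$ given $X$; your ordering (sufficiency first, then minimality) is if anything slightly cleaner logically.
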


\begin{proof}
  Clearly $t$ satisfies eq.~\eqref{eq:Suff:Equiv} because $t(x)$ is
  \emph{equal} to $P(Y|X=x)$ in this case.  It is therefore minimally
  sufficient.  By the conditional version of belief propagation
  \begin{multline}
    P(Y|t(X) = t(x)) = \\ \sum_{x'} P(Y|X=x',t(X) = t(x))P(X=x'|t(X) = t(x)).
  \end{multline}
  Since $t$ is a sufficient statistic, $A$ is conditionally
  independent of $t(X)$ given $X$, so this reduces to
  \begin{multline}
    P(Y|t(X) = t(x)) = \\ \sum_{x'} P(Y|X=x')P(X=x'|t(X) = t(x)).
  \end{multline}
  The term $P(X=x'|t(X)=t(x))$ is only nonzero for those values $x'$
  such that $t(x')=t(x)$ and all such values satisfy
  $P(Y|X=x')=P(Y|X=x)$.  Therefore,
  \begin{multline}
    P(Y|t(X)=t(x)) = \\ P(Y|X=x) \sum_{\{x'|t(x')=t(x)\}}P(X=x'|t(X)=t(x)).
  \end{multline}
  However, $\sum_{\{x'|t(x')=t(x)\}}P(X=x'|t(X)=t(x)) = \sum_{x'}
  P(X=x'|t(X)=t(x)) = 1$, since $P(X=x'|t(X)=t(x))$ is zero when
  $t(x') \neq t(x)$ and it is a conditional probability distribution.
  Hence,
  \begin{align}
    P(Y|t(X)=t(x)) & = P(Y|X=x)\\
    & = t(x).
  \end{align}
\end{proof}

Eq.~\eqref{eq:Suff:BayesianPP} looks superficially similar to Lewis'
Principal Principle \cite{Lewis1986}, which states that when you know
that the objective chance of an event takes a particular value then
you should assign that value as your subjective probability for that
event.  However, eq.~\eqref{eq:Suff:BayesianPP} is not a statement
about objective chances.  Its interpretation is entirely in terms of
subjective probabilities.  Suppose $P(X,Y)$ is your subjective
probability distribution for $X$ and $Y$ and you announce this to me.
I then go and observe $X$, finding that it has the value $x$.  If I
then tell you that the subjective probability distribution that you
would assign to $Y$ if you knew the value of $X$ that I have observed
is $Q(Y)$, and you believe that I am being honest, i.e.\ that I have
computed $Q(Y) = P(Y|X=x)$ from your subjective probability
distribution and this is what I am reporting back to you, then you
have learned that $t(X) = Q$ and eq.~\eqref{eq:Suff:BayesianPP} says
that your posterior probability distribution for $Y$ should now be
$Q(Y)$.

\subsubsection{Hybrid sufficient statistics}

Recall that if $XB$ is a hybrid region then conditional density
operators $\rho_{B|X}$ are of the form
\begin{equation}
  \rho_{B|X} = \sum_x \Ket{x}\Bra{x}_X \otimes \rho_{B|X=x},
\end{equation}
where the operators $\rho_{B|X=x}$ are normalized density operators on
$\Hilb[B]$.  As in the classical case, the idea of sufficiency is to
find a statistic for $X$ with fewer values than $X$ that still allows
the conditional density operator to be reconstructed.  In order to do
this, it is only necessary to know which density operator
$\rho_{B|X=x}$ a value of $X$ corresponds to, and there may be fewer
distinct density operators than values of $X$.  This motivates the
following definition.
\begin{definition}
  A \emph{sufficient statistic} for $X$ with respect to the quantum
  region $B$ is a function $t$ of the values of $X$ such that the
  random variable $t(X)$ satisfies
  \begin{equation} \label{eq:Qsuffstat}
    \rho_{B|t(X)=t(x)} = \rho_{B|X=x},
  \end{equation}
  for all $x$ such that $\rho_{X = x} \neq 0$.
\end{definition}

This definition captures the notion that learning the value of the
processed variable $t(X)$ allows an agent to make all the same
inferences about the quantum region $B$ that they could have made by
learning the value of $X$ itself.

Since $t(X)$ is just a classical processing of $X$ (specifically,
$\rho_{t(X) = a|X = x} = \delta_{a,t(x)}$), we can introduce a joint
state on the composite system $BXt(X)$ as discussed in
\S\ref{Model:Two} via
\begin{equation}
  \rho_{BXt(X)} = \rho_{t(X)|X} \rho_{BX},
\end{equation}
As one can easily verify, this state satisfies the analogous
conditional independence relations to those that hold in the classical
case.  Specifically, $B$ and $t(X)$ are conditionally independent
given $X$,
\begin{equation}
  \rho_{B|Xt(X)} = \rho_{B|X},
\end{equation}
and because $t(X)$ is a sufficient statistic for $X$ with respect to
$B$, it is also the case that $B$ and $X$ are conditionally
independent given $t(X)$,
\begin{equation}
  \rho_{B|Xt(X)} = \rho_{B|t(X)},
\end{equation}
as can be seen by noting that the joint state can also be written as
$\rho_{BXt(X)} = \rho_{t(X)|X} \rho_{B|t(X)} \rho_X$ if one makes use
of eq.~\eqref{eq:Qsuffstat}.

\begin{definition}
  A \emph{minimal} sufficient statistic for $X$ with respect to a
  quantum region $B$ is a sufficient statistic that can be written as
  a function of any other sufficient statistic for $X$ with respect to
  a quantum region $B$.
\end{definition}
It follows that minimal sufficiency is equivalent to
\begin{equation}
  \label{eq:Suff:QEquiv}
  t(x) = t(x') \,\, \Leftrightarrow \rho_{B|X=x} = \rho_{B|X=x'}.
\end{equation}

We will also need an analog of lemma~\ref{lem:Suff:BayesianPP}.

\begin{lemma}
  \label{lem:Suff:QBayesianPP}
  Let $\rho_{XB}$ be the state of a hybrid region $XB$ and let $t(x) =
  \rho_{B|X=x}$, i.e. $t$ is a statistic for $X$ that takes quantum
  states on $B$ for its values.  Then, $t$ is a minimal sufficient
  statistic for $X$ with respect to $B$ and
  \begin{equation}
    \label{eq:Suff:QBayesianPP}
    \rho_{B|t(X)=t(x)} = t(x).
  \end{equation}
\end{lemma}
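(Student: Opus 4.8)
The plan is to follow the proof of the classical lemma~\ref{lem:Suff:BayesianPP} essentially line by line, replacing each classical conditional distribution by the corresponding hybrid conditional state. First I would dispose of the minimality claim, which is almost immediate: here $t(x)$ is \emph{defined} to equal $\rho_{B|X=x}$, so the equivalence $t(x)=t(x') \Leftrightarrow \rho_{B|X=x}=\rho_{B|X=x'}$ holds tautologically, and this is precisely the criterion eq.~\eqref{eq:Suff:QEquiv} characterizing a minimal sufficient statistic. Sufficiency itself follows from eq.~\eqref{eq:Qsuffstat}, since any two values of $X$ sent by $t$ to the same value carry the same conditional state on $B$.

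For the substantive identity eq.~\eqref{eq:Suff:QBayesianPP}, I would work with the joint state $\rho_{BXt(X)} = \rho_{t(X)|X}\rho_{BX}$ constructed just above the lemma, and apply quantum belief propagation from $X$ to $B$ inside the state obtained by conditioning on $t(X)=t(x)$:
\begin{equation}
  \rho_{B|t(X)=t(x)} = \Tr[X]{\rho_{B|X,t(X)=t(x)}\,\rho_{X|t(X)=t(x)}}.
\end{equation}
Because $B$ and $t(X)$ are conditionally independent given $X$ (the relation $\rho_{B|Xt(X)}=\rho_{B|X}$ established just before the lemma), the first factor collapses to $\rho_{B|X}$, so the right-hand side becomes $\sum_{x'} \rho_{X=x'|t(X)=t(x)}\,\rho_{B|X=x'}$. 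The classical weight $\rho_{X=x'|t(X)=t(x)}$ vanishes unless $t(x')=t(x)$, and for every such $x'$ one has $\rho_{B|X=x'}=t(x')=t(x)=\rho_{B|X=x}$. Hence $\rho_{B|X=x}$ factors out of the sum, leaving it multiplied by $\sum_{\{x'\mid t(x')=t(x)\}}\rho_{X=x'|t(X)=t(x)}$, which equals unity since it is the total weight of a normalized conditional distribution restricted to its own support. This yields $\rho_{B|t(X)=t(x)}=\rho_{B|X=x}=t(x)$.

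I do not anticipate a serious obstacle; the real work is only in verifying that the two hybrid conditional-independence relations quoted before the lemma are legitimately available and correctly invoked, as these are what license replacing $\rho_{B|X,t(X)=t(x)}$ by $\rho_{B|X}$ and what guarantee that conditioning on $t(X)$ rather than on $X$ discards no information about $B$. The one point worth flagging is that, although the components $\rho_{B|X=x'}$ are genuine quantum states that need not commute, non-commutativity never enters the argument: both $X$ and $t(X)$ are classical, so every product used is an ordinary operator product (the $\Sprod$-product degenerating to the operator product for classical conditioning, as noted earlier), and the operators $\rho_{B|X=x'}$ are combined only through classical convex weights rather than multiplied together. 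This is exactly why the classical proof transfers to the hybrid setting without modification.
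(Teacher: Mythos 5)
Your proposal is correct and follows essentially the same route as the paper's own proof: minimality from the tautological equivalence $t(x)=t(x')\Leftrightarrow\rho_{B|X=x}=\rho_{B|X=x'}$, then conditional belief propagation from $X$ to $B$ within the state conditioned on $t(X)=t(x)$, collapsing $\rho_{B|Xt(X)}$ to $\rho_{B|X}$ via the conditional independence established before the lemma, and factoring out $\rho_{B|X=x}$ against a normalized classical weight. The closing observation that non-commutativity never enters because the conditioning variables are classical is a fair gloss on why the classical argument transfers verbatim.
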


\begin{proof}
  The statistic $t$ satisfies eq.~\eqref{eq:Suff:QEquiv} because
  $t(x)$ is \emph{equal} to $\rho_{B|X=x}$.  It is therefore minimally
  sufficient.  By the conditional version of belief propagation
  \begin{equation}
    \rho_{B|t(X) = t(x)} = \Tr[X]{\rho_{B|X t(X)=x}\rho_{X|t(X) = t(x)}}.
  \end{equation}
  Since $t$ is a sufficient statistic, $B$ is conditionally
  independent of $t(X)$ given $X$, so this reduces to
  \begin{equation}
    \rho_{B|t(X) = t(x)} = \Tr[X]{\rho_{B|X}\rho_{X|t(X) = t(x)}}.
  \end{equation}
  However, $\rho_{X=x'|t(X)=t(x)}$ is only nonzero for those values $x'$
  such that $t(x')=t(x)$ and all such values satisfy
  $\rho_{B|X=x'}=\rho_{B|X=x}$.  Therefore,
  \begin{multline}
    \rho_{B|t(X)=t(x)} = \\ \rho_{B|X=x} \sum_{\{x'|t(x')=t(x)\}}\rho_{X=x'|t(X)=t(x)}.
  \end{multline}
  However, $\sum_{\{x'|t(x')=t(x)\}}\rho_{X=x'|t(X)=t(x)} =
  \Tr[X]{\rho_{X|t(X)=t(x)}} = 1$, since $\rho_{X=x'|t(X)=t(x)}$ is
  zero when $t(x') \neq t(x)$ and $\rho_{X|t(X)}$ is a conditional
  state.  Hence,
  \begin{align}
    \rho_{B|t(X)=t(x)} & = \rho_{B|X=x}\\
    & = t(x).
  \end{align}
\end{proof}

\section{Quantum state improvement}

\label{Improve}

State improvement is the task of updating your state assignment in the
light of learning another agent's state assignment.  It is the
simplest example of a procedure for combining different states.  We
adopt the approach of treating the other agent's state assignment as
data and conditioning on it.  In the classical case, this idea is
usually attributed to Morris \cite{Morris1974}.

\subsection{General methodology for state improvement}

Classically, suppose a decision maker, \0, assigns a prior state
$P_0(Y)$ to the variable of interest, $Y$.  \0 may have little or no
specialist knowledge about $Y$, in which case her prior would be
something like a uniform distribution.  In order to improve the
quality of her decision, she consults an expert, \1, who reports her
opinion in the form of a state $Q_1(Y)$.  Assuming that \0 does not
have the expertise to assess the data and arguments by which \1
arrived at her state assignment, the summary $Q_1(Y)$ is all she has
to go on.

In order to improve her state assignment by Bayesian conditioning, \0
has to treat \1's state assignment as data.  This means that she has
to construct a likelihood function $P_0(R|Y)$, where $R$ is a random
variable that ranges over all the possible state assignments that \1
might report.  Since $R$ ranges over a space of functions, there may
be technical difficulties in defining a sample space for it, but in
practice $R$ can usually be confined to well parameterized families of
states, e.g. Gaussian states or a finite set of choices. In assigning
her likelihood function, \0 has to take into account factors such as
\1's trustworthiness, her accuracy in making previous predictions, and
so forth.  Assuming that \0 can do this, she can then update her prior
state via Bayes' theorem to obtain
\begin{equation}
  \label{eq:Improve:Bayes}
  P_0(Y|R = Q_1) = \frac{P_0(R = Q_1 | Y)
    P_0(Y)}{P_0(R = Q_1)}.
\end{equation}
where $P_0(R = Q_1)= \sum_Y P_0(R = Q_1|Y)P_0(Y)$.

Turning to the quantum case, the situation is precisely the same
except that we are now dealing with hybrid regions and the quantum
Bayes' theorem.  Specifically, \0 is now interested in a quantum
region $B$, to which she assigns a prior state $\rho^{(0)}_B$, and \1
announces her expert state assignment $\sigma^{(1)}_B$.  \0 treats
\1's announcement as data and constructs a classical random variable
$R$ that takes \1's possible state assignments as values.
Constructing a sample space for all possible states is again
technically subtle, but in practice attention can be restricted to
well-parameterized families.  \0's likelihood is now a hybrid
conditional state $\rho^{(0)}_{R|B}$ and she updates her prior state
assignment via the hybrid Bayes' theorem to give
\begin{equation}
  \label{eq:Improve:QBayes}
  \rho^{(0)}_{B|R = \sigma^{(1)}_B} = \rho^{(0)}_{R =
    \sigma^{(1)}_B|B} \Sprod \\ \left ( \rho^{(0)}_B
    \left [\rho^{(0)}_{R = \sigma^{(1)}_B}\right ]^{-1}
  \right ).
\end{equation}
where $\rho^{(0)}_{R = \sigma^{(1)}_B} = \Tr[B]{\rho^{(0)}_{R =
    \sigma^{(1)}_B|B} \rho^{(0)}_B}$.

Note that the same methodology can be applied when \0 consults more
than one expert: \1, \2, etc.  \0 simply has to construct a likelihood
function $P(R_1,R_2,\ldots|Y)$ in the classical case or a likelihood
operator $\rho_{R_1R_2,\ldots|B}$ in the quantum case, where $R_1$
represents \1's state assignment, $R_2$ represents \2's state
assignment, etc.  She then applies the appropriate version of Bayes'
theorem to condition on the state assignments that the experts
announce.  This procedure is used in our approach to the pooling
problem, discussed in \S\ref{Pool}.

\subsection{The case of shared priors}

Eqs.~\eqref{eq:Improve:Bayes} and \eqref{eq:Improve:QBayes} are the
general rules that \0 should use to improve her state assignment, but
in practice it can be difficult to determine the likelihoods for $R$
needed to apply them.  However, the rules can simplify drastically in
some situations.  In particular, if \0 and \1 started with a shared
prior for $Y$ or $B$, \1's state differs from \0's due to having
collected more data, and \0 is willing to trust \1's data analysis,
then the rules imply that \0 should just adopt \1's state assignment
wholesale.

Note that, in both the objective and subjective approaches, starting
out with shared priors is an idealization.  In the objective approach
this is because it is unlikely that \0 and \1 have exactly the same
knowledge about the region of interest, and in the subjective approach
this is because their prior beliefs might simply be different.
Nevertheless, in the objective approach we can always imagine a
(possibly hypothetical) time in the past at which \0 and \1 had
exactly the same knowledge and, provided \0's knowledge is a subset of
\1's current knowledge, the result still follows.  This argument does
not apply in the subjective case, but there are still circumstances in
which the ideal of shared priors is a good approximation.

Consider first the classical case.  \0 and \1 share a prior state
assignment $P_0(Y) =P_1(Y) = P(Y)$ for the variable of interest.  \1
then obtains some extra data in the form of the value $x$ of some
random variable $X$ that is correlated with $Y$.  Before learning the
value of $X$, \1 adopts a likelihood model for it, given by
conditional probabilities $P(X|Y)$, and we assume that \0 agrees with
this likelihood model.  Upon acquiring the value $x$ of $X$, \1
updates her probabilities to $Q_1(Y) = P(Y|X=x)$, which can be
computed from Bayes' theorem, and then she reports $Q_1(Y)$ to \0. In
other words, \0 learns that $R = Q_1$ and she must condition on this
data to obtain her improved state assignment $P(Y|R=Q_1)$.

\begin{proposition}
  \label{thm:Improve:Obj}
  If \0 and \1 share a prior state assignment $P(Y)$ and likelihood
  model $P(X|Y)$ for the data collected by \1, then \0's improved
  state is $P(Y|R=Q_1) = Q_1(Y)$, where $Q_1(Y)$ is \1's updated state
  assignment.
\end{proposition}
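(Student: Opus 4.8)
The plan is to reduce the proposition to a direct application of Lemma~\ref{lem:Suff:BayesianPP}. The essential observation is that, under the stated assumptions, \1's report $R$ is not an independent piece of evidence but is a deterministic function of the data $X$ that \1 collects, namely the minimal sufficient statistic studied in that lemma.

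First I would set up \0's joint description. Since \0 and \1 share the prior $P(Y)$ and agree on the likelihood $P(X|Y)$, \0 can form the joint distribution $P(X,Y)=P(X|Y)P(Y)$ over \1's data variable $X$ and the variable of interest $Y$. Moreover, \0 knows \1's updating rule: upon observing $X=x$, \1 computes $Q_1(Y)=P(Y|X=x)$ and reports this. Thus \0 knows that \1's report is governed by the map $t(x)\equiv P(Y|X=x)$, so that the random variable $R$ on which \0 conditions is precisely $R=t(X)$, the statistic appearing in Lemma~\ref{lem:Suff:BayesianPP}.

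Next I would identify the conditioning events. When \1 announces the particular state $Q_1$, \0 learns that $t(X)$ has taken the value $Q_1=t(x)$ for whichever $x$ was observed by \1. Hence the event ``$R=Q_1$'' coincides with the event ``$t(X)=t(x)$'', and \0's improved state is $P(Y|R=Q_1)=P(Y|t(X)=t(x))$. Applying Lemma~\ref{lem:Suff:BayesianPP} directly --- which asserts that $t$ is minimally sufficient and that $P(Y|t(X)=t(x))=t(x)$ --- yields $P(Y|R=Q_1)=t(x)=Q_1(Y)$, as claimed.

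The step I expect to require the most care is the identification of $R$ with $t(X)$. One must argue that \0's likelihood for the report $R$ is genuinely induced from the shared likelihood $P(X|Y)$ composed with the deterministic reporting map, rather than being an independent modelling choice on \0's part; this is exactly what the shared-prior-and-likelihood hypothesis buys us. A related subtlety is that several distinct values of $x$ may produce the same reported state $Q_1$, but this causes no difficulty: $t(X)$ is by construction the \emph{minimal} sufficient statistic, which coarse-grains together precisely those values of $x$ yielding identical posteriors, so that conditioning on the announced state $Q_1$ is well defined and handled uniformly by Lemma~\ref{lem:Suff:BayesianPP}.
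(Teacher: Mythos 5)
Your proposal is correct and follows essentially the same route as the paper's own proof: identify \1's report $R$ with the state-valued statistic $t(X)=P(Y|X=x)$ induced by the shared prior and likelihood, and then invoke Lemma~\ref{lem:Suff:BayesianPP}. The extra care you take in justifying the identification of $R$ with $t(X)$ and in handling the many-to-one nature of the reporting map is a welcome elaboration of what the paper leaves implicit, but it is not a different argument.
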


\begin{proof}
  Because \0 and \1 have a shared prior and likelihood assignment, the
  variable $R$ is simply $R(x) = P(Y|X=x)$, where $P(Y|X=x)$ is the
  probability distribution that \0 would assign if she knew the value
  of $X$.  Lemma~\ref{lem:Suff:BayesianPP} then implies that
  $P(Y|R=Q_1) = Q_1$.
\end{proof}

Note that Aumann \cite{Aumann1976} has argued that there is a unique
posterior that objective Bayesians ought to assign when their state
assignments are common knowledge.  The above theorem is a special case
of this in which the unique state can be easily computed.

In the quantum case, the argument proceeds in precise analogy.  \0 and
\1 start with a shared prior state $\rho_B$ for region $B$.  \1
announces her state assignment $\sigma_B^{(1)}$, which can be
represented as the result of conditioning $B$ on the value $x$ of
a random variable $X$, i.e. $\sigma_B^{(1)} = \rho_{B|X=x}$.  We
assume that \0 and \1 agree upon the likelihood operator $\rho_{X|B}$
for $X$.  \0 then has to compute her improved state
$\rho_{B|R=\sigma_B^{(1)}}$.

\begin{proposition}
  If \0 and \1 share a prior state assignment $\rho_B$ and likelihood
  operator $\rho_{X|B}$ for the data collected by \1, then \0's
  improved state is $\rho_{B|R=\sigma_B^{(1)}} = \sigma_B^{(1)}$,
  where $\sigma_B^{(1)}$ is \1's updated state assignment.
\end{proposition}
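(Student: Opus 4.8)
The plan is to mirror the proof of the classical Proposition~\ref{thm:Improve:Obj}, replacing the appeal to Lemma~\ref{lem:Suff:BayesianPP} with its hybrid analogue, Lemma~\ref{lem:Suff:QBayesianPP}. Essentially the entire content of the shared-prior case is already packaged in that lemma, so the proof should reduce to correctly identifying the data variable $R$ on which \0 conditions and then quoting the lemma.

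First I would observe that, because \0 and \1 agree on both the prior $\rho_B$ and the likelihood operator $\rho_{X|B}$, these two objects jointly determine a hybrid state $\rho_{XB}$ via the hybrid Bayes' theorem, and hence determine the conditional state $\rho_{B|X}$ with components $\rho_{B|X=x}$. Since \1 updates by quantum Bayesian conditioning, the state she announces upon observing $X=x$ is not arbitrary but is fixed to be $\sigma_B^{(1)} = \rho_{B|X=x}$. Crucially, \0 knows this functional dependence as well, precisely because she shares the prior and the likelihood. Therefore the random variable $R$ ranging over \1's possible announcements is exactly the statistic $t(X) = \rho_{B|X=x}$ appearing in Lemma~\ref{lem:Suff:QBayesianPP}, and learning $R = \sigma_B^{(1)}$ is identical to learning $t(X) = t(x)$.

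The final step is then immediate. Lemma~\ref{lem:Suff:QBayesianPP} guarantees that $t$ is a minimal sufficient statistic for $X$ with respect to $B$ and that $\rho_{B|t(X)=t(x)} = t(x)$. Substituting the identifications $R = t(X)$ and $\sigma_B^{(1)} = t(x)$ then yields $\rho_{B|R=\sigma_B^{(1)}} = \sigma_B^{(1)}$, which is the claim: in the shared-prior case \0 should simply adopt \1's state wholesale.

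I do not anticipate a genuine technical obstacle, since no new computation is needed beyond Lemma~\ref{lem:Suff:QBayesianPP}. The only point demanding care is conceptual rather than computational, namely justifying that \0's data is precisely the value of the sufficient statistic $t(X)$: \0 must be able to reconstruct the map $x \mapsto \rho_{B|X=x}$ from the shared prior and likelihood, so that treating \1's announcement as data is equivalent to learning $t(X)$. I would also note in passing that the hybrid Bayes' theorem requires $\rho_B$ and $\rho_{X|B}$ to yield a well-defined joint state $\rho_{XB}$, which holds for the hybrid conditioning regions under consideration, so all the conditional states invoked are well defined.
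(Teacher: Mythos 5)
Your proposal is correct and follows exactly the paper's route: the paper's own proof is simply "the obvious generalization of the proof of the classical shared-prior proposition, making use of Lemma~\ref{lem:Suff:QBayesianPP} instead of Lemma~\ref{lem:Suff:BayesianPP}," which is precisely your identification of $R$ with the state-valued minimal sufficient statistic $t(X)=\rho_{B|X=x}$ followed by an appeal to that lemma. Your additional remarks on why \0 can reconstruct the map $x\mapsto\rho_{B|X=x}$ from the shared prior and likelihood are a welcome elaboration of what the paper leaves implicit.
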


The proof is just the obvious generalization of the proof of
theorem~\ref{thm:Improve:Obj}, making use of
lemma~\ref{lem:Suff:QBayesianPP} instead of lemma~\ref{lem:Suff:BayesianPP}.

\subsection{Discussion}

Although our results show that state improvement is trivial in the
case of shared priors, eqs.~\eqref{eq:Improve:Bayes} and
\eqref{eq:Improve:QBayes} are still applicable when \0 and \1 do not
share prior states and, in that case, they give nontrivial results.
The analysis of such cases is a lot more involved, so we do not
consider any examples here.

In the classical case, the general methodology leading to
eq.~\eqref{eq:Improve:Bayes} can be criticized.  It is an onerous
requirement for \0 to be able to articulate a likelihood for all
possible state assignments that \1 might make.  This criticism is
mitigated by the shared priors result, which shows that, at least in
this case, the likelihood model need not be specified in detail.  Such
simplifications might also occur in other models that do not depend on
shared priors.  In any case, this criticism is not particularly unique
to state improvement, since it can be leveled at Bayesian methodology
in general.  It is always a heavy requirement for an agent to specify
a full probability distribution over all the variables of interest.
For this reason, alternative Bayesian theories have been developed
with less onerous requirements, such as the requirement to specify
expectation values rather than full probability distributions
\cite{Finetti1974, Finetti1975, Goldstein1981}.

A criticism that is more specific to state improvement is that the
beliefs that \0 uses to determine $P_0(Y)$ might be correlated with
the beliefs that she uses to determine the likelihood $P_0(R|Y)$,
e.g. \0 might be biased towards believing that \1 will report states
that are concentrated on values of $Y$ that \0 herself believes are
likely.  A generalization that takes these correlations into account
has been proposed \cite{French1980}.

Every criticism leveled against the classical methodology also applies
to the quantum case and, no doubt, the proposed classical
generalizations could be raised to the quantum level by applying the
methods outlined in this paper.  This is not done here because it is
not our goal to say the final word on quantum state improvement, but
only to point out that there is no need to reinvent the wheel when
studying the quantum case because classical methods can be easily
adapted using the formalism of conditional states.

Finally, note that quantum state improvement has previously been
considered by Herbut \cite{Herbut2004}, who adopted an ad hoc
procedure based on closeness of \0 and \1's states with respect to
Hilbert-Schmidt distance.  It would be interesting to see if Herbut's
rule can be derived using Bayesian methodology under a set of
reasonable assumptions that \0 could make about how \1 arrived at her
state assignment.

\section{Quantum state pooling}

\label{Pool}

The problem of state pooling concerns what happens when agents who
each have their own state assignments want to make decisions as a
group.  To do so, they need to come up with a state assignment that
accurately reflects the views of the group as a whole.

In an ideal world, the agents would first reconcile their differences
empirically so that everyone agrees on a common state assignment.  The
discussion of subjective Bayesian compatibility shows that it is
possible for this to happen if their states satisfy the BFM
compatibility criterion.  Furthermore, as a consequence of the
classical and quantum de Finetti theorems \cite{Finetti1937,
  Finetti1975, Hudson1976, Caves2002b}, if the agents can construct an
exchangeable sequence of experiments then their states can be expected
to converge in the long run by application of Bayesian conditioning.
Nevertheless, it is not always possible to collect more data before a
decision has to be made and, for the subjective Bayesian, there is
also the question of how to combine sharply contradictory beliefs that
do not satisfy compatibility criteria in the first place.

The goal of this section is to provide a general methodology for
quantum pooling based on applying the principles of quantum Bayesian
inference, similar to the approach to state improvement developed in
\S\ref{Improve}.  In the case of shared priors, we also derive a
specific pooling rule from this methodology that was previously
proposed by Spekkens and Wiseman \cite{Spekkens2007}.  However, before
embarking upon this discussion, it is useful to take a step back and
look at the basic requirements for pooling and some of the specific
pooling rules that have been proposed in the classical case.

\subsection{Review of pooling rules}

One reasonable requirement for a pooling rule is that the pooled state
should be compatible with each agent's individual state assignment.
If this is so then each agent is assured that it is possible for them
to be vindicated by future observations.  This is
because subjective Bayesian compatibility guarantees that, for each
agent, it is possible that data could be collected that would cause
the pooled state and the agent's individual state to become identical
upon Bayesian conditioning.

Consider the classical case where $n$ agents assign states $Q_1(Y),
Q_2(Y), \ldots Q_n(Y)$ to a random variable $Y$.  A \emph{linear
  opinion pool} is a rule where the pooled state $Q_{\text{lin}}$ is
of the form
\begin{equation}
  \label{eq:Pool:Lin}
  Q_{\text{lin}}(Y) = \sum_{j=1}^n w_j Q_j(Y),
\end{equation}
where $0 < w_j < 1$ and $\sum_{j=1}^n w_j = 1$.  The weight $w_j$ can
be thought of as a measure of the amount of trust that the group
assigns to the $j$th agent.  The state $Q_{\text{lin}}(Y)$ is BFM
compatible with every $Q_j(Y)$ because eq.~\eqref{eq:Pool:Lin} is an
ensemble decomposition of $Q_{\text{lin}}(Y)$ in which each agent's
state appears.  A linear opinion pool is typically less sharply peaked
than the individual agents' assignments. In particular its entropy
cannot be lower than that of the lowest entropy individual state.  It
may be appropriate to use it as a diplomatic solution.  Indeed, this
sort of pooling rule may be applied even if the agents' state
assignments are not pairwise compatible.

Linear opinion pools can be straightforwardly generalized to the
quantum case.  Specifically, if $n$ agents assign states
$\sigma^{(1)}_B, \sigma^{(2)}_B, \ldots \sigma^{(n)}_B$ to a quantum
region $B$, then a quantum linear opinion pool is a rule where the
pooled state $\sigma^{(\text{lin})}_B$ is of the form
\begin{equation}
  \sigma^{(\text{lin})}_B = \sum_{j=1}^n w_j \sigma^{(j)}_B,
\end{equation}
where $0 < w_j < 1$ and $\sum_{j=1}^n w_j = 1$.  Similar remarks apply
to this as to the classical case.

Classically a \emph{multiplicative opinion pool}\footnote{The usual
  terminology for this is a \emph{logarithmic} opinion pool, since it
  corresponds to a linear rule for combining the logarithms of states.
  However, we prefer the term multiplicative because log-linearity no
  longer holds in the quantum generalization.} is a rule whereby the
pooled state is of the form
\begin{equation}
  \label{eq:Pool:Mult}
  Q_{\text{mult}}(Y) = c \prod_{j=1}^n Q_j(Y)^{w_j},
\end{equation}
where $c$ is a normalization constant,
\begin{equation}
  c = \frac{1}{\sum_Y  \prod_{j=1}^n Q_j(Y)^{w_j}}.
\end{equation}
Multiplicative pools typically result in a pooled state
that is more sharply peaked than any of the individual agent's
states. Normalizability implies that multiplicative pools can only be
applied to states that are jointly compatible, meaning that there is
at least one value $y$ of $Y$ such that $Q_j(Y=y) > 0$ for all $j$.
Any such value has nonzero weight in $Q_{\text{mult}}(Y)$, which
guarantees that $Q_{\text{mult}}(Y)$ is compatible with every agent's
individual assignment.  As shown below, a multiplicative pool may be
appropriate in an objective Bayesian framework where all the agents
start with a shared uniform prior and the differences in their state
assignments result from having collected different data.  

In order to
account for the case where the shared prior is not uniform, the
multiplicative pool has to be generalized to
\begin{equation}
  \label{eq:Pool:GMult}
  Q_{\text{gmult}}(Y) = c \prod_{j=0}^n Q_j(Y)^{w_j},
\end{equation}
where the extra state $Q_0(Y)$ represents the shared prior
information.

Unlike with linear pools, it is not immediately obvious how to
generalize multiplicative pools to the quantum case because the
product of states in eq.~\eqref{eq:Pool:GMult} does not have a unique
generalization due to non-commutativity.

\subsection{General methodology for state pooling}

As with the other problems tackled in this paper, pooling rules should
be derived in a principled way from the rules of Bayesian inference,
rather than simply being posited.  One way to do this to adopt the
\emph{supra-Bayesian} approach.  This works by requiring the group of
agents to put themselves in the shoes of \0 the decision maker who we
met in the state improvement section.  Specifically, in the classical
case, acting together, they are asked to come up with a likelihood
function $P_0(R_1,R_2,\ldots,R_n|Y)$ that they think a neutral
decision maker (\0 the supra-Bayesian) would assign, where $R_j$ is a
random variable that ranges over all possible state assignments that
the $j$th agent might make; and a prior $P_0(Y)$, which can often just
be taken to be the uniform distribution or a shared prior that the
agents may have agreed upon at some point in the past before their
opinions diverged.  They can then update $P_0(Y)$ to
$P_0(Y|R_1=Q_1,R_2=Q_2,\ldots,R_n=Q_n)$ via Bayesian conditioning and
use this as the pooled state $Q_{\text{supra}}(Y)$.  Pooling then
becomes just an application of the state improvement method discussed
in the previous section.  In the quantum case, the equivalent
ingredients are a hybrid likelihood $\rho^{(0)}_{R_1R_2\ldots R_n|B}$
and a prior quantum state $\rho^{(0)}_B$, and then the pooled state is
$\sigma^{(\text{supra})}_B =
\rho^{(0)}_{B|R_1=\sigma^{(1)}_B,R_2=\sigma^{(2)}_B,\ldots,R_n=\sigma^{(n)}_B}$,
which can be computed from quantum Bayesian conditioning.

Admittedly, it might be a pretty tall order to expect the agents to be
able to act together as a fictional supra-Bayesian \0, but this method
does allow conditions under which the different pooling rules should
be used to be derived rigorously, which in turn gives insight into
when they might be useful as rules-of-thumb more generally.  It also
has the advantage that it allows quantum generalizations to be derived
unambiguously, since the necessary tools of quantum Bayesian inference
have been developed in \cite{Leifer2011a} and the preceding sections.
In particular, it resolves the ambiguity surrounding the correct
quantum generalization of the multiplicative opinion pool.

To illustrate this, we show that, in the case of shared priors, the
supra-Bayesian approach can be used to motivate the two-agent case of
the quantum generalized multiplicative pool with $w_0 = -1, w_1 = 1,
w_2 = 1$.

\subsection{The case of shared priors}

For simplicity, we specialize to the case of a group of two agents, \1
and \2.  First consider the classical case where \1 and \2 have
individual state assignments $Q_1(Y)$ and $Q_2(Y)$.  We assume that \1
and \2 started from a shared prior $P(Y)$, which can be used as
\0's prior $P_0(Y) = P(Y)$ in the supra-Bayesian approach, and that
the current differences in \1 and \2's state assignments are due to
having collected different data. The additional data available to \1
and \2 are modeled as values $x_1$ and $x_2$ of random variables $X_1$
and $X_2$ respectively.  Before learning the values of $X_1$ and
$X_2$, \1 and \2 assigned likelihood functions $P(X_1|Y)$ and
$P(X_2|Y)$, which, when combined with the prior $P(Y)$, determine
their current state assignments via Bayes' theorem, i.e. $Q_j(Y) =
P(Y|X_j=x_j)$.

We assume that it is possible to assign a joint likelihood function
$P(X_1,X_2|Y)$, such that $P(X_1|Y)$ and $P(X_2|Y)$ are obtained by
marginalization.  It is unrealistic to think that \1 and \2 must
specify this joint likelihood in detail.  Fortunately, in order to obtain
a generalized multiplicative pool, they need only agree on some of its
broad features.  In particular, if they agree that minimal sufficient
statistics for $X_1$ and $X_2$ are conditionally independent given $Y$, then
supra-Bayesian pooling gives rise to a generalized multiplicative
pool.

\begin{theorem}
  \label{thm:Pool:Obj}
  If a minimal sufficient statistic for $X_1$ with respect to $Y$ and
  a minimal sufficient statistic for $X_2$ with respect to $Y$ are
  conditionally independent given $Y$, then the supra-Bayesian pooled state
  $Q_{\emph{supra}}(Y) = P_0(Y|R_1=Q_1,R_2=Q_2)$ is given by
  \begin{equation}
    Q_{\emph{supra}}(Y) = c \frac{Q_1(Y) Q_2(Y)}{P(Y)},
  \end{equation}
  where $c$ is a normalization factor, independent of $Y$.
\end{theorem}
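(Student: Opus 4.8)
The plan is to recognize that, in the shared-prior setting, announcing one's state assignment is operationally identical to announcing the value of a minimal sufficient statistic, and then to run a Bayesian conditioning computation that factorizes because of the conditional-independence hypothesis.

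First I would pin down exactly what data \0 conditions upon. Since \1 and \2 share the prior $P(Y)$ and the likelihood models $P(X_j|Y)$ with \0, agent $j$'s report is $Q_j(Y) = P(Y|X_j = x_j)$, i.e.\ the value at $x_j$ of the map $t_j$ defined by $t_j(x_j) \equiv P(Y|X_j = x_j)$. By lemma~\ref{lem:Suff:BayesianPP} this map is itself a minimal sufficient statistic for $X_j$ with respect to $Y$, so (all minimal sufficient statistics being equivalent) it is one of the statistics to which the hypothesis of the theorem refers. Hence the random variable $R_j$ ranging over agent $j$'s possible reports is exactly $t_j(X_j)$, and learning $R_j = Q_j$ is the same as learning $t_j(X_j) = Q_j$. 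Accordingly, \0's likelihood $P_0(R_1,R_2|Y)$ is the distribution of $\bigl(t_1(X_1),t_2(X_2)\bigr)$ induced by the shared joint likelihood $P(X_1,X_2|Y)$, and the supra-Bayesian pooled state becomes $Q_{\text{supra}}(Y) = P\bigl(Y \,|\, t_1(X_1)=Q_1,\, t_2(X_2)=Q_2\bigr)$.

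Next I would apply Bayes' theorem with prior $P_0(Y)=P(Y)$,
\[
  Q_{\text{supra}}(Y) = \frac{P\bigl(t_1(X_1)=Q_1,\,t_2(X_2)=Q_2 \,|\, Y\bigr)\,P(Y)}{P\bigl(t_1(X_1)=Q_1,\,t_2(X_2)=Q_2\bigr)}.
\]
The conditional-independence hypothesis is precisely condition~\ref{item:CI:Def4} applied to the two minimal sufficient statistics, so the numerator factorizes into $P(t_1(X_1)=Q_1|Y)\,P(t_2(X_2)=Q_2|Y)$. To each factor I would apply Bayes' theorem in the reverse direction, writing $P(t_j(X_j)=Q_j|Y) = P(Y|t_j(X_j)=Q_j)\,P(t_j(X_j)=Q_j)/P(Y)$, and then invoke lemma~\ref{lem:Suff:BayesianPP} a second time to replace $P(Y|t_j(X_j)=Q_j)$ by $Q_j(Y)$ itself.

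Substituting these back and collecting every factor that is independent of $Y$ into a single constant $c = P(t_1(X_1)=Q_1)\,P(t_2(X_2)=Q_2)/P(t_1(X_1)=Q_1,t_2(X_2)=Q_2)$ yields $Q_{\text{supra}}(Y) = c\,Q_1(Y)Q_2(Y)/P(Y)$. Since the left-hand side is a normalized distribution, $c$ is fixed as the normalization constant, and the proof is complete. I do not expect any step to be a serious technical obstacle; the only genuinely conceptual move --- the one I would flag most carefully --- is the identification of ``learning agent $j$'s state'' with ``learning the value of the minimal sufficient statistic $t_j(X_j)$,'' since this is what makes the conditional-independence hypothesis bite and is exactly what lemma~\ref{lem:Suff:BayesianPP} is designed to license. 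Everything after that is a twofold application of Bayes' theorem.
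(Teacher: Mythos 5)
Your proposal is correct and follows essentially the same route as the paper's own proof: identify $R_j$ with the state-valued minimal sufficient statistic $t_j(X_j)$, transfer the conditional-independence hypothesis to $(R_1,R_2)$ via the bijective equivalence of minimal sufficient statistics, factorize the likelihood using \ref{item:CI:Def4}, apply Bayes' theorem twice, and invoke lemma~\ref{lem:Suff:BayesianPP} to replace $P(Y|R_j=Q_j)$ by $Q_j(Y)$. The conceptual move you flag --- that learning an agent's state is learning the value of a minimal sufficient statistic --- is indeed the crux, and it is exactly how the paper proceeds.
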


Comparing this result with eq. \eqref{eq:Pool:GMult} shows that this
is a generalized multiplicative pool with $Q_0(Y)=P(Y)$, $w_0=-1$,
$w_1=1$ and $w_2 = 1$.  In the special case of a uniform prior, this
reduces to
\begin{equation}
  Q_{\text{supra}}(Y) = c' Q_1(Y) Q_2(Y),
\end{equation}
where $c'$ is a different normalization constant.  This is a
multiplicative pool with $w_1=1$ and $w_2=1$.

\begin{proof}[Proof of theorem~\ref{thm:Pool:Obj}]
  By definition, the supra-Bayesian pooled state is
  $Q_{\text{supra}}(Y) = P(Y|R_1=Q_1,R_2=Q_2)$ and this can be
  computed from the prior $P(Y)$ and the likelihood $P(R_1,R_2|Y)$ via
  Bayes' theorem.  Now, $R_j$ can be thought of as a function-valued
  statistic for $X_j$ via $R_j(x_j) = P(Y|X_j=x_j)$.  It is a minimal
  sufficient statistic with respect to $Y$ because $R_j(x_j) =
  R_j(x_j')$ iff $P(Y|X_j=x_j) = P(Y|X_j=x_j')$.  By assumption, there
  exist minimal sufficient statistics for $X_1$ and for $X_2$ that are
  conditionally independent given $Y$.  However, any minimal
  sufficient statistic is a bijective function of any other minimal
  sufficient statistic for the same variable, so if any pair of such
  statistics are conditionally independent then they all are.
  Therefore, $R_1$ and $R_2$ are conditionally independent given $Y$,
  and so by \ref{item:CI:Def4}:
  \begin{equation}
    P(R_1,R_2|Y) = P(R_1|Y)P(R_2|Y).
  \end{equation}
  The terms $P(R_j|Y)$ can be inverted via Bayes' theorem to obtain
  $P(R_j|Y) = P(Y|R_j)P(R_j)/P(Y)$, which gives
  \begin{equation}
    P(R_1,R_2|Y) = P(R_1)P(R_2) \frac{P(Y|R_1)P(Y|R_2)}{P(Y)^2}.
  \end{equation}
  Using Bayes' theorem again in the form $P(Y|R_1,R_2) =
  P(R_1,R_2|Y)P(Y)/P(R_1,R_2)$ gives
  \begin{equation}
    P(Y|R_1,R_2) =
    \frac{P(R_1)P(R_2)}{P(R_1,R_2)}\frac{P(Y|R_1)P(Y|R_2)}{P(Y)},
  \end{equation}
  which, upon substituting the announced values of $R_1$ and $R_2$,
  gives
  \begin{multline}
    Q_{\text{supra}}(Y) =
    \frac{P(R_1=Q_1)P(R_2=Q_2)}{P(R_1=Q_1,R_2=Q_2)} \\
    \times \frac{P(Y|R_1=Q_1)P(Y|R_2=Q_2)}{P(Y)}.
  \end{multline}
  The term $c = \left [ P(R_1=Q_1)P(R_2=Q_2) \right
  ]/P(R_1=Q_1,R_2=Q_2)$ is independent of $Y$, so it can be determined
  from the normalization constraint $\sum_Y Q_{\text{supra}}(Y)=1$.
  Also, lemma~\ref{lem:Suff:BayesianPP} implies $P(Y|R_j=Q_j) =
  Q_j(Y)$, so we have
  \begin{equation}
    Q_{\text{supra}}(Y) = c \frac{Q_1(Y)Q_2(Y)}{P(Y)},
  \end{equation}
  as required.
\end{proof}

In the quantum case, \1 and \2 have individual state assignments
$\sigma^{(1)}_B$ and $\sigma^{(2)}_B$.  Again, any differences in \1
and \2's state assignments are assumed to arise from having collected
different data, before which they agreed upon a shared prior $\rho_B$,
which can be used as \0's prior state $\rho^{(0)}_B = \rho_B$ in the
supra-Bayesian approach.

Again, we assume that \1 and \2 have observed values $x_1$ and $x_2$
of random variables $X_1$ and $X_2$, with likelihood operators,
$\rho_{X_1|B}$ and $\rho_{X_2|B}$.  \1 and \2's states result from
conditioning the shared prior on their data using these likelihoods.
We assume that there is a joint likelihood $\rho_{X_1X_2|B}$, of which
\1 and \2's likelihoods are marginals.  \1 and \2 need not agree on
the full details of this joint likelihood, only that minimal
sufficient statistics for $X_1$ and $X_2$ satisfy \ref{item:CI:QDef4},
which is slightly weaker than conditional independence.  We then have

\begin{theorem}
  \label{thm:Pool:QObj}
  If a minimal sufficient statistic $t_1$ for $X_1$ with respect to
  $B$ and a minimal sufficient statistic $t_2$ for $X_2$ with respect
  to $B$ satisfy
  \begin{equation}
    \label{eq:Pool:QCond}
    \rho_{t_1(X_1) t_2(X_2)|B} = \rho_{t_1(X_1)|B} \rho_{t_2(X_2)|B},
  \end{equation}
  then the supra-Bayesian pooled
  state $\sigma^{(\emph{supra})}_B =
  \sigma^{(0)}_{B|R_1=\sigma^{(1)}_B,R_2=\sigma^{(2)}_B}$ is given by
  \begin{equation}
    \label{eq:Pool:QGMult}
    \sigma^{(\emph{supra})}_B = c \sigma^{(1)}_B \rho_{B}^{-1}
    \sigma^{(2)}_B
  \end{equation}
  where $c$ is a normalization factor, independent of $B$.
\end{theorem}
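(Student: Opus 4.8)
The plan is to mirror the classical argument of Theorem~\ref{thm:Pool:Obj} line by line, replacing the classical Bayes' theorem with the quantum Bayes' theorem and classical conditional independence with condition \ref{item:CI:QDef4}, while carefully tracking the $\Sprod$-products that non-commutativity forces upon us. By definition the supra-Bayesian pooled state is the component $\sigma^{(\text{supra})}_B = \rho_{B|R_1=\sigma^{(1)}_B,R_2=\sigma^{(2)}_B}$ obtained by conditioning the shared prior $\rho_B=\rho^{(0)}_B$ on the announced values through a joint likelihood $\rho_{R_1R_2|B}$. I would identify $R_j$ with the state-valued statistic $R_j(x_j)=\rho_{B|X_j=x_j}$; by Lemma~\ref{lem:Suff:QBayesianPP} this is a minimal sufficient statistic and $\rho_{B|R_j=\sigma^{(j)}_B}=\sigma^{(j)}_B$. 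Since any two minimal sufficient statistics induce the same partition of the values of $X_j$ via eq.~\eqref{eq:Suff:QEquiv} and are therefore related by a bijection, and since \ref{item:CI:QDef4} is manifestly preserved under bijective relabelling of the conditioning variables, the hypothesis transfers from $(t_1,t_2)$ to $(R_1,R_2)$, yielding $\rho_{R_1R_2|B}=\rho_{R_1|B}\rho_{R_2|B}$.

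The engine of the calculation is two applications of the quantum Bayes' theorem, one forward and one reverse. Applying $\rho_{R_j|B}=\rho_{B|R_j}\Sprod(\rho_{R_j}\rho_B^{-1})$ and reading off the component at $R_j=\sigma^{(j)}_B$, the classical scalar $\rho_{R_j=\sigma^{(j)}_B}$ factors out and the $\Sprod$-product collapses to conjugation by $\rho_B^{-1/2}$, giving $\rho_{R_j=\sigma^{(j)}_B|B}=\rho_{R_j=\sigma^{(j)}_B}\,\rho_B^{-1/2}\sigma^{(j)}_B\rho_B^{-1/2}$. I would then invert in the other direction via $\rho_{B|R_1R_2}=\rho_{R_1R_2|B}\Sprod(\rho_B\rho_{R_1R_2}^{-1})$, whose component at the announced values is $\rho_{R_1=\sigma^{(1)}_B,R_2=\sigma^{(2)}_B}^{-1}\,\rho_B^{1/2}\,\rho_{R_1R_2=(\sigma^{(1)}_B,\sigma^{(2)}_B)|B}\,\rho_B^{1/2}$.

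The final step is the substitution and telescoping. Inserting the product form and the two inverted likelihoods, the central pair $\rho_B^{-1/2}\cdot\rho_B^{-1/2}$ coming from the two likelihoods fuses into a single $\rho_B^{-1}$ between $\sigma^{(1)}_B$ and $\sigma^{(2)}_B$, while the outer $\rho_B^{1/2}$ factors from the reverse Bayes step cancel the remaining $\rho_B^{-1/2}$ ends. Collecting the $B$-independent scalars into $c=\rho_{R_1=\sigma^{(1)}_B}\,\rho_{R_2=\sigma^{(2)}_B}/\rho_{R_1=\sigma^{(1)}_B,R_2=\sigma^{(2)}_B}$, which is fixed by $\Tr[B]{\sigma^{(\text{supra})}_B}=1$, produces exactly $\sigma^{(\text{supra})}_B=c\,\sigma^{(1)}_B\rho_B^{-1}\sigma^{(2)}_B$.

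The main obstacle, and the only real difference from the classical case where everything commutes and cancels trivially, is the bookkeeping of these $\rho_B^{\pm1/2}$ weights: I must verify that the powers generated by the forward and reverse applications of the quantum Bayes' theorem combine to leave precisely one central $\rho_B^{-1}$ and nothing at the ends. A secondary consistency check worth recording is that the right-hand side is genuinely self-adjoint. Condition \ref{item:CI:QDef4} together with self-adjointness of $\rho_{R_1R_2|B}$ forces $\rho_{R_1=\sigma^{(1)}_B|B}$ and $\rho_{R_2=\sigma^{(2)}_B|B}$ to commute (as noted after Theorem~\ref{thm:CI:ComProd}), i.e. $\rho_B^{-1/2}\sigma^{(1)}_B\rho_B^{-1/2}$ and $\rho_B^{-1/2}\sigma^{(2)}_B\rho_B^{-1/2}$ commute; this is exactly what makes $\sigma^{(1)}_B\rho_B^{-1}\sigma^{(2)}_B$ a legitimate density operator despite its asymmetric appearance.
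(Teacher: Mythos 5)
Your proposal is correct and follows essentially the same route as the paper's proof: transfer the product condition from $(t_1,t_2)$ to the state-valued minimal sufficient statistics $(R_1,R_2)$ via bijectivity, apply the quantum Bayes' theorem forward and in reverse, and track the $\rho_B^{\pm 1/2}$ factors so that they telescope to a single central $\rho_B^{-1}$, with Lemma~\ref{lem:Suff:QBayesianPP} supplying $\rho_{B|R_j=\sigma^{(j)}_B}=\sigma^{(j)}_B$. Your closing remark on self-adjointness via the forced commutativity of $\rho_{R_1|B}$ and $\rho_{R_2|B}$ matches the paper's observation, made just after the theorem statement, that eq.~\eqref{eq:Pool:QGMult} is symmetric under exchange of the two agents.
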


Eq.~\eqref{eq:Pool:QGMult} is the quantum generalization of the
generalized multiplicative pool with $w_0=-1$, $w_1=1$, $w_2=1$.
Despite appearances, this expression is symmetric under exchange of
$1$ and $2$.  This follows from the condition \eqref{eq:Pool:QCond},
which implies that $\rho_{t_1(X_1)|B}$ and $\rho_{t_2(X_2)|B}$ must
commute.  When $\rho_B$ is a maximally mixed state,
eq.~\eqref{eq:Pool:QGMult} reduces to
\begin{equation}
  \sigma^{(\text{supra})}_B = c' \sigma^{(1)}_B \sigma^{(2)}_B,
\end{equation}
where $c'$ is a different normalization constant.  This is a quantum
generalization of the multiplicative pool with $w_1=1$, $w_2=1$.

Although conditional independence of the minimal sufficient statistics
was assumed in the classical case, eq.~\eqref{eq:Pool:QCond} is
strictly weaker than conditional independence, as explained in
\S\ref{CI}.

\begin{proof}[Proof of theorem~\ref{thm:Pool:QObj}]
  By definition, the supra-Bayesian pooled state is
  $\sigma^{(\text{supra})}_B = \rho_{B|R_1 = \sigma_B^{(1)}, R_2 =
    \sigma_B^{(2)}}$ and this can be computed from the prior $\rho_B$
  and the likelihood $\rho_{R_1R_2|B}$ via Bayes' theorem.  Each $R_j$
  is an operator-valued statistic for $X_j$ via $R_j(x_j) =
  \rho_{B|X_j=x_j}$.  They are minimal sufficient statistics with
  respect to $B$ because $R_j(x_j) = R_j(x_j')$ iff $\rho_{B|X_j=x_j}
  = \rho_{B|X_j=x_j'}$.  By assumption, there exist minimal sufficient
  statistics, $t_1$ and $t_2$, for $X_1$ and $X_2$ that satisfy
  \begin{equation}
    \rho_{t_1(X_1) t_2(X_2)|B} = \rho_{t_1(X_1)|B} \rho_{t_2(X_2)|B},
  \end{equation}
  but since any minimal sufficient statistic is a bijective function
  of any other minimal sufficient statistic for the same variable,
  $R_1$ and $R_2$ must also satisfy
  \begin{equation}
    \rho_{R_1R_2|B} = \rho_{R_1|B} \rho_{R_2|B}.
  \end{equation}
  The terms $\rho_{R_j|B}$ can be inverted via Bayes' theorem to
  obtain $\rho_{R_j|B} = \rho_{B|R_j} \Sprod \left ( \rho_{R_j}
    \rho_B^{-1} \right )$, which gives
  \begin{equation}
    \rho_{R_1R_2|B} = \left [  \rho_{B|R_1} \Sprod \left ( \rho_{R_1}
        \rho_B^{-1} \right )\right ] \left [  \rho_{B|R_1} \Sprod \left ( \rho_{R_1}
        \rho_B^{-1} \right ) \right ].
  \end{equation}
  Since $R_1$ and $R_2$ are classical, the operators $\rho_{R_j}$
  commute with everything else and so expanding the $\Sprod$-products
  gives
  \begin{equation}
    \rho_{R_1R_2|B} = \rho_{R_1} \rho_{R_2} \rho_{B}^{-\frac{1}{2}}
    \rho_{B|R_1} \rho_B^{-1} \rho_{B|R_2} \rho_B^{-\frac{1}{2}}.
  \end{equation}
  Using Bayes' theorem again in the form $\rho_{B|R_1R_2} =
  \rho_{R_1R_2|B} \Sprod \left ( \rho_{B} \rho_{R_1R_2}^{-1} \right )$
  and noting that $\rho_{R_1R_2}$ commutes with everything else gives
  \begin{equation}
    \rho_{B|R_1R_2} = \rho_{R_1}\rho_{R_2}\rho_{R_1R_2}^{-1} \left (
      \rho_{B|R_1} \rho_B^{-1} \rho_{B|R_2}
    \right ),
  \end{equation}
  which, upon substituting the announced values of $R_1$ and $R_2$,
  gives
  \begin{multline}
    \sigma_B^{(\text{supra})} = \frac{\rho_{R_1 = \sigma_B^{(1)}}\rho_{R_2 =
        \sigma_B^{(2)}}}{\rho_{R_1 = \sigma_B^{(1)}, R_2 =
        \sigma_B^{(2)}}} \\ \times
      \rho_{B|R_1 = \sigma_B^{(1)}} \rho_B^{-1} \rho_{B|R_2 = \sigma_B^{(2)}},
  \end{multline}
  The term $c = \left [\rho_{R_1 = \sigma_B^{(1)}}\rho_{R_2 =
      \sigma_B^{(2)}} \right ] / \rho_{R_1 = \sigma_B^{(1)}, R_2 =
    \sigma_B^{(2)}}$ is independent of $B$, so it can be determined
  from the normalization constraint
  $\Tr[B]{\sigma_B^{(\text{supra})}}=1$.  Also,
  lemma~\ref{lem:Suff:QBayesianPP} implies
  $\rho_{B|R_j=\sigma^{(j)}_B} = \sigma^{(j)}_B$, so we have
  \begin{equation}
    \sigma_B^{(\text{supra})} = c \sigma_B^{(1)} \rho_B^{-1} \sigma_B^{(2)},
  \end{equation}
as we set out to prove.
\end{proof}

\subsection{Comparison to other approaches}

Quantum state pooling has been discussed previously in
\cite{Brun2002a, Jacobs2002, Poulin2003, Jacobs2005, Spekkens2007}.
Both \cite{Brun2002a} and \cite{Poulin2003} propose pooling
methodologies that seem ad hoc from the Bayesian point of view, but,
as with Herbut's approach to improvement, it would be interesting to
see whether they could be justified in the supra-Bayesian approach.

Jacobs \cite{Jacobs2002, Jacobs2005} considers quantum state pooling
in the case where \1 and \2 arrive at their states by making direct
measurements on the system of interest.  In particular, he derives a
generalization of the multiplicative rule that is distinct from the
one we derive.  From the perspective of the conditional states
formalism, his rule is not a valid way of combining state assignments.
The reason is that Jacobs takes collapse rules in quantum theory
--- such as the von Neumann-L\"{u}ders-von Neumann projection
postulate or its generalization to POVMs --- as quantum versions of
Bayesian conditioning, but in the conditional states framework, such
collapse rules are explicitly \emph{not} instances of Bayesian
conditioning, as argued in \S\ref{Model:Single} and \cite{Leifer2011a}.

Spekkens and Wiseman \cite{Spekkens2007} consider the case of pooling
via remote measurements, wherein there is a shared prior state
$\rho_{BA_1A_2}$ of a tripartite system and \1 and \2 arrive at their
differing state assignments for $B$ by making POVM measurements on
$A_1$ and $A_2$ respectively, as depicted in
fig.~\ref{fig:Model:Remote}.  They obtain the same generalized
multiplicative pool that has been derived here, namely $c
\sigma^{(1)}_B \rho_B^{-1} \sigma^{(2)}_B$, for two restricted classes
of states $\rho_{BA_1A_2}$.  Both of these classes are special cases
of states for which $A_1$ and $A_2$ are conditionally independent
given $B$.  If $\rho_{BA_1A_2}$ satisfies this conditional
independence then so does any hybrid state $\rho_{BX_1X_2}$ obtained
by measuring POVMs $\rho_{X_1|A_1}$ on system $A_1$ and
$\rho_{X_2|A_2}$ on system $A_2$.  This is because the conditional
mutual information cannot be increased by applying local CPT maps to
$A_1$ and $A_2$.  The minimal sufficient statistics for $X_1$ and
$X_2$ then also satisfy conditional independence because they are just
local processings of $X_1$ and $X_2$.  Therefore, the assumptions of
theorem~\ref{thm:Pool:QObj} follow from this conditional independence.
As such, the result of \cite{Spekkens2007} is seen to be a special
case of the one derived here.

What we have shown is that the Spekkens and Wiseman pooling rule
holds under much weaker conditions than the conditional independence
of $A_1$ and $A_2$ given $B$.  For example, it also holds for states of the form $\rho_{BA_1'A_2'}
\otimes \rho_{A_1''A_2''}$, where $\Hilb[A_j] =
\Hilb[A_j']\otimes\Hilb[A_j'']$ and $A_1'$ and $A_2'$ are
conditionally independent given $B$.  For such states, $A_1$ and $A_2$
are not conditionally independent given $B$ whenever
$\rho_{A_1''A_2''}$ is a correlated state, but $A_1''$ and $A_2''$
contain no information about $B$, so they will not be correlated with
the minimal sufficient statistics for $X_1$ and $X_2$ and consequently
the minimal sufficient statistics \emph{are} conditionally independent
given $B$, which is sufficient to derive the result\footnote{It would
  be interesting to fully classify the set of tripartite states
  $\rho_{BA_1A_2}$ for which the multiplicative pooling rule applies
  for all remote measurements.}.  Of course, our results also
significantly generalize those of \cite{Spekkens2007} because
theorem~\ref{thm:Pool:QObj} applies to a broader set of causal
scenarios than just the remote measurement scenario.

Finally, it is worth pointing out that both Jacobs \cite{Jacobs2002,
  Jacobs2005} and Spekkens and Wiseman \cite{Spekkens2007} adopt a
pooling methodology that is less widely applicable than the one used in the
present work.  In \cite{Spekkens2007}, for example, a fourth party
called Oswald (the overseer) is introduced into the game, in addition
to the two agents and the decision-maker (whom they call the pooler).
Before any data is collected, everyone shares a prior $\rho_B$ for the
region of interest.  In addition, \1, \2 and Oswald assign a shared
prior $\rho_{BX_1X_2}$ including the data variables that \1 and \2 are
going to observe\footnote{Actually, only Oswald needs to know the full
  prior.  \1 and \2 can make do with knowing the reduced states
  $\rho_{BX_1}$ and $\rho_{BX_2}$ respectively.}.  Oswald has access
to both \1 and \2's data, i.e.\ he learns the values $x_1$ and $x_2$
that \1 and \2 observe so he can update his state to the posterior
$\rho_{B|X_1=x_1,X_2=x_2}$.  It is then asserted that if Oswald's
posterior can be determined from the data available to \0, then this
is what she should assign as the pooled state.  Since \0 only knows \1
and \2's state assignments and the prior $\rho_B$, this is possible
only if Oswald's posterior can be computed from these alone.

This methodology is less widely applicable than the one presented here
because it does not specify what to do if \0 cannot determine Oswald's
posterior, whereas ours does. In fact, there are situations in which the multiplicative pooling rule
is applicable even though \0 cannot
determine Oswald's posterior using the data that she has available.
Therefore, even though the rule of adopting Oswald's posterior if it
can be determined is indeed correct in the supra-Bayesian approach,
requiring this is an unnecessary restriction and it is better to make
do without Oswald.

It is useful to consider how such situations can arise.  By learning
$\rho_{B|X_1=x_1}$ and $\rho_{B|X_2=x_2}$, \0 learns a minimal
sufficient statistic for $X_1$ with respect to $B$ and a minimal
sufficient statistic for $X_2$ with respect to $B$ and hence \0's
posterior is $\rho_{B|R_1(X_1) = R_1(x_1),R_2(X_2)=R_2(x_2)}$, where
the function $R_j(x_j) = \rho_{B|X_j=x_j}$ is the state-valued minimal
sufficient statistic for $X_j$.  This is identical to Oswald's
posterior iff $(R_1,R_2)$ happens to be a sufficient statistic for the
pair $(X_1,X_2)$ with respect to $B$, i.e.\ iff
$\rho_{B|R_1(X_1)=R_1(x_1)R_2(X_2) = R_2(x_2)} = \rho_{B|X_1=x_1
  X_2=x_2}$.  In general, this is not the case, since it is only
guaranteed that $R_1$ and $R_2$ are locally sufficient for the
individual data, i.e. $\rho_{B|R_1(X_1) = R_1(x_1)} = \rho_{B|X_1 =
  x_1}$ and $\rho_{B|R_2(X_2) = R_2(x_2)} = \rho_{B|X_2 = x_2}$, and
not globally sufficient for the pair.  However, \0 only has enough
data to reconstruct Oswald's posterior if they are in fact globally
sufficient, that is, if $\rho_{B|R_1(X_1)=R_1(x_1),R_2(X_2)=R_2(x_2)}
= \rho_{B|X_1=x_1,X_2=x_2}$.

\begin{table}[!htb]
  \begin{tabular}{|c|cccccccc|}
    \hline
    \T $Y$ & $0$ & $0$ & $0$ & $0$ & $1$ & $1$ & $1$ & $1$ \\
    \M $X_1$ & $0$ & $0$ & $1$ & $1$ & $0$ & $0$ & $1$ & $1$ \\
    \M\B $X_2$ & $0$ & $1$ & $0$ & $1$ & $0$ & $1$ & $0$ & $1$ \\
    \hline
    \T\B $P(Y,X_1,X_2)$ & $\frac{1}{4}$ & $0$ & $0$ & $\frac{1}{4}$ & $0$ &
    $\frac{1}{4}$ & $\frac{1}{4}$ & $0$ \\
    \hline
  \end{tabular}
  \caption{\label{tbl:App:Pool:LLExa}A prior state for which \0
    cannot determine Oswald's prior, but for which the multiplicative
    pooling rule still holds}
\end{table}

A classical example suffices to show that our pooling rule sometimes
applies even in cases where \0 cannot reconstruct Oswald's
posterior. Suppose $Y$, $X_1$ and $X_2$ are classical bits and
Oswald's prior is given by table~\ref{tbl:App:Pool:LLExa}.  With this
assignment, the shared prior for $Y$ is $P(Y=0) = P(Y=1) =
\frac{1}{2}$.  Learning the value of $X_j$ on its own gives no further
information about $Y$, i.e. $P(Y|X_j = x_j) = P(Y)$, independently of
the value of $X_j$, so both \1 and \2 simply report the uniform
distribution back to \0.  Any minimal sufficient statistic for $X_j$
is trivial, consisting of just a single value, so the sufficient
statistics for $X_1$ and $X_2$ are trivially conditionally independent
and thus our derivation of the multiplicative pooling rule holds.
Unsurprisingly, in this case it just says that \0 should continue to
assign the uniform distribution.  On the other hand, knowing both the
value of $X_1$ and the value of $X_2$ is enough to determine $Y$
uniquely, so Oswald's posterior is a point measure and there is no way
that \0 could determine it from the data she has available.  The
reason why this happens is that all the information about $Y$ is
contained in the correlations between $X_1$ and $X_2$,
i.e. $P(Y=0|X_1=X_2) = 1$ and $P(Y=0|X_1\neq X_2) = 0$, and Oswald is
the only agent who has access to this data.

\section{Conclusions}

In this paper, we have developed a Bayesian approach to quantum state
compatibility, improvement and pooling, based on the principle that
states should always be updated by a quantum analog of Bayesian
conditioning.  This improves upon previous approaches, which were more
ad hoc in nature.  Due to our use of the conditional states formalism,
our results apply to a much wider range of causal scenarios than
previous approaches.  Indeed, the ability of this formalism to unify
the description of many distinct causal arrangements explains the
otherwise puzzling fact that authors considering very different causal
arrangements have found the same results.  For instance, the
compatibility criterion found by Brun, Finkelstein and Mermin in the
case of remote measurements \cite{Brun2002} is identical to the one
found by Jacobs in the case of sequential measurements
\cite{Jacobs2002}.

This paper only represents the beginning of a Bayesian approach to
these problems; there is a lot of scope for further work.  For
example, it would be interesting to determine when a quantum linear
pooling rule can be derived from Bayesian principles, as it has been
in the classical case \cite{Genest1985}, and whether the results of
previous methodologies for quantum state improvement and pooling can
be reconstructed from a Bayesian point of view.  However, perhaps the
most important lesson of this paper is that the conditional states
formalism can vastly simplify the task of generalizing results from
classical probability to the quantum domain.  Definitions, theorems
and proofs can often be ported almost mechanically from classical
probability to quantum theory by making use of the appropriate
analogies.  Many aspects of quantum theory that might appear, by the
lights of the conventional quantum formalism, to have no good
classical analogue, are seen under the new formalism to be
generalizations of very familiar features of Bayesian probability
theory.  As such, this new formalism helps us to focus our attention
on those aspects of quantum theory that \emph{truly} distinguish it
from classical probability theory, such as violations of Bell
inequalities, the impossibility of broadcasting, and monogamy
constraints on correlations.

\begin{acknowledgments}
M.L. would like to thank the UCL Department of Physics and Astronomy for
their hospitality.  M.L. is supported by Grant RFP1-06-006 from The
Foundational Questions Institute.  Part of this work was completed
whilst M.L. was a postdoctoral fellow at the Institute for Quantum
Computing, University of Waterloo, Canada, where he was supported in
part by MITACS and ORDCF, and an associate postdoctoral fellow at the
Perimeter Institute for Theoretical Physics.  Research at Perimeter Institute is
supported in part by the Government of Canada through NSERC and by the
Province of Ontario through MRI.
\end{acknowledgments}

\bibliography{quantumpooling}

\end{document}